\documentclass[acmtocl]{acmtrans2m}

\usepackage{amsmath, amsfonts, amssymb}
\usepackage{stmaryrd}
\usepackage{bussproofs}

\newtheorem{theorem}{Theorem}[section]
\newtheorem{corollaire}[theorem]{Corollary}
\newtheorem{prop}[theorem]{Proposition}
\newtheorem{definition}[theorem]{Definition}
\newtheorem{rem}[theorem]{Remark}
\newtheorem{ex}[theorem]{Example}
\newtheorem{lemme}[theorem]{Lemma}
\newtheorem{fait}[theorem]{Fact}

\markboth{Daniel de Carvalho}{Execution Time of $\lambda$-Terms}
\title{Execution Time of $\lambda$-Terms via Denotational Semantics and Intersection Types}
\author{DANIEL DE CARVALHO \\ Universit\`a di Roma Tre}
\begin{abstract}
The multiset based relational model of linear logic induces a semantics of the type free $\lambda$-calculus, which corresponds to a non-idempotent intersection type system, System $R$. We prove that, in System $R$, the size of the type derivations and the size of the types are closely related to the execution time of $\lambda$-terms in a particular environment machine, Krivine's machine.

\end{abstract}

\category{F.3.2}{Logics and Meanings of Programs}{Semantics of Programming Languages}[Denotational semantics]
\category{F.4.1}{Mathematical Logic and Formal Languages}{Mathematical Logic}[Lambda calculus and related systems]

\keywords{Computational complexity, denotational semantics, intersection types, $\lambda$-calculus}

\begin{document}

\begin{bottomstuff}
Author's address: D. de Carvalho, Dipartimento di Filosofia, Facolt\`a di Lettere e Filosofia, Universit\`a Roma Tre, Via Ostiense 236, 00146 Roma, Italy
\end{bottomstuff}
\maketitle

\section{Introduction}

This paper presents a work whose aim is to obtain information on execution time of $\lambda$-terms by semantic means. 

By execution time, we mean the number of steps in a computational model. As in \cite{thomas}, the computational model considered in this paper will be Krivine's machine, a more realistic model than $\beta$-reduction. Indeed, Krivine's machine implements (weak) head linear reduction: in one step, we can do at most one substitution. 
In this paper, we consider two variants of this machine : the first one (Definition~\ref{Def Krivine head}) computes the head-normal form of any $\lambda$-term (if it exists) and the second one (Definition~\ref{Def Krivine normal}) computes the normal form of any $\lambda$-term (if it exists).

The fundamental idea of denotational semantics is that types should be interpreted as the objects of a category $ \mathbb{C} $ and terms should be interpreted as arrows in $\mathbb{C}$ in such a way that if a term $t$ reduces to a term $t'$, then they are interpreted by the same arrow. By the Curry-Howard isomorphism, a simply typed $\lambda$-term is a proof in intuitionistic logic and the $\beta$-reduction of a $\lambda$-term corresponds to the cut-elimination of a proof. Now, the intuitionistic fragment of linear logic \cite{LinearLogic} is a refinement of intuitionistic logic. This means that when we have a categorical structure $ (\mathbb{C}, \ldots) $ for interpreting intuitionistic linear logic, we can derive a category $\mathbb{K} $ that is a denotational semantics of intuitionistic logic, and thus a denotational semantics of $\lambda$-calculus.

Linear logic has various denotational semantics; one of these is the multiset based relational model in the category \textbf{Rel} of sets and relations with the comonad associated to the finite multisets functor (see \cite{Lorenzo} for interpretations of proof-nets and Appendix of \cite{phase} for interpretations of derivations of sequent calculus). In this paper, the category $\mathbb{K}$ is a category equivalent to the Kleisli category of this comonad. The semantics we obtain is \emph{non-uniform} in the following sense: the interpretation of a function contains information about its behaviour on chimerical arguments (see Example \ref{example : non-uniformity} for an illustration of this fact). As we want to consider type free $\lambda$-calculus, we will consider $\lambda$-algebras in $ \mathbb{K} $. We will describe semantics of $\lambda$-terms in these $\lambda$-algebras as a logical system, using intersection types. 

The intersection types system that we consider (System $R$, defined in Subsection~\ref{subsection:System R}) is a reformulation of that of \cite{CDV}; in particular, it lacks idempotency, as System $ \lambda $ in \cite{Kfoury} and System $ \mathbb{I} $ in \cite{NeeMairson} and contrary to System $ \mathcal{I} $ of \cite{KMTW}. So, we stress the fact that the semantics of \cite{CDV} can be reconstructed in a natural way from the finite multisets relational model of linear logic using the Kleisli construction. 

If $t'$ is a $\lambda$-term obtained by applying some reduction steps to $t$, then the semantics $\llbracket t' \rrbracket$ of $t'$ is the same as the semantics $\llbracket t \rrbracket$ of $t$, so that from $\llbracket t \rrbracket$, it is clearly impossible to determine the number of reduction steps leading from $t$ to $t'$. Nevertheless, if $v$ and $u$ are two closed normal $\lambda$-terms, we can wonder
\renewcommand{\labelenumi}{\arabic{enumi})}
\begin{enumerate}
\item Is it the case that the $\lambda$-term $(v)u$ is (head) normalizable?
\item If the answer to the previous question is positive, what is the number of steps leading to the (principal head) normal form?
\end{enumerate}
The main point of the paper is to show that it is possible to answer both questions by only referring to the semantics $\llbracket v \rrbracket$ and $\llbracket u \rrbracket$ of $v$ and $u$ respectively. The answer to the first question is given in Section~\ref{section:qualitative} (Corollary \ref{cor:qualitative}) and it is a simple adaptation of well-known results. 
The answer to the second question is given in Section~\ref{section:quantitative}.

The paper \cite{Ronchi} presented a procedure that computes a normal form of any $\lambda$-term (if it exists) by finding its principal typing (if it exists). In Section~\ref{section:quantitative}, we present some quantitative results about the relation between the types and the computation of the (head) normal form. In particular, we prove that the number of steps of execution of a $\lambda$-term in the first machine (the one of Definition~\ref{Def Krivine head}) is the size of the least type derivation of the $\lambda$-term in System $R$ (Theorem \ref{Theorem head}) and we prove a similar result (Theorem \ref{theorem:normal_bound}) for the second machine (the one of Definition~\ref{Def Krivine normal}). We end by proving truly semantic measures of execution time in Subsection~\ref{semantics} and Subsection~\ref{subsection:exact_number}.

Note that even if this paper, a revised version of \cite{execution_old}, concerns the $\lambda$-calculus and Krivine's machine, we emphasize connections with proof nets of linear logic. Due to these connections, we conjectured in \cite{these} that we could obtain some similar results relating on the one hand the length of cut-elimination of nets with some specific strategy 
and on the other hand the size of the results of experiments. This specific strategy should be a strategy that mimics the one of Krivine's machine and that extends a strategy defined in \cite{Marco} for a fragment of linear logic. 
This work has been done in \cite{hal} by adapting our work for the $\lambda$-calculus. But it is still difficult to compare both works, because the syntax of proof nets we considered makes that a cut-elimination step is not as elementary as a reduction step in Krivine's machine.

In conclusion, we believe that this work can be useful for implicit characterizations of complexity classes (in particular, the PTIME class, as in \cite{DLAL}) by providing a semantic setting in which 
quantitative aspects can be studied, while taking some distance with the syntactic details.

In summary, Section~\ref{section : Krivine} presents Krivine's machine, Section~\ref{section : semantics} the semantics we consider and Section~\ref{section : System R} the intersection type system induced by this semantics, namely System $R$; Section~\ref{section:qualitative} gives the answer to question 1) and Section~\ref{section:quantitative} the answer to question 2).

\paragraph*{Notations}
We denote by $ \Lambda $ the set of $\lambda$-terms, by $ \mathcal{V} $ the set of variables and, for any $\lambda$-term $ t $, by $ FV(t) $ the set of free variables in $ t $.

We use Krivine's notation for $ \lambda $-terms: the $\lambda$-term $ v $ applied to the $\lambda$-term $ u $ is denoted by $ (v)u $. We will also denote $(\ldots ((v)u_1)u_2 \ldots)u_k$ by $(v)u_1 \ldots u_k$.

We use the notation $[ \: ]$ for multisets while the notation $\{ \: \}$ is, as usual, for sets. For any set $A$, we denote by $\mathcal{M}_\textrm{fin}(A)$ the set of finite multisets $ a $ whose support, denoted by $ \textsf{Supp}(a) $, is a subset of $ A $. 
For any set $A$, for any $n \in \mathbb{N}$, we denote by $\mathcal{M}_n(A) $ the set of multisets of cardinality $n$ whose support is a subset of $A$. 
The pairwise union of multisets given by term-by-term addition of multiplicities is denoted by a $+$ sign and, following this notation, the generalized union is denoted by a $\sum$ sign. The neutral element for this operation, the empty multiset, is denoted by $[]$. 

For any set $A$, for any $n \in \mathbb{N}$, for any $a \in \mathcal{M}_n(A)$, we set 
$$\mathfrak{S}(a) = \{ (\alpha_1, \ldots, \alpha_n) \in A^n \: / \: a = [\alpha_1, \ldots, \alpha_n] \} \enspace .$$

\section[Krivine's machine]{Krivine's machine}\label{section : Krivine}

We introduce two variants of a machine presented in \cite{KAM} that implements call-by-name. More precisely, the original machine performs weak head linear reduction, whereas the machine presented in Subsection \ref{subsection:head} performs head linear reduction. Subsection \ref{subsection:beta} slightly modifies the latter machine as to compute the $\beta$-normal form of any normalizable term.

\subsection{Execution of States}\label{subsection:states}

We begin with the definitions of the set $ \mathcal{E} $ of environments and of the set $ \mathcal{C} $ of closures.
 
Set $ \mathcal{E} = \bigcup_{p \in \mathbb{N}} \mathcal{E}_p $ and set $ \mathcal{C} = \bigcup_{p \in \mathbb{N}} \mathcal{C}_p $, where $ \mathcal{E}_p $  and $ \mathcal{C}_p $ are defined by induction on $ p $:
\begin{itemize}
\item If $ p = 0 $, then $ \mathcal{E}_p = \{ \emptyset \} $ and $ \mathcal{C}_p = \Lambda \times \{ \emptyset \} $.
\item $ \mathcal{E}_{p+1} $ is the set of partial maps $\mathcal{V} \rightarrow \mathcal{C}_p$, whose domain is finite, and $ \mathcal{C}_{p+1} = \Lambda \times \mathcal{E}_{p+1}  $.
\end{itemize}

For $ e \in \mathcal{E} $, we denote by $ \textsf{d}(e) $ the least integer $ p $ such that $ e \in \mathcal{E}_p $. 

For $ c = (t, e) \in \mathcal{C} $, we define $\overline{c} = t[e] \in \Lambda$ by induction on $\textsf{d}(e)$:
\begin{itemize}
\item If $ \textsf{d}(e)=0 $, then $ t[e]=t $.
\item Assume $ t[e] $ defined for $ \textsf{d}(e) = d $. If $ \textsf{d}(e) = d+1 $, then  $ t[e] = t[\overline{c_1}/x_1, \ldots,  \overline{c_m}/x_m] $, with $ \{ x_1, \ldots, x_m \} = \textsf{dom}(e) $ and, for $ 1 \leq j \leq m $, $ e(x_j) = c_j $.
\end{itemize}

A \emph{stack} is a finite sequence of closures. If $ c_0 $ is a closure and $ \pi = $ $ (c_1, \ldots, c_q) $ is a stack, then $ c_0 . \pi $ will denote the  stack $ (c_0, \ldots, c_q) $. We will denote by $ \epsilon $ the empty stack. 

A \emph{state} is a non-empty stack. If $ s = (c_0,  \ldots, c_q)$ is a state, then $ \overline{s} $ will denote the $\lambda$-term $ (\overline{c_0}) \overline{c_1} \ldots \overline{c_q} $. 

\begin{definition}\label{definition : variable convention}
We say that a $\lambda$-term $ t $ \emph{respects the variable convention} if any variable is bound at most once in $ t $.

For any closure $ c = (t, e) $, we define, by induction on $ \textsf{d}(e) $, what it means for $ c $ to \emph{respect the variable convention}:
\begin{itemize}
\item if $ \textsf{d}(e)=0 $, then we say that $ c $ respects the variable convention if, and only if, $ t $ respects the variable convention~;
\item if $ c=(t, \{ (x_1, c_1), \ldots, (x_m, c_m) \} ) $ with $ m \not= 0 $, then we say that $ c $ respects the variable convention if, and only if, 
\begin{itemize}
\item $ c_1, \ldots, c_m $ respect the variable convention~;
\item and the variables $ x_1, \ldots, x_m $ are not bound in $ t $.
\end{itemize}
\end{itemize}
For any state $s = (c_0, \ldots, c_q)$, we say that $ s $ \emph{respects the variable convention} if, and only if, $ c_0, \ldots, c_q $ respect the variable convention.

We denote by $\mathbb{S}$ the set of the states that respect the variable convention.
\end{definition}

First, we present the execution of a state (that respects the variable convention). It consists in updating a closure $ (t, e) $ and the stack. If $ t $ is an application $ (v)u $, then we push the closure $ (u, e) $ on the top of the stack and the current closure is now $ (v, e) $. If $ t $ is an abstraction, then a closure is popped and a new environment is created. If $ t $ is a variable, then the current closure is now the value of the variable of the environment. The partial map $s \succ_\mathbb{S} s'$ (defined below) defines formally the transition from a state to another state.

\begin{definition}\label{definition : aux Krivine head}
We define a partial map from $\mathbb{S}$ to $\mathbb{S}$: for any $s, s' \in \mathbb{S}$, the notation $s \succ_\mathbb{S} s'$ will mean that the map assigns $s'$ to $s$. The value of the map at $s$ is defined as follows:
\begin{itemize}
\item if $s=(x, e) . \pi$ and $x \in \textsf{dom}(e)$, then $s \succ_\mathbb{S} e(x) . \pi$;
\item if $s=(x, e) . \pi$ and $x \in \mathcal{V}$ and $x \notin \textsf{dom}(e)$, then the function is not defined at $s$;
\item if $s=(\lambda x. u, e) . (c. \pi')$, then $s \succ_\mathbb{S} (u, \{ (x, c) \} \cup e) . \pi'$;
\item if $s=(\lambda x. u, e) . \epsilon$, then the function is not defined at $s$;
\item if $s=((v)u, e) .\pi$, then $s \succ_\mathbb{S} (v, e) . ((u, e) . \pi)$.
\end{itemize}
\end{definition}

Note that in the case where the current subterm is an abstraction and the stack is empty, the machine stops: it does not reduce under lambda abstractions. That is why we slightly modify this machine in the following subsection.

\subsection[A machine computing the principal head normal form]{A machine computing the principal head normal form}\label{subsection:head}

Now, the machine has to reduce under lambda abstractions and, in Subsection \ref{subsection:beta}, the machine will have to compute the arguments of the head variable. So, we extend the machine so that it performs the reduction of elements of $ \mathcal{K} $, where $ \mathcal{K} = \bigcup_{n \in \mathbb{N}} \mathcal{K}_n $ with
\begin{itemize}
\item $ \mathcal{H}_0 = \mathcal{V}$ and $ \mathcal{K}_0 = \mathbb{S} $~;
\item $ \mathcal{H}_{n+1} = \mathcal{V} \cup \{ (v)u \: / \: v \in \mathcal{H}_n \textrm{ and } u \in \Lambda \cup \mathcal{K}_n \} $ and \\
$ \mathcal{K}_{n+1} = \mathbb{S} \cup \mathcal{H}_n \cup \{ \lambda y. k \: / \; y \in \mathcal{V} \textrm{ and } k \in \mathcal{K}_n \} . $
\end{itemize}
Set $ \mathcal{H} = \bigcup_{n \in \mathbb{N}} \mathcal{H}_n $. We have $ \mathcal{K} = \mathbb{S} \cup \mathcal{H} \cup \bigcup_{n \in \mathbb{N}} \{ \lambda x. k \: / \: x \in \mathcal{V} \textrm{ and } k \in \mathcal{K}_n \} $.

\begin{rem}
We have
\begin{itemize}
\item $ \mathcal{H} = \{ (x)t_1 \ldots t_p \; / \; p \in \mathbb{N}, \; x \in \mathcal{V}, \; t_1, \ldots, t_p \in \Lambda \cup \mathcal{K} \} $
\item hence any element of $ \mathcal{K} $ can be written as either
$$ \lambda x_1. \ldots \lambda x_m. s \textrm{ with $ m \in \mathbb{N} $, $ x_1, \ldots, x_m \in \mathcal{V} $ and $ s \in \mathbb{S} $} $$
or 
$$ \lambda x_1. \ldots \lambda x_m. (x)t_1 \ldots t_p \textrm{ with $ m, p \in \mathbb{N} $, $ x_1, \ldots, x_m \in \mathcal{V} $ and $ t_1, \ldots, t_p \in \mathcal{K} \cup \Lambda $.} $$
\end{itemize}
\end{rem}

For any $ k \in \mathcal{K} $, we denote by $ \textsf{d}(k) $ the least integer $ p $ such that $ k \in \mathcal{K}_p $.

We extend the definition of $ \overline{s} $ for $ s \in \mathbb{S} $ to $ \overline{k} $ for $ k \in \mathcal{K} $. For that, we set $ \overline{t} = t $ if $ t \in \Lambda $. This definition is by induction on $ \textsf{d}(k) $:
\begin{itemize}
\item if $ \textsf{d}(k)=0 $, then $ k \in \mathbb{S} $ and thus $ \overline{k} $ is already defined;
\item if $ k \in \mathcal{H} $, then there are two cases:
\begin{itemize}
\item if $ k \in \mathcal{V} $, then $ \overline{k} $ is already defined (it is $ k $)~;
\item else, $ k = (v)u $ and we set $ \overline{k} = (\overline{v}) \overline{u} $~;
\end{itemize}
\item if $ k = \lambda x. k_0 $, then $ \overline{k} = \lambda x. \overline{k_0} $.
\end{itemize}

\begin{definition}\label{Def Krivine head}
We define a partial map from $\mathcal{K}$ to $\mathcal{K}$: for any $k, k' \in \mathcal{K}$, the notation $k \succ_h k'$ will mean that the map assigns $k'$ to $k$. The value of the map at $k$ is defined, by induction on $\textsf{d}(k)$, as follows:
\begin{itemize}
\item if $k \in \mathbb{S}$ and $k \succ_\mathbb{S} s'$, then $k \succ_h s'$;
\item if $k = ((x, e), c_1, \ldots, c_q) \in \mathbb{S}$ and $x \in \mathcal{V}$ and $x \notin \textsf{dom}(e)$, then $k \succ_h (x)\overline{c_1} \ldots \overline{c_q}$;
\item if $k = (\lambda x. u, e) . \epsilon \in \mathbb{S}$, then $k \succ_h \lambda x.((u, e) . \epsilon)$;
\item if $k \in \mathcal{H}$, then the function is not defined at $k$;
\item if $k = \lambda y. k_0$ and $k_0 \succ_h k_0'$, then $k \succ_h \lambda y. k_0'$.
\end{itemize}
\end{definition}

A difference with the original machine is that our machine reduces under lambda abstractions.

We denote by $ {\succ_h}^\ast $ the reflexive transitive closure of $ \succ_h $. For any $ k \in \mathcal{K} $, $ k $ is said to be a \emph{Krivine normal form} if for any $ k' \in \mathcal{K} $, we do not have $ k \succ_h k' $.

\begin{definition}\label{definition : l}
For any $ k_0 \in \mathcal{K} $, we define $ l_h(k_0) \in \mathbb{N} \cup \{ \infty \} $ as follows: if there exist $ k_1, \ldots, k_n \in
\mathcal{K} $ such that $ k_i \succ_h k_{i+1} $ for $ 0 \leq i \leq n-1 $ and $ k_n $ is a Krivine normal form, then we set $ l_h(k_0) =n $, else we set $ l_h(k_0) = \infty $.
\end{definition}

\begin{prop}\label{prop : f.n. de Krivine => f.n. de tete}
For any $s \in \mathbb{S}$, for any $ k' \in \mathcal{K} $, if $ s {\succ_h}^\ast k' $ and $ k' $ is a Krivine normal form, then $ k' $ is a $\lambda$-term in head normal form.
\end{prop}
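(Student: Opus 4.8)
The plan is to identify a structural invariant on the configurations of $\mathcal{K}$ reachable from a state and to deduce that such a configuration, when it is a Krivine normal form, must be a $\lambda$-term of head-normal-form shape. Set $\mathcal{N} = \{ (y)u_1 \ldots u_q \: / \: q \in \mathbb{N}, \: y \in \mathcal{V}, \: u_1, \ldots, u_q \in \Lambda \}$, so that $\mathcal{N} \subseteq \mathcal{H} \cap \Lambda$ and every element of $\mathcal{N}$ is a $\lambda$-term in head normal form. The key claim is: for every $s \in \mathbb{S}$ and every $k' \in \mathcal{K}$ with $s {\succ_h}^\ast k'$, there exist $m \in \mathbb{N}$, $x_1, \ldots, x_m \in \mathcal{V}$ and $k_0 \in \mathbb{S} \cup \mathcal{N}$ such that $k' = \lambda x_1. \ldots \lambda x_m. k_0$. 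Granting this, the proposition is immediate: if such a $k'$ is moreover a Krivine normal form, then $k_0 \notin \mathbb{S}$ (any state has a $\succ_h$-successor, see below, hence so has $k' = \lambda x_1. \ldots \lambda x_m. k_0$, by $m$ applications of the last clause of Definition~\ref{Def Krivine head}), so $k_0 \in \mathcal{N}$ and $k' = \lambda x_1. \ldots \lambda x_m. (y)u_1 \ldots u_q \in \Lambda$ is a $\lambda$-term in head normal form.

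First I would record two auxiliary observations the induction will use. (i) Every $s \in \mathbb{S}$ has a $\succ_h$-successor: by case analysis on the head subterm of the head closure of $s$, one of the first three clauses of Definition~\ref{Def Krivine head} always applies (via $\succ_\mathbb{S}$ when this subterm is a variable in the domain of the environment, an applied abstraction, or an application; via the second clause when it is a variable outside the domain; via the third clause when it is an abstraction facing the empty stack). (ii) For $k$ of the form $\lambda y. k_0$, the only clause of Definition~\ref{Def Krivine head} that can fire is the last one; hence $\lambda x_1. \ldots \lambda x_m. k_0 \succ_h k'$ forces $k' = \lambda x_1. \ldots \lambda x_m. k_0'$ for some $k_0'$ with $k_0 \succ_h k_0'$, and in particular, if $k_0 \in \mathcal{H}$ then (by the fourth clause) $\lambda x_1. \ldots \lambda x_m. k_0$ has no $\succ_h$-successor at all.

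I would then prove the claim by induction on the length $n$ of $s {\succ_h}^\ast k'$. The case $n = 0$ is trivial ($m = 0$, $k_0 = s \in \mathbb{S}$). For the step, write $s {\succ_h}^n k \succ_h k'$ and apply the induction hypothesis to get $k = \lambda x_1. \ldots \lambda x_m. k_0$ with $k_0 \in \mathbb{S} \cup \mathcal{N}$. Since $k$ has a $\succ_h$-successor, (ii) excludes $k_0 \in \mathcal{N} \subseteq \mathcal{H}$, so $k_0 \in \mathbb{S}$ and, again by (ii), $k' = \lambda x_1. \ldots \lambda x_m. k_0'$ with $k_0 \succ_h k_0'$ obtained by one of the first three clauses of Definition~\ref{Def Krivine head}: in the first case $k_0' \in \mathbb{S}$ since $\succ_\mathbb{S}$ takes values in $\mathbb{S}$; in the second case $k_0' = (x)\overline{c_1} \ldots \overline{c_q}$ with every $\overline{c_i} \in \Lambda$, so $k_0' \in \mathcal{N}$; in the third case $k_0' = \lambda x. ((u,e).\epsilon)$ where $(u,e).\epsilon \in \mathbb{S}$ — it respects the variable convention because $(\lambda x. u, e).\epsilon$ does — so $k' = \lambda x_1. \ldots \lambda x_m. \lambda x. s''$ with $s'' \in \mathbb{S}$, which is again of the required shape.

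I expect the only delicate points to be checking that the variable convention is preserved by the third clause (so that the new state $(u,e).\epsilon$ genuinely lies in $\mathbb{S}$), and the bookkeeping along the inductive definition of $\mathcal{K}$ ensuring that a configuration headed by $\lambda$ is never also a state or an element of $\mathcal{H}$, so that exactly one clause of Definition~\ref{Def Krivine head} applies at each step; the remaining verifications are routine case analyses.
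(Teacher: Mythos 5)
Your proof is correct and follows essentially the same route as the paper: both arguments proceed by induction along the machine run with the same five-way case analysis on the transition rules, the paper inducting backward on $l_h(s)$ over states while you package the content as a forward invariant (every reachable configuration is $\lambda x_1.\ldots\lambda x_m.k_0$ with $k_0 \in \mathbb{S} \cup \mathcal{N}$) and then read off the conclusion from the fact that every state has a $\succ_h$-successor. The delicate points you flag (preservation of the variable convention, and the fact that an abstraction-headed element of $\mathcal{K}$ is neither a state nor in $\mathcal{H}$, so exactly one clause fires) are exactly the ones the paper leaves implicit, and your treatment of them is sound.
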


\begin{proof}
By induction on $ l_h(s) $.

The base case is trivial, because we never have $ l_h(s) =0 $.

The inductive step is divided into five cases.
\begin{itemize}
\item If $ s = ((x, e), c_1, \ldots, c_q) $, $ x \in \mathcal{V} $ and $ x \notin \textrm{dom}(e) $, then $ s \succ_h (x) \overline{c_1} \ldots \overline{c_q} $. But $ (x) \overline{c_1} \ldots \overline{c_q} $ is a Krivine normal form and $ (x) \overline{c_1} \ldots \overline{c_q} $ is a $\lambda$-term in head normal form.
\item If $ s = (\lambda x. u, e) . \epsilon$, then $ k' = \lambda x. k'' $ with $(u, e) . \epsilon {\succ_h}^\ast k'' $. Now, by induction hypothesis, $ k'' $ is a $\lambda$-term in head normal form, hence $ k' $ too is a $\lambda$-term in head normal form.
\item If $ s = ((x, e), c_1, \ldots, c_q) $, $ x \in \mathcal{V} $ and $ x \in \textrm{dom}(e) $, then $ s \succ_h (e(x), \pi) $. Now, $ e(x) . \pi {\succ_h}^\ast k' $, hence, by induction hypothesis, $ k' $ is a $\lambda$-term in head normal form.
\item If $ s = (\lambda x. u, e) . (c . \pi)$, then $ s \succ_h (u, \{ (x, c) \} \cup e) . \pi $. Now, $ (u, \{ (x, c) \} \cup e) . \pi \succ_h k' $, hence, by induction hypothesis, $ k' $ is a $\lambda$-term in head normal form.
\item If $ s = ((v)u, e) . \pi $, then $ s \succ_h (v, e) . ((u, e) . \pi) $. Now, $ (v, e) . ((u, e) . \pi) {\succ_h}^\ast k' $, hence, by induction hypothesis, $ k' $ is a $\lambda$-term in head normal form.
\end{itemize}
\end{proof}

\begin{ex}\label{example : (I)I}
Set $ s = (((\lambda x. (x)x) \lambda y. y, \emptyset), \epsilon) $. We have $ l_h(s) = 9 $:
\begin{eqnarray*}
s & \succ_h & ((\lambda x. (x)x, \emptyset), (\lambda y. y, \emptyset)) \allowdisplaybreaks \\
& \succ_h & ((x)x, \{ (x, (\lambda y. y, \emptyset)) \}) . \epsilon \allowdisplaybreaks \\
& \succ_h & ((x, \{ (x, (\lambda y. y, \emptyset)) \}) , (x, \{ (x, (\lambda y. y, \emptyset)) \})) \allowdisplaybreaks \\
& \succ_h & ((\lambda y. y, \emptyset), (x, \{ (x, (\lambda y. y, \emptyset)) \}))  \allowdisplaybreaks \\
& \succ_h & (y, \{ (y, (x, \{ (x, (\lambda y. y, \emptyset)) \})   ) \}) . \epsilon \allowdisplaybreaks \\
& \succ_h & (x, \{ (x, (\lambda y. y, \emptyset)) \} ) . \epsilon \allowdisplaybreaks \\
& \succ_h & (\lambda y. y, \emptyset) . \epsilon \allowdisplaybreaks \\
& \succ_h & \lambda y. ((y, \emptyset) . \epsilon) \allowdisplaybreaks \\
& \succ_h & \lambda y. y
\end{eqnarray*}
We present the same computation in a more descriptive way in Figure \ref{figure:example}.
\end{ex}

\begin{figure}
\centering
$\begin{array}{|c||c||c|c|c|}
\hline
& \textrm{output} & \textrm{current subterm} & \textrm{environment} & \textrm{stack} \\
\hline
& & (\lambda x. (x)x) \lambda y. y & \emptyset & \epsilon  \\
1 & & \lambda x. (x)x & \emptyset & (\lambda y. y, \emptyset) \\
2 & & (x)x & \{ x \mapsto (\lambda y. y, \emptyset) \} & \epsilon \\
3 & & x & \{ x \mapsto (\lambda y. y, \emptyset) \} & (x, \{ x \mapsto (\lambda y. y, \emptyset) \} ) \\
4 & & \lambda y. y & \emptyset & (x, \{ x \mapsto (\lambda y. y, \emptyset) \} ) \\
5 & & y & \{ y \mapsto (x, \{ x \mapsto (\lambda y. y, \emptyset) \} ) \} & \epsilon \\
6 & & x & \{ x \mapsto (\lambda y. y, \emptyset) \}  & \epsilon \\
7 & & \lambda y. y & \emptyset  & \epsilon \\
8 & \lambda y. & y & \emptyset & \epsilon \\
9 & \lambda y. y & & &\\
\hline
\end{array}$
\caption{Example of computation of the principal head normal form}\label{figure:example}
\end{figure}

\begin{lemme}\label{lemme : Krivine simule head}
For any $ k, k'\in \mathcal{K} $, if $ k \succ_h k' $, then $ \overline{k} \rightarrow_h \overline{k'} $, where $ \rightarrow_h $ is the reflexive closure of the head reduction.
\end{lemme}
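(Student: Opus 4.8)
\emph{Proof idea.} The plan is to argue by induction on $\textsf{d}(k)$, following the clauses of Definition~\ref{Def Krivine head}, and in each clause to compute $\overline{k}$ and $\overline{k'}$ directly from the definitions of $\overline{\cdot}$ and of $t[e]$. The key observation is that $\rightarrow_h$ is only the \emph{reflexive} closure of head reduction, so in almost every case it suffices to check the literal equality $\overline{k} = \overline{k'}$; an actual head reduction step is produced by exactly one clause.

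For the clauses that do not recurse into $\succ_h$, one simply unfolds the definitions. If $k \in \mathbb{S}$ reduces by $\succ_\mathbb{S}$: when $k = (x,e).\pi$ with $x \in \textsf{dom}(e)$ and $k' = e(x).\pi$, then $x[e] = \overline{e(x)}$ gives $\overline{k} = \overline{k'}$; when $k = ((v)u,e).\pi$ and $k' = (v,e).((u,e).\pi)$, then $((v)u)[e] = (v[e])(u[e])$ gives $\overline{k} = \overline{k'}$. If $k = ((x,e),c_1,\ldots,c_q)$ with $x \notin \textsf{dom}(e)$ and $k' = (x)\overline{c_1}\ldots\overline{c_q}$, then $x[e] = x$, so $\overline{k} = (x)\overline{c_1}\ldots\overline{c_q} = \overline{k'}$. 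If $k = (\lambda x.u,e).\epsilon$ and $k' = \lambda x.((u,e).\epsilon)$, then, since $k$ respects the variable convention, $x \notin \textsf{dom}(e)$, whence $(\lambda x.u)[e] = \lambda x.(u[e])$ and both $\overline{k}$ and $\overline{k'}$ equal $\lambda x.(u[e])$. Finally, for $k = \lambda y.k_0$ with $k_0 \succ_h k_0'$, the induction hypothesis yields $\overline{k_0} \rightarrow_h \overline{k_0'}$, and since head reduction and its reflexive closure are stable under $\lambda$-abstraction, $\overline{k} = \lambda y.\overline{k_0} \rightarrow_h \lambda y.\overline{k_0'} = \overline{k'}$.

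The remaining clause, $k = (\lambda x.u,e).(c.\pi')$ with $k' = (u,\{(x,c)\}\cup e).\pi'$, is the only one creating a head reduction step and the only delicate one. Writing $\pi' = (c_1',\ldots,c_q')$ and using $x \notin \textsf{dom}(e)$ (variable convention), one gets $\overline{k} = (\lambda x.(u[e]))\,\overline{c}\,\overline{c_1'}\cdots\overline{c_q'}$, whose head redex is $(\lambda x.(u[e]))\overline{c}$; contracting it yields $\big((u[e])[\overline{c}/x]\big)\overline{c_1'}\cdots\overline{c_q'}$ in one head reduction step, while $\overline{k'} = \big(u[\{(x,c)\}\cup e]\big)\overline{c_1'}\cdots\overline{c_q'}$. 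So the proof comes down to the substitution-composition identity $(u[e])[\overline{c}/x] = u[\{(x,c)\}\cup e]$, which I would isolate as an auxiliary lemma about $t[e]$ proved by induction on $\textsf{d}(e)$; it is precisely here that the variable convention is needed, to ensure $x \notin \textsf{dom}(e)$ and that no occurrence of $x$ inside the values $\overline{e(x_i)}$ interferes, so that performing the substitutions of $e$ first and then $\overline{c}$ for $x$ coincides with performing them all at once. I expect this substitution lemma to be the main obstacle; everything else is routine bookkeeping with the definitions.
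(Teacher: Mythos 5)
Your proposal is correct and follows essentially the same route as the paper: the same case analysis on the clauses of Definition~\ref{Def Krivine head}, with literal equality $\overline{k}=\overline{k'}$ in every case except the $\beta$-clause, which produces the single head reduction step. The only difference is one of detail: where the paper simply asserts that $\overline{k}$ head-reduces to $\overline{k'}$ in the clause $k=(\lambda x.u,e).(c.\pi')$, you explicitly isolate the substitution-composition identity $(u[e])[\overline{c}/x]=u[\{(x,c)\}\cup e]$ (justified by the variable convention) as the auxiliary fact needed there, which is a reasonable and correct refinement.
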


\begin{proof}
There are two cases.
\begin{itemize}
\item If $ k \in \mathbb{S} $, then there are five cases.
\begin{itemize}
\item If $k = ((x, e), c_1, \ldots, c_q)$, $ x \in \mathcal{V} $ and $ x \notin \textrm{dom}(e) $, then $ \overline{k} = (x) \overline{c_1} \ldots \overline{c_q} $ and $ \overline{k'} = \overline{(x) c_1 \ldots c_q} = (x) \overline{c_1} \ldots \overline{c_q} $: we have $ \overline{k} = \overline{k'} $.
\item If $k = (\lambda x. u, e) . \epsilon $, then $ \overline{k} = (\lambda x. u)[e] = \lambda x. u[e] $ (because $k$ respects the variable convention) and $ \overline{k'} = \overline{\lambda x. ((u, e), \epsilon)} = \lambda x. u[e] $: we have $ \overline{k} = \overline{k'} $.
\item If $k = ((x, e), c_1, \ldots, c_q) $, $ x \in \mathcal{V} $ and $ x \in \textrm{dom}(e) $, then $ \overline{k} = \overline{e(x)} \overline{c_1} \ldots \overline{c_q} $ and $ \overline{k'} = \overline{(e(x), (c_1, \ldots, c_q))} = \overline{e(x)} \overline{c_1} \ldots \overline{c_q} $: we have $ \overline{k} = \overline{k'} $.
\item If $k = ((\lambda x. u, e), c_0, \ldots, c_q)$, then $ \overline{k} = ((\lambda x. u)[e]) \overline{c_0} \ldots \overline{c_q} = (\lambda x. u[e]) \overline{c_0} \ldots \overline{c_q} $ (since $k$ respects the variable convention) and $ \overline{k'} = (\overline{(u, \{ (x, c_0) \} \cup e)}) \overline{c_1} \ldots \overline{c_q} $. Now, $ \overline{k} $ reduces in a single head reduction step to $ \overline{k'} $.
\item If $k = (((v)u, e), c_1 \ldots c_q) $, then $ \overline{k} = (((v)u)[e]) \overline{c_1} \ldots \overline{c_q} = (v[e])u[e] \overline{c_1} \ldots \overline{c_q} $ and $ \overline{k'} = \overline{((v, e), (u, e), c_1, \ldots, c_q)} = (v[e])u[e] \overline{c_1} \ldots \overline{c_q} $: we have $ \overline{k} = \overline{k'} $.
\end{itemize}
\item Else, $ k = \lambda y. k_0 $~; then $ \overline{k} = \lambda y. \overline{k_0} $ and $ \overline{k'} = \overline{\lambda y. k_0'} = \lambda y. \overline{k_0'} $ with $ k_0 \succ_h k'_0 $: we have $ \overline{k_0} \rightarrow_h \overline{k_0'} $, hence $ \overline{k} \rightarrow_h \overline{k'} $. 
\end{itemize}
\end{proof}

\begin{theorem}\label{th Krivine fini => head}
For any $ k \in \mathcal{K} $, if $ l_h(k) $ is finite, then $ \overline{k} $ is head normalizable.
\end{theorem}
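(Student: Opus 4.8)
The plan is to combine the simulation Lemma~\ref{lemme : Krivine simule head} with a characterization of the $\lambda$-terms $\overline{k'}$ for $k'$ a Krivine normal form. Since $l_h(k)$ is finite, Definition~\ref{definition : l} gives a sequence $k = k_0 \succ_h k_1 \succ_h \cdots \succ_h k_n$ with $k_n$ a Krivine normal form. Applying Lemma~\ref{lemme : Krivine simule head} to each step yields $\overline{k_i} \rightarrow_h \overline{k_{i+1}}$ for $0 \leq i \leq n-1$, and since $\rightarrow_h$ is the reflexive closure of head reduction, composing these $n$ relations shows that $\overline{k} = \overline{k_0}$ head-reduces (in at most $n$ genuine head-reduction steps) to $\overline{k_n}$. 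Hence it suffices to prove that $\overline{k_n}$ is a $\lambda$-term in head normal form: then $\overline{k}$ reaches a head normal form by head reduction, which is exactly what it means for $\overline{k}$ to be head normalizable.

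For that remaining point I would establish the auxiliary claim: \emph{for any Krivine normal form $k' \in \mathcal{K}$, the $\lambda$-term $\overline{k'}$ is in head normal form}, by induction on $\textsf{d}(k')$. The key observation is that no state is a Krivine normal form. Indeed, inspecting Definition~\ref{definition : aux Krivine head} and Definition~\ref{Def Krivine head}, if $k' \in \mathbb{S}$ then either $\succ_\mathbb{S}$ is defined at $k'$, or $k'$ has the form $((x,e),c_1,\ldots,c_q)$ with $x \notin \textsf{dom}(e)$, or $k'$ has the form $(\lambda x. u, e).\epsilon$; in the last two cases the second and third clauses of Definition~\ref{Def Krivine head} provide a successor. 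So, using the decomposition $\mathcal{K} = \mathbb{S} \cup \mathcal{H} \cup \bigcup_{n}\{\lambda x. k \: / \: x \in \mathcal{V}, k \in \mathcal{K}_n\}$, a Krivine normal form $k'$ lies in $\mathcal{H}$ or is of the form $\lambda y. k_0$. If $k' \in \mathcal{H}$, then by the Remark $k' = (x)t_1 \ldots t_p$ with $x \in \mathcal{V}$, so $\overline{k'} = (x)\overline{t_1}\ldots\overline{t_p}$, which is a head normal form no matter what the $\overline{t_i}$ are. If $k' = \lambda y. k_0$, then the fifth clause of Definition~\ref{Def Krivine head} forces $k_0$ to be a Krivine normal form, and $\textsf{d}(k_0) < \textsf{d}(k')$, so by the induction hypothesis $\overline{k_0}$ is in head normal form and hence so is $\overline{k'} = \lambda y. \overline{k_0}$.

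The only mildly delicate point — and the reason Proposition~\ref{prop : f.n. de Krivine => f.n. de tete} cannot simply be quoted — is that that proposition is stated for states $s \in \mathbb{S}$, whereas here $k$ ranges over all of $\mathcal{K}$; the auxiliary claim above is precisely what closes this gap, and its proof is the routine case analysis just sketched. Everything else is bookkeeping: checking that a composition of reflexive-closure-of-head-reduction steps is again a (possibly empty) head-reduction sequence, and that reaching a head normal form witnesses head normalizability in the sense used in the paper. I therefore expect the $\mathbb{S} \not\ni$ Krivine normal forms observation (together with the $\mathcal{K}$-decomposition from the Remark) to be the crux, the rest being immediate from Lemma~\ref{lemme : Krivine simule head}.
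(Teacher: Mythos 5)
Your proof is correct and follows essentially the same route as the paper's: the paper argues by induction on $l_h(k)$, invoking Lemma~\ref{lemme : Krivine simule head} in the inductive step, which is just the recursive form of your unrolled reduction sequence, with the base case handling Krivine normal forms. If anything, your auxiliary claim is treated more carefully than in the paper, whose base case asserts that $l_h(k)=0$ implies $k \in \mathcal{H}$ and thereby glosses over the abstraction case $\lambda y.\,k_0$ that you analyse explicitly by induction on $\textsf{d}$.
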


\begin{proof}
By induction on $ l_h(k) $.

If $ l_h(k) = 0 $, then $ k \in \mathcal{H} $, hence $ k $ can be written as $ (x)t_1 \ldots t_p $ and thus $ \overline{k} $ can be written $ (x)\overline{t_1} \ldots \overline{t_p} $: it is a head normal form. Else, apply Lemma \ref{lemme : Krivine simule head}.
\end{proof}

For any head normalizable $\lambda$-term $ t $, we denote by $ \textsf{h}(t) $ the number of head reductions of $ t $.

\begin{theorem}\label{theorem : head-normalizable => l fini}
For any $ s = (t, e) . \pi \in \mathbb{S}$, if $ \overline{s} $ is head normalizable, then $ l_h(s) $ is finite.
\end{theorem}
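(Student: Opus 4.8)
The plan is to equip $\mathcal{K}$ with a well-founded measure that strictly decreases along every transition $k \succ_h k'$ emanating from a state whose unfolding is head normalizable. Since $\succ_h$ is a partial function, the (unique) maximal $\succ_h$-sequence issued from such a state is then finite and ends at a $k'$ with no $\succ_h$-successor, i.e.\ at a Krivine normal form; this is exactly the assertion that $l_h(s)$ is finite. The measure will combine a ``semantic'' component paying for the genuine head reductions performed by the machine with an ``administrative'' component bounding the bureaucratic steps that leave $\overline{k}$ untouched.

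Concretely, I would use the triple $(\textsf{h}(\overline{k}), \rho(k), |k|) \in \mathbb{N}^3$ ordered lexicographically. The first component $\textsf{h}(\overline{k})$ is well defined along the whole run: by Lemma~\ref{lemme : Krivine simule head} each step yields $\overline{k} \rightarrow_h \overline{k'}$, and since head reduction is deterministic, $\overline{k'}$ is either equal to $\overline{k}$ or is its unique head reduct, so starting from $\overline{s}$ head normalizable a straightforward induction shows that every $\overline{k}$ reached is again head normalizable, that $\textsf{h}(\overline{k'}) \leq \textsf{h}(\overline{k})$, and that equality holds precisely when $\overline{k'} = \overline{k}$. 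The component $\rho(k)$ measures the nesting depth of the environment of the head closure of $k$: writing $k = \lambda x_1 \ldots \lambda x_m . s$ with $s = (c_0, \ldots, c_q) \in \mathbb{S}$ and $c_0 = (t_0, e_0)$, $e_0 = \{(y_1, d_1), \ldots, (y_r, d_r)\}$, set $\rho(k) = 1 + \max(\rho(d_1), \ldots, \rho(d_r))$ (with the convention that the empty max is $0$, so $\rho$ of a closure with empty environment is $1$); on the remaining elements of $\mathcal{K}$, those of the form $\lambda x_1 \ldots \lambda x_m . h$ with $h \in \mathcal{H}$, which are exactly the Krivine normal forms, put $\rho = 0$. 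Finally $|k|$ is the size of the term $t_0$, and $0$ on Krivine normal forms. Note that $\rho$ and $|k|$ disregard the leading abstractions, in keeping with the congruence clause of $\succ_h$.

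I would then verify, transition by transition and reusing the case analysis in the proof of Lemma~\ref{lemme : Krivine simule head} (to know in each case whether $\overline{k}$ stays put or genuinely head reduces), that the triple strictly decreases. The $\beta$-transition $((\lambda x. u, e), c_0, \ldots, c_q) \succ_\mathbb{S} ((u, \{(x, c_0)\} \cup e), c_1, \ldots, c_q)$ makes $\overline{k}$ head reduce, so $\textsf{h}(\overline{k})$ strictly drops; the push transition (from head term $(v)u$ to $v$) and the go-under-$\lambda$ transition $(\lambda x. u, e) . \epsilon \succ_h \lambda x . ((u, e) . \epsilon)$ (from head term $\lambda x. u$ to $u$) leave $\overline{k}$ and $\rho(k)$ unchanged --- the environment of the head closure is the same --- while strictly decreasing $|k|$; the halting transition towards $(x)\overline{c_1} \ldots \overline{c_q} \in \mathcal{H}$ keeps $\textsf{h}(\overline{k}) = 0$ and collapses $(\rho(k), |k|)$ to $(0,0)$; and the congruence rule $\lambda y. k_0 \succ_h \lambda y. k_0'$ is handled by induction on $\textsf{d}(k)$, using that $\overline{\lambda y. k_0} = \lambda y. \overline{k_0}$ has the same head reduction behaviour as $\overline{k_0}$ and that the measure forgets the outer $\lambda$. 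The step that resists both naive measures, and where I expect the real content to lie, is the environment-lookup transition $((x, e), c_1, \ldots, c_q) \succ_\mathbb{S} e(x) . (c_1, \ldots, c_q)$: it does not change $\overline{k}$ at all, and it may replace the head term $t_0 = x$ by an arbitrarily large term, so neither $\textsf{h}(\overline{k})$ nor $|k|$ controls it. The decisive observation is that this transition replaces the current closure by one that was stored \emph{strictly inside} the current environment, so $\rho$ strictly decreases; and since lookups never create new bindings (only the $\beta$-transition does, and it is paid for by $\textsf{h}$), $\rho$ can only go down between two consecutive $\beta$-transitions. Once the per-step strict decrease is established, well-foundedness of the lexicographic order on $\mathbb{N}^3$ together with determinism of $\succ_h$ gives that $l_h(s)$ is finite.
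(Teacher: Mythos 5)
Your proof is correct and is essentially the paper's own argument: the paper proves the statement by noetherian induction on the lexicographically ordered triple $(\textsf{h}(\overline{s}), \textsf{d}(e), t)$, which is exactly your measure $(\textsf{h}(\overline{k}), \rho(k), |k|)$ with the same division of labour — $\textsf{h}$ pays for the $\beta$-transitions, the environment depth pays for lookups, and the size of the current subterm pays for the push and go-under-$\lambda$ steps. The only differences are presentational: you phrase it as a strictly decreasing measure on all of $\mathcal{K}$ rather than an induction restricted to states, and you make explicit the congruence case that the paper absorbs into the identity $l_h(\lambda x. k_0) = l_h(k_0)$.
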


\begin{proof}
We prove, by noetherian induction on $\mathbb{N} \times \mathbb{N} \times \Lambda$ lexically ordered, that for any $(h, d, t) \in \mathbb{N} \times \mathbb{N} \times \Lambda$, for any $s = (t, e) . \pi$ such that $\textsf{h}(s) = h$ and $\textsf{d}(s) =d$, if $s$ is head-normalizable, then $l_h(s)$ is finite.

If $ \textsf{h}(\overline{s}) = 0 $, $ \textsf{d}(e)=0 $ and $ t \in \mathcal{V} $, then we have $ l_h(s) = 1 $.

Else, there are five cases.
\begin{itemize}
\item In the case where $t \in \textrm{dom}(e) $, we have $ s \succ_h e(t) . \pi$. Set $ s' = e(t) . \pi$ and $ e(t) = (t', e')$. We have $ \overline{s} = \overline{s'} $ and $ \textsf{d}(e') < \textsf{d}(e) $, hence we can apply the induction hypothesis: $ l_h(s') $ is finite and thus $ l_h(s) = l_h(s') + 1 $ is finite.
\item In the case where $t \in \mathcal{V} $ and $ t \notin \textrm{dom}(e) $, we have $ l_h(s) = 1 $.
\item In the case where $ t = (v) u $, we have $s \succ_h (v, e) . ((u, e) . \pi) $. Set $ s' = (v, e) . ((u, e) . \pi) $. We have $ \overline{s'} = \overline{s} $ and thus we can apply the induction hypothesis: $ l_h(s') $ is finite and thus $ l_h(s) = l_h(s') + 1 $ is finite.
\item In the case where $ t = \lambda x. u $ and $ \pi = \epsilon $, we have $ s \succ_h \lambda x. ((u, e) . \epsilon) $. Set $ s'= (u, e) . \epsilon $. Since $ s $ respects the variable convention, we have $ \overline{s} = \lambda x. u[e] = \lambda x. \overline{s'} $. We have $ \textsf{h}(\overline{s'}) = \textsf{h}(\overline{s}) $, hence we can apply the induction hypothesis: $ l_h(s') $ is finite and thus $ l_h(s) = l_h(s') + 1 $ is finite.
\item In the case where $ t = \lambda x. u $ and $ \pi = c . \pi' $, we have $ s \succ_h (u, \{ (x, c) \} \cup e) . \pi $. Set $ s'= (u, \{ (x, c) \} \cup e) .\pi$. We have $ \textsf{h}(\overline{s'}) < \textsf{h}(\overline{s}) $, hence we can apply the induction hypothesis: $ l_h(s') $ is finite and thus $ l_h(s) = l_h(s') + 1 $ is finite.
\end{itemize}
\end{proof}

We recall that if a $\lambda$-term $t$ has a head-normal form, then the last term of the terminating head reduction of $t$ is called \emph{the principal head normal form of $t$} (see \cite{Barendregt}). Proposition \ref{prop : f.n. de Krivine => f.n. de tete}, Lemma \ref{lemme : Krivine simule head} and Theorem \ref{theorem : head-normalizable => l fini} show that for any head normalizable $\lambda$-term $ t$ having $t'$ as principal head normal form, we have $ (t, \emptyset) . \epsilon {\succ_h}^\ast t' $ and $ t' $ is a Krivine head normal form.

\subsection{A machine computing the $ \beta $-normal form}\label{subsection:beta}

We now slightly modify the machine so as to compute the $ \beta $-normal form of any normalizable $\lambda$-term.

\begin{definition}\label{Def Krivine normal}
We define a partial map from $\mathcal{K}$ to $\mathcal{K}$: for any $k, k' \in \mathcal{K}$, the notation $k \succ_\beta k'$ will mean that the map assigns $k'$ to $k$. The value of the map at $k$ is defined, by induction on $\textsf{d}(k)$, as follows:
$$ k \mapsto \left\lbrace \begin{array}{ll} s' & \textrm{if $k \in \mathbb{S}$ and $k \succ_\mathbb{S} s'$}\\
(x) (c_1 . \epsilon) \ldots (c_q . \epsilon) & \textrm{if $k = ((x, e), c_1, \ldots, c_q) \in \mathbb{S}$, $x \in \mathcal{V}$ and $x \notin \textsf{dom}(e)$}\\
\lambda x.((u, e) . \epsilon) & \textrm{if $k = ((\lambda x. u, e) . \epsilon) \in \mathbb{S}$}\\
\textrm{not defined} & \textrm{if $k \in \mathcal{V}$}\\
(v')u & \textrm{if $k = (v)u$ and $v \succ_\beta v'$}\\
(x)u' & \textrm{if $k = (x)u$ with $x \in \mathcal{V}$ and $u \succ_\beta u'$}\\
\lambda y. k_0' & \textrm{if $k = \lambda y. k_0$ and $k_0 \succ_\beta k_0'$} \end{array} \right.$$
\end{definition}

Let us compare Definition \ref{Def Krivine normal} with Definition \ref{Def Krivine head}. The difference is in the case where the current subterm of a state is a variable and where this variable has no value in the environment: the first machine stops, the second machine continues to compute every argument of the variable.

The function $ l_\beta $ is defined as $ l_h $ (see Definition \ref{definition : l}), but for this new machine. 

For any normalizable $\lambda$-term $ t $, we denote by $ n(t) $ the number of steps leading from $t$ to its normal form following the leftmost reduction strategy.

\begin{theorem}\label{theorem : normalizable => l' finite}
For any $s = (t, e) . \pi \in \mathbb{S}$, if $\overline{s} $ is normalizable, then $ l_\beta(s) $ is finite.
\end{theorem}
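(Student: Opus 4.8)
The plan is to follow the same strategy as in the proof of Theorem~\ref{theorem : head-normalizable => l fini}, a noetherian induction on a well-founded measure of $s$, but to enlarge the measure so as to accommodate the one clause of Definition~\ref{Def Krivine normal} that is absent from Definition~\ref{Def Krivine head}: when the current subterm is a variable $x \notin \textsf{dom}(e)$ and the stack $\pi = (c_1, \ldots, c_q)$ is nonempty, the machine no longer halts but rewrites $s$ to the element $(x)(c_1.\epsilon) \ldots (c_q.\epsilon)$ of $\mathcal{H}$ and then normalizes each argument $c_i.\epsilon$ in turn. I would run the induction on the triple
\[ \bigl( |\mathsf{nf}(\overline{s})| + n(\overline{s}),\ \textsf{d}(e),\ t \bigr) \in \mathbb{N} \times \mathbb{N} \times \Lambda, \]
ordered lexicographically, where $\Lambda$ is ordered by the strict subterm relation and $|\mathsf{nf}(\overline{s})|$ denotes the size of the normal form of $\overline{s}$ (which exists by hypothesis); this has the same shape $\mathbb{N} \times \mathbb{N} \times \Lambda$ as the measure of Theorem~\ref{theorem : head-normalizable => l fini}, with $\textsf{h}(\overline{s})$ replaced by $|\mathsf{nf}(\overline{s})| + n(\overline{s})$. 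The summand $|\mathsf{nf}(\overline{s})|$ is needed precisely to make the measure decrease when the machine descends into an argument, which $n(\overline{s})$ alone does not do when $\pi$ has length one (there $n(\overline{c_1}) = n(\overline{s})$).

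For $s = (t,e).\pi$ the case analysis mirrors Definition~\ref{Def Krivine normal}. If $t \in \textsf{dom}(e)$, then $s \succ_\beta e(t).\pi$, with $\overline{s}$ unchanged and $\textsf{d}$ strictly smaller. If $t = (v)u$, then $s \succ_\beta (v,e).((u,e).\pi)$, with $\overline{s}$ unchanged, $\textsf{d}$ unchanged, and $t$ replaced by the proper subterm $v$. If $t = \lambda x. u$ and $\pi = \epsilon$, then $s \succ_\beta \lambda x.((u,e).\epsilon)$; since $s$ respects the variable convention, $\overline{s} = \lambda x.\overline{(u,e).\epsilon}$, so $|\mathsf{nf}(\overline{(u,e).\epsilon})| < |\mathsf{nf}(\overline{s})|$, and one concludes using $l_\beta(\lambda x. k) = l_\beta(k)$. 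If $t = \lambda x. u$ and $\pi = c.\pi'$, then $s \succ_\beta (u, \{(x,c)\} \cup e).\pi'$ and $\overline{s}$ reduces to $\overline{s'}$ by contracting its head redex, which is the leftmost redex, so $n(\overline{s})$ drops by one while $|\mathsf{nf}|$ is preserved. In all these cases $l_\beta(s) = 1 + l_\beta(s')$ and the induction hypothesis applies. Finally, if $t \in \mathcal{V} \setminus \textsf{dom}(e)$ and $\pi = \epsilon$, then $s \succ_\beta t$, which is a Krivine normal form, so $l_\beta(s) = 1$.

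The only genuinely new case is $t = x \in \mathcal{V} \setminus \textsf{dom}(e)$ with $\pi = (c_1, \ldots, c_q)$, $q \geq 1$, where $s \succ_\beta (x)(c_1.\epsilon) \ldots (c_q.\epsilon)$. Here $\overline{s} = (x)\overline{c_1} \ldots \overline{c_q}$ is head-normal, so its normalizability forces each $\overline{c_i}$ to be normalizable with $\mathsf{nf}(\overline{s}) = (x)\mathsf{nf}(\overline{c_1}) \ldots \mathsf{nf}(\overline{c_q})$; in particular, for each $i$, the state $c_i.\epsilon$ belongs to $\mathbb{S}$ (it inherits the variable convention from $s$), $\overline{c_i.\epsilon} = \overline{c_i}$ is normalizable, and $|\mathsf{nf}(\overline{c_i.\epsilon})| + n(\overline{c_i.\epsilon}) < |\mathsf{nf}(\overline{s})| + n(\overline{s})$, since $\mathsf{nf}(\overline{c_i})$ is a proper subterm of $\mathsf{nf}(\overline{s})$ and $n(\overline{c_i}) \leq n(\overline{s})$. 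The induction hypothesis thus yields that every $l_\beta(c_i.\epsilon)$ is finite. It then remains to see that $l_\beta\bigl((x)(c_1.\epsilon)\ldots(c_q.\epsilon)\bigr)$ is finite; I would isolate this as a separate combination lemma, stating that for $x \in \mathcal{V}$ and $t_1, \ldots, t_q \in \mathcal{K}$ with each $\overline{t_i}$ normalizable and each $l_\beta(t_i)$ finite, $l_\beta((x)t_1 \ldots t_q)$ is finite, and proving it by induction on $\sum_i l_\beta(t_i)$: a $\succ_\beta$-step of an element of $\mathcal{H}$ acts on exactly one argument, strictly decreasing its $l_\beta$ while leaving the others and their normalizability untouched, and an element $(x)t_1 \ldots t_q$ of $\mathcal{H}$ all of whose arguments are Krivine normal forms is itself a Krivine normal form. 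Granting the lemma, $l_\beta(s) = 1 + l_\beta\bigl((x)(c_1.\epsilon)\ldots(c_q.\epsilon)\bigr)$ is finite.

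I expect the combination lemma — equivalently, pinning down the reading of the clauses of Definition~\ref{Def Krivine normal} governing elements of $\mathcal{H}$ — to be the main obstacle: one must check that the machine does indeed process the arguments of a free head variable one after another, each down to a Krivine normal form, without the computation of one argument affecting the others, so that $l_\beta$ of such an element of $\mathcal{H}$ is essentially the sum of the $l_\beta$'s of its (normalizable) arguments. The remainder is the familiar bookkeeping for the measure: checking that each administrative transition (dereferencing a variable, unfolding an application, crossing a $\lambda$, contracting the head redex) strictly decreases the lexicographic measure, which is exactly why both $|\mathsf{nf}(\overline{s})|$ and $n(\overline{s})$ must appear. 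Alternatively one could first invoke Theorem~\ref{theorem : head-normalizable => l fini} (via the fact that a normalizable term is head normalizable) to see that the head part of the computation is finite, and then run only the induction on $|\mathsf{nf}(\overline{s})|$ for the arguments; but the unified induction above seems the most economical.
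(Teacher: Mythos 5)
Your proof is correct and has the same skeleton as the paper's: a noetherian induction over a lexicographic measure on $\mathbb{N}\times\mathbb{N}\times\Lambda$ with the same five-way case analysis, the only case genuinely absent from Theorem~\ref{theorem : head-normalizable => l fini} being the free head variable with nonempty stack. The two places where you diverge are both places where the paper is loose, and your version is the more careful one. First, the paper keeps the measure $(n(\overline{s}), \textsf{d}(e), t)$ and, in the free-variable case, justifies the recursive calls on the arguments only by ``$n(\overline{c_k}) \leq n(\overline{s})$ and $\overline{c_k} < \overline{s}$''; but subterm-ness of $\overline{c_k}$ in $\overline{s}$ is not one of the three components of that measure, so when $n(\overline{c_k}) = n(\overline{s})$ (e.g.\ $q=1$) the stated well-founded descent does not literally go through. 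Your replacement of the first component by $\vert\mathsf{nf}(\overline{s})\vert + n(\overline{s})$ repairs exactly this, at the price of re-checking the administrative cases, which you do correctly (the crossing-a-$\lambda$ case even becomes a strict decrease in the first component). Second, the paper simply asserts $l_\beta(s) = \sum_{k=1}^q l_\beta(c_k) + 1$ in that case; your ``combination lemma'' on $l_\beta((x)t_1 \ldots t_q)$ is precisely the content of that unproved assertion, and you are right to flag that proving it forces one to fix the intended reading of the clauses of Definition~\ref{Def Krivine normal} on elements of $\mathcal{H}$ (read literally, once $t_1$ is in Krivine normal form the clause for $(x)u$ no longer fires on $((x)t_1)t_2$, so the clauses must be understood as letting the machine move on to the next argument). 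So: same route, with the two steps the paper glosses over made explicit and, in the case of the measure, actually fixed.
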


\begin{proof}
We prove, by noetherian induction on $\mathbb{N} \times \mathbb{N} \times \Lambda$ lexicographically ordered, that for any $(h, d, t) \in \mathbb{N} \times \mathbb{N} \times \Lambda$, for any $s = (t, e) . \pi$ such that $\textsf{h}(s) = h$ and $\textsf{d}(s) =d$, if $s$ is head-normalizable, then $l_h(s)$ is finite.

If $ n(\overline{s}) = 0 $, $ \overline{s} \in \mathcal{V} $, $ \textsf{d}(e)=0 $ and $ t \in \mathcal{V} $, then we have $ l_\beta(s) = 1 $.

Else, there are five cases.
\begin{itemize}
\item In the case where $ t \in \mathcal{V} \cap \textrm{dom}(e) $, we have $ s \succ_\beta (e(t), \pi) $. Set $ s' = (e(t), \pi) $ and $ e(t) = (t', e') $. We have $ \overline{s} = \overline{s'} $ and $ \textsf{d}(e') < \textsf{d}(e) $, hence we can apply the induction hypothesis: $ l_\beta(s') $ is finite and thus $ l_\beta(s) = l_\beta(s') + 1 $ is finite.
\item In the case where $ t \in \mathcal{V} $ and $ t \notin \textrm{dom}(e) $, set $ \pi = (c_1, \ldots, c_q) $. For any $ k \in \{ 1, \ldots, q \} $, we have $ n(\overline{c_k}) \leq n(\overline{s}) $ and $\overline{c_k} < \overline{s} $, hence we can apply the induction hypothesis on $ c_k $: for any $ k \in \{ 1, \ldots, q \} $, $ l_\beta(c_k) $ is finite, hence $ l_\beta(s) = \sum_{k=1}^q l_\beta(c_k) + 1 $ is finite too.
\item In the case where $ t = (v) u $, we have $ s \succ_\beta (v, e) . ((u, e) . \pi) $. Set $ s' = (v, e) . ((u, e) . \pi) $. We have $ \overline{s'} = \overline{s} $, hence we can apply the induction hypothesis: $ l_\beta(s') $ is finite and thus $ l_\beta(s) = l_\beta(s') + 1 $ is finite.
\item In the case where $ t = \lambda x. u $ and $ \pi = \epsilon $, we have $ s \succ_\beta \lambda x. ((u, e) . \epsilon) $. Set $ s'= (u, e) . \epsilon $. Since $ s $ respects the variable convention, we have $ \overline{s} = \lambda x. u[e] = \lambda x. \overline{s'} $. We have $ n(\overline{s'}) = n(\overline{s}) $, hence we can apply the induction hypothesis: $ l_\beta(s') $ is finite and thus $ l_\beta(s) = l_\beta(s') + 1 $ is finite.
\item In the case where $ t = \lambda x. u $ and $ \pi = c . \pi' $, we have $ s \succ_\beta (u, \{ (x, c) \} \cup e) . \pi $. Set $ s'= (u, \{ (x, c) \} \cup e) . \pi $. We have $ n(\overline{s'}) < n(\overline{s}) $, hence we can apply the induction hypothesis: $ l_\beta(s') $ is finite and thus $ l_\beta(s) = l_\beta(s') + 1 $ is finite.
\end{itemize}
\end{proof}

\section{A non-uniform semantics of $\lambda$-calculus}\label{section : semantics}

We define here the semantics allowing to measure execution time. We have in mind the following philosophy: the semantics of the untyped $\lambda$-calculus come from the semantics of the simply typed $\lambda$-calculus and any semantics of linear logic induces a semantics of the simply typed $\lambda$-calculus. So, we start from a semantics $\mathfrak{M}$ of linear logic (Subsection \ref{subsection:linear logic}), then we present the induced semantics $\Lambda(\mathfrak{M})$ of the simply typed $\lambda$-calculus (Subsection \ref{subsection:typed}) and lastly the semantics of the untyped $\lambda$-calculus that we consider (Subsection \ref{subsection:untyped}). This semantics is \emph{non-uniform} in the sense that the interpretation of a function contains information abouts its behaviour on arguments whose value can change during the computation: in Subsection \ref{subsection:non uniformity}, we give an example illustrating this point.

The first works tackling the problem of giving a general categorical definition of a denotational semantics of linear logic are those of Lafont \cite{lafont} and of Seely \cite{seely}. 
As for the works of Benton, Bierman, Hyland and de Paiva, \cite{BBPH}, \cite{Bierman_thesis} and \cite{Bierman1995}, they led to the following axiomatic: a categorical model of the multiplicative exponential fragment of intuitionistic linear logic (IMELL) is a quadruple $ (\mathcal{C}, \mathcal{L}, c, w) $ such that
\begin{itemize}
\item $ \mathcal{C} = (\mathbb{C}, \otimes, I, \alpha, \lambda, \rho, \gamma) $ is a closed symmetric monoidal category;
\item $ \mathcal{L} = ((T, \textsf{m}, \textsf{n}), \delta, d) $ is a symmetric monoidal comonad on $ \mathcal{C} $;
\item $ c $ is a monoidal natural transformation from $ (T, \textsf{m}, \textsf{n}) $ to $ \otimes \circ \Delta_\mathcal{C} \circ (T, \textsf{m}, \textsf{n}) $ and $ w $ is a monoidal natural transformation from $ (T, \textsf{m}, \textsf{n}) $ to $ \ast_\mathcal{C} $ such that
\begin{itemize}
\item for any object $ A $ of $ \mathbb{C} $, $ ((T(A), \delta_A), c_A, w_A) $ is a cocommutative comonoid in $ (\mathbb{C}^\mathbb{T}, \otimes^\mathbb{T}, (I, \textsf{n}), \alpha, \lambda, \rho) $
\item and for any $ f \in \mathbb{C}^\mathbb{T}[(T(A), \delta_A), (T(B), \delta_B)] $, $ f $ is a comonoid morphism,
\end{itemize}
where $ \mathbb{T} $ is the comonad $ (T, \delta, d) $ on $ \mathbb{C} $, $ \mathbb{C}^\mathbb{T} $ is the category of $ \mathbb{T} $-coalgebras, $ \Delta_\mathcal{C} $ is the diagonal monoidal functor from $ \mathcal{C} $ to $ \mathcal{C} \times \mathcal{C} $ and $ \ast_\mathcal{C} $ is the monoidal functor that sends any arrow to $ id_I $.
\end{itemize}

Given a categorical model $ \mathfrak{M} = (\mathcal{C}, \mathcal{L}, c, w) $ of IMELL with $ \mathcal{C} = (\mathbb{C}, \otimes, I, \alpha, \lambda, \rho, \gamma) $ and $ \mathcal{L} = ((T, \textsf{m}, \textsf{n}), \delta, d) $, we can define a cartesian closed category $ \Lambda(\mathfrak{M}) $ such that
\begin{itemize}
\item objects are finite sequences of objects of $ \mathbb{C} $
\item and arrows $ ( A_1, \ldots, A_m ) \rightarrow ( B_1, \ldots, B_p ) $ are the sequences $ ( f_1, \ldots, f_p ) $ such that every $f_k$ is an arrow $ \bigotimes_{j=1}^m T(A_j) \rightarrow B_k $ in $ \mathbb{C} $.
\end{itemize}

Hence we can interpret simply typed $\lambda$-calculus in the category $ \Lambda(\mathfrak{M}) $. 
This category is (weakly) equivalent\footnote{A category $\mathbb{C}$ is said to be \emph{weakly equivalent} to a category $\mathbb{D}$ if there exists a functor $F:\mathbb{C} \rightarrow \mathbb{D}$ full and faithful such that every object $D$ of $\mathbb{D}$ is isomorphic to $F(C)$ for some object $C$ of $\mathbb{C}$.} to a full subcategory of $ (T, \delta, d) $-coalgebras exhibited by Hyland. If the category $ \mathbb{C} $ is cartesian, then the categories $ \Lambda(\mathfrak{M}) $ and the Kleisli category of the comonad $(T, \delta, d)$ are (strongly) equivalent\footnote{A category $\mathbb{C}$ is said to be \emph{strongly equivalent} to a category $\mathbb{D}$ if there are functors $F:\mathbb{C} \rightarrow \mathbb{D}$ and $G:\mathbb{D} \rightarrow \mathbb{C}$ and natural isomorphisms $G \circ F \cong id_\mathbb{C}$ and $F \circ G \cong id_\mathbb{D}$.}. See \cite{these} for a full exposition.

Below, we describe completely the category $\Lambda(\mathfrak{M})$ (with its composition operation and its identities) only for the particular case that we consider in this paper.

\subsection{A relational model of linear logic}\label{subsection:linear logic}

The category of sets and relations is denoted by $\mathbf{Rel}$ and its composition operation by $\circ$. The functor $ T $ from $ \mathbf{Rel} $ to $ \mathbf{Rel} $ is defined by setting 
\begin{itemize}
\item for any object $ A $ of $ \mathbf{Rel} $, $ T(A) = \mathcal{M}_{\textrm{fin}}(A) $;
\item and, for any $ f \in \mathbf{Rel}(A, B) $, $ T(f) \in \mathbf{Rel}(T(A), T(B)) $ defined by 
$$ T(f) = \{ ([\alpha_1, \ldots, \alpha_n], [\beta_1, \ldots, \beta_n])\: / \: n \in \mathbb{N} \textrm{ and } (\alpha_1, \beta_1), \ldots, (\alpha_n, \beta_n) \in f \} . $$
\end{itemize}
The natural transformation $ d $ from $ T $ to the identity functor of $ \mathbf{Rel} $ is defined by setting $ d_A = \{ ([\alpha], \alpha) \: / \: \alpha \in A \} $ and the natural transformation $ \delta $ from $ T $ to $ T \circ T $ by setting $ \delta_A = \{ (a_1 + \ldots + a_n, [a_1, \ldots, a_n]) \: / \: n \in \mathbb{N} \textrm{ and } a_1, \ldots, a_n \in T(A) \} . $ It is easy to show that $ (T, \delta, d) $ is a comonad on $ \mathbf{Rel} $. It is well-known that this comonad can be provided with a structure $ \mathfrak{M} $ that is a denotational semantics of (I)MELL. 

This denotational semantics gives rise to a cartesian closed category $\Lambda(\mathfrak{M})$.

\subsection{Interpreting simply typed $\lambda$-terms}\label{subsection:typed}

We give the complete description of the category $\Lambda(\mathfrak{M})$ induced by the denotational semantics $\mathfrak{M}$ of (I)MELL evoked in the previous subsection:
\begin{itemize}
\item objects are finite sequences of sets;
\item arrows $( A_1, \ldots, A_m ) \rightarrow ( B_1, \ldots, B_n )$ are the sequences $( f_1, \ldots, f_n )$ such that every $f_i$ is a subset of $(\prod_{j=1}^m \mathcal{M}_{\textrm{fin}}(A_j)) \times B_i$ with the convention $(\prod_{j=1}^m \mathcal{M}_{\textrm{fin}}(A_j)) \times B_i = B_i$ if $m=0$;
\item if $( f_1, \ldots, f_p )$ is an arrow $( A_1, \ldots, A_m ) \rightarrow ( B_1, \ldots, B_p ) $ and $( g_1, \ldots, g_q )$ is an arrow $ ( B_1, \ldots, B_p ) \rightarrow ( C_1, \ldots, C_q )$, then $( g_1, \ldots, g_q ) \circ_{\Lambda(\mathfrak{M})} ( f_1, \ldots, f_p )$ is the arrow $ ( h_1, \ldots, h_q ) : (A_1, \ldots, A_m) \rightarrow (C_1, \ldots, C_q)$, where $h_l$ is
$$ \bigcup_{n_1, \ldots, n_p \in \mathbb{N}} \left\lbrace \begin{array}{l} ((\sum_{k=1}^p \sum_{i=1}^{n_k} a_1^{i, k}, \ldots, \sum_{k=1}^p \sum_{i=1}^{n_k} a_m^{i, k}), \gamma) \: / \: \\
(\forall j \in \{ 1, \ldots, m \}) (\forall k \in \{ 1, \ldots, p \})(a_j^{i, k})_{1 \leq i \leq n_k} \in {(\mathcal{M}_{\textrm{fin}}(A_j))}^{n_k} \\
\textrm{and } (\exists \beta_1^1, \ldots, \beta_1^{n_1} \in B_1) \ldots (\exists \beta_p^1, \ldots, \beta_p^{n_p} \in B_p) \\
\begin{array}{ll}
& ((([\beta_1^1, \ldots, \beta_1^{n_1}], \ldots, [\beta_p^1, \ldots, \beta_p^{n_p}]), \gamma) \in g_l \textrm{ and }\\
& (\forall k \in \{ 1, \ldots, p \}) (\forall i \in \{ 1, \ldots, n_k \}) ((a_1^{i, k}, \ldots, a_m^{i, k}), \beta_k^i) \in f_k) \end{array}
\end{array} \right\rbrace $$
$ \textrm{ for } 1 \leq l \leq q $, with the conventions 
$$((a_1, \ldots, a_m), \gamma)=\gamma \textrm{ and } (\prod_{j=1}^m \mathcal{M}_{\textrm{fin}}(A_j)) \times C_l=C_l \textrm{ if } m=0 ; $$
\item the identity of $( A_1, \ldots, A_m )$ is $( d^1, \ldots, d^m )$ with 
$$d^j = \{ ((\underbrace{[], \ldots, []}_{j-1 \textrm{ times}}, [\alpha], \underbrace{[], \ldots, []}_{m-j \textrm{ times}}), \alpha) \: / \: \alpha \in A_j \} . $$
\end{itemize}

\begin{prop}\label{prop : CCC}
The category $\Lambda(\mathfrak{M})$ has the following cartesian closed structure 
$$(\Lambda(\mathfrak{M}), 1, !, \&, \pi^1, \pi^2, ( \cdot, \cdot )_{\mathfrak{M}}, \Rightarrow, \Lambda, \textrm{ev}): $$
\begin{itemize}
\item the terminal object $1$ is the empty sequence $( )$;
\item if $B^1 = ( B_1, \ldots, B_p ) $ and $B^2 = ( B_{p+1}, \ldots, B_{p+q} ) $ are two sequences of sets, then $ B^1 \& B^2 $ is the sequence $ ( B_1, \ldots, B_{p+q} ) $;
\item if $B^1 = ( B_1, \ldots, B_p ) $ and $B^2 = ( B_{p+1}, \ldots, B_{p+q} ) $ are two sequences of sets, then 
$$ \pi_{B^1, B^2}^1 = ( d^1, \ldots, d^p ) : B^1 \& B^2 \rightarrow B^1 \textrm{ in } \Lambda(\mathfrak{M}) $$
and
$$ \pi_{B^1, B^2}^2 = ( d^{p+1}, \ldots, d^{p+q} ) : B^1 \& B^2 \rightarrow B^2 \textrm{ in } \Lambda(\mathfrak{M}) $$
with
$$d^k = \{ ((\underbrace{[], \ldots, []}_{k-1 \textrm{ times}}, [\beta], \underbrace{[], \ldots, []}_{p+q-k \textrm{ times}}), \beta) \: / \: \beta \in B_k \} ; $$
\item if $f^1 = ( f_1, \ldots, f_p ) : C \rightarrow A^1$ and $f^2 = ( f_{p+1}, \ldots, f_{p+q} ) : C \rightarrow A^2$ in $\Lambda(\mathfrak{M})$, then $( f^1, f^2 )_{\mathfrak{M}} = ( f_1, \ldots, f_{p+q} ) : C \rightarrow A^1 \& A^2 $;
\end{itemize}
and
\begin{itemize}
\item $( A_1, \ldots, A_m ) \Rightarrow ( C_1, \ldots, C_q )$ is defined by induction on $m$:
\begin{itemize}
\item $( ) \Rightarrow ( C_1, \ldots, C_q ) = ( C_1, \ldots, C_q )$
\item 
\begin{eqnarray*}
& & ( A_1, \ldots, A_{m+1} ) \Rightarrow ( C_1, \ldots, C_q ) \\
& = & ( ( A_1, \ldots, A_m ) \Rightarrow (\mathcal{M}_{\textrm{fin}}(A_{m+1}) \times C_1), \ldots, \\
& & \: \: \: ( A_1, \ldots, A_m ) \Rightarrow (\mathcal{M}_{\textrm{fin}}(A_{m+1}) \times C_q) ) ;
\end{eqnarray*}
\end{itemize}
\item if $h = ( h_1, \ldots, h_q ) : ( A_1, \ldots, A_m ) \& ( B_1, \ldots, B_p ) \rightarrow ( C_1, \ldots, C_q ) $, then 
$$\Lambda_{( A_1, \ldots, A_m ), ( C_1, \ldots, C_q )}^{( B_1, \ldots, B_p )}(h) : ( A_1, \ldots, A_m ) \rightarrow ( B_1 \ldots, B_p ) \Rightarrow ( C_1, \ldots, C_q ) $$ 
is defined by induction on $p$:
\begin{itemize}
\item if $p=0$, then $\Lambda_{( A_1, \ldots, A_m ), ( C_1, \ldots, C_q )}^{( B_1, \ldots, B_p )}(h)=h$;
\item if $p=1$, then there are two cases:
\begin{itemize}
\item in the case $m=0$, $\Lambda_{( A_1, \ldots, A_m ), ( C_1, \ldots, C_q )}^{( B_1, \ldots, B_p )}(h) = h $;
\item in the case $m \not=0 $, 
$$\Lambda_{( A_1, \ldots, A_m ), ( C_1, \ldots, C_q )}^{( B_1, \ldots, B_p )}(h) = ( \xi_{\prod_{j=1}^m \mathcal{M}_{\textrm{fin}}(A_j), C_1}^{\mathcal{M}_{\textrm{fin}}(B_1)}(h_1), \ldots, \xi_{\prod_{j=1}^m \mathcal{M}_{\textrm{fin}}(A_j), C_q}^{\mathcal{M}_{\textrm{fin}}(B_1)}(h_q) ) , $$
where
$$ \xi_{\prod_{j=1}^m \mathcal{M}_{\textrm{fin}}(A_j), C_l}^{\mathcal{M}_{\textrm{fin}}(B_1)}(h_l) = \{ (a, (b, \gamma)) \: ; \: ((a, b), \gamma) \in h_l \} 
 ; $$
\end{itemize}
\item if $p \geq 1$, then 
\begin{eqnarray*}
& & \Lambda_{A, ( C_1, \ldots, C_q )}^{( B_1, \ldots, B_{p+1} )}(h) \\
& = & \Lambda_{A, ( \mathcal{M}_{\textrm{fin}}(B_{p+1}) \times C_1, \ldots, \mathcal{M}_{\textrm{fin}}(B_{p+1}) \times C_q))}^{( B_1, \ldots, B_p )}(\Lambda_{( A_1, \ldots, A_m, B_1, \ldots, B_p ), ( C_1, \ldots, C_q )}^{( B_{p+1} )}(h))  ,
\end{eqnarray*}
where $ A = ( A_1, \ldots, A_m )$;
\end{itemize}
\item $\textrm{ev}_{C, B} : (B \Rightarrow C) \& B \rightarrow C $ is defined by setting 
$$\textrm{ev}_{( C_1, \ldots, C_q ), ( B_1, \ldots, B_p )} = ( \textrm{ev}_{( C_1, \ldots, C_q ), ( B_1, \ldots, B_p )}^1, \ldots, \textrm{ev}_{( C_1, \ldots, C_q ), ( B_1, \ldots, B_p )}^q )$$ where, for $1 \leq k \leq q$, 
\begin{eqnarray*}
& & \textrm{ev}_{( C_1, \ldots, C_q ), ( B_1, \ldots, B_p )}^k \\
& = & \left\lbrace \begin{array}{l} ((\underbrace{[], \ldots, []}_{k-1 \textrm{ times}}, [((b_1, \ldots, b_p), \gamma)], \underbrace{[], \ldots, []}_{q-k \textrm{ times}}, b_1, \ldots, b_p), \gamma) \: / \\
\: \: \: b_1 \in \mathcal{M}_{\textrm{fin}}(B_1), \ldots, b_p \in \mathcal{M}_{\textrm{fin}}(B_p), \gamma \in C_k \end{array} \right\rbrace .
\end{eqnarray*}
\end{itemize}
\end{prop}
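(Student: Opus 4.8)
The statement records, for the particular model $\mathfrak{M}$ of Subsection~\ref{subsection:linear logic}, the cartesian closed structure that $\Lambda(\mathfrak{M})$ carries for \emph{any} categorical model of IMELL, by the general theory of \cite{BBPH,Bierman_thesis,Bierman1995} (see \cite{these}); since $\mathbf{Rel}$ is cartesian (its product being disjoint union), $\Lambda(\mathfrak{M})$ is moreover equivalent to the co-Kleisli category of $(T,\delta,d)$. There are accordingly two routes: either unwind the abstract constructions (the co-Kleisli adjunction and the Seely isomorphisms) and check that they specialise exactly to the closed formulas displayed above, or verify the required equations by hand. The plan is to do the direct verification, which is elementary but bureaucratic; throughout I write $\Lambda^{B}_{C,D}$ for the currying operator with context $C$, exponent $B$ and codomain $D$ displayed above.

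First I would check that $\Lambda(\mathfrak{M})$ is indeed a category with the stated identities: the unit laws and associativity of $\circ_{\Lambda(\mathfrak{M})}$ reduce, after expanding the composition formula, to reindexing the iterated multiset sums $\sum_{k=1}^{p}\sum_{i=1}^{n_k}$, together with the observation that composing with a ``diagonal'' relation $d^{j}$ forces the multiplicity at the relevant coordinate to be $1$ and collapses the corresponding sum to a singleton; the $m=0$ conventions are handled separately but trivially. The terminal object is immediate: an arrow $C\to(\,)$ is an empty sequence of relations, so there is exactly one. For the product I would compute $\pi^{1}_{A^{1},A^{2}}\circ_{\Lambda(\mathfrak{M})}( f^{1},f^{2} )_{\mathfrak{M}}$ and $\pi^{2}_{A^{1},A^{2}}\circ_{\Lambda(\mathfrak{M})}( f^{1},f^{2} )_{\mathfrak{M}}$ directly: each $d^{k}$ occurring in a projection is diagonal, so the composition formula again collapses and returns $f^{1}$, resp.\ $f^{2}$; uniqueness of the mediating arrow follows because an arrow into $A^{1}\&A^{2}$ is literally the concatenation of its two components, which the same computation pins down.

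The substantive part is the exponential. I would prove that, for all objects $C,B,D$, the map $h\mapsto\Lambda^{B}_{C,D}(h)$ is a bijection $\Lambda(\mathfrak{M})[C\&B,D]\to\Lambda(\mathfrak{M})[C,B\Rightarrow D]$ natural in $C$, with inverse $g\mapsto\textrm{ev}_{D,B}\circ_{\Lambda(\mathfrak{M})}(g\,\&\,\textrm{id}_{B})$; equivalently, that $\textrm{ev}_{D,B}\circ_{\Lambda(\mathfrak{M})}(\Lambda^{B}_{C,D}(h)\,\&\,\textrm{id}_{B})=h$ and that $\Lambda^{B}_{C,D}(h)$ is the unique arrow with this property. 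This goes by induction on $p=|B|$. The case $p=0$ is immediate since then $B\Rightarrow D=D$ and $\Lambda^{B}_{C,D}=\textrm{id}$. For $p=1$ with $|C|=0$ the currying is again the identity; for $p=1$, $|C|\neq0$ one expands both sides of $\textrm{ev}\circ(\Lambda(h)\&\textrm{id})=h$ using the explicit description of $\textrm{ev}^{k}$ — whose single element forces, in the composition formula, all the $n_{k}$ to behave as on a singleton and selects exactly the pair $((b_{1},\dots,b_{p}),\gamma)$ — so that the operator $\xi$ (``detach, resp.\ attach, the leading $\mathcal{M}_{\textrm{fin}}(B_{1})$ factor'') is seen to be inverse to precomposition with $\textrm{ev}$. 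For the step from $p$ to $p+1$ one uses the recursive clauses defining both $\Rightarrow$ on $(A_{1},\dots,A_{m+1})$ and $\Lambda$ on $(B_{1},\dots,B_{p+1})$, the ``partial currying'' identity built into the definition of $\Lambda$, and the inductive hypotheses at $p$ and at $1$; naturality in $C$ is carried through the same induction (again reducing to a collapse of the composition formula against diagonal relations).

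The main obstacle is combinatorial rather than conceptual: every one of these equalities is an identity between two sets of tuples built from the nested sums $\sum_{k=1}^{p}\sum_{i=1}^{n_k}$ in the composition formula, and the real work is controlling the indices — and the degenerate $m=0$, $p=0$, $q=0$ cases — so that the principle ``composing with a diagonal relation collapses a sum'' is applied correctly at each coordinate. The genuinely delicate point is the exponential adjunction for $p\geq2$, where the induction defining $\Rightarrow$ (on the length of the first argument) and the induction defining $\Lambda$ (on the length of the exponent) run on different parameters and must be threaded together; keeping the inductive statement strong enough to propagate naturality in $C$ through the partial-currying clause is what requires care.
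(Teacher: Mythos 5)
Your proposal matches the paper's proof, which consists precisely of the two routes you identify: ``By checking some computations or by applying the theorem that states that if $\mathfrak{M}$ is a denotational semantics of IMELL, then the induced structure is a cartesian closed structure (see the author's thesis).'' Your elaboration of the direct-verification route (diagonal relations collapsing the sums in the composition formula, the induction on the length of the exponent for the currying bijection) is a sound and considerably more detailed account of the ``checking some computations'' that the paper leaves implicit.
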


\begin{proof}
By checking some computations or by applying the theorem that states that if $\mathfrak{M}$ is a denotational semantics of IMELL, then the "induced" structure 
$$(\Lambda(\mathfrak{M}), 1, !, \&, \pi^1, \pi^2, ( \cdot, \cdot )_{\mathfrak{M}}, \Rightarrow, \Lambda, \textrm{ev})$$ 
is a cartesian closed structure (see \cite{these}).
\end{proof}

\subsection{Interpreting type free $\lambda$-terms}\label{subsection:untyped}

First, we recall that if $f : D \rightarrow C$ and $g: C \rightarrow D$ are two arrows in a category $\mathbb{C}$, then \emph{$f$ is a retraction of $g$ in $ \mathbb{C} $} means that $f \circ_\mathbb{C} g = id_{C} $ (see, for instance, \cite{CWM}); it is also said that \emph{$ (g, f)$ is a retraction pair}.

With the cartesian closed structure on $\Lambda(\mathfrak{M})$, we have a semantics of the simply typed $\lambda$-calculus (see, for instance, \cite{LS}). Now, in order to have a semantics of the pure $\lambda$-calculus, it is therefore enough to have a \emph{reflexive} object $ U $ of $ \Lambda(\mathfrak{M}) $, that is to say such that
$$ (U \Rightarrow U) \lhd U , $$
that means that there exist $ s \in \Lambda(\mathfrak{M})[U \Rightarrow U, U] $ and $ r \in \Lambda(\mathfrak{M})[U, U \Rightarrow U] $ such that $ r \circ_{\Lambda(\mathfrak{M})} s $ is the identity on $ U \Rightarrow U $; in particular, $(s, r)$ is a retraction pair. We will use the following lemma for exhibiting such a retraction pair. 

\begin{lemme}\label{lemma : retraction}
Let $ h : A \rightarrow B$ be an injection between sets. Consider the arrows $g : \mathcal{M}_{\textrm{fin}}(A) \rightarrow B$ and $f : \mathcal{M}_{\textrm{fin}}(B) \rightarrow A$ of the category $\mathbf{Rel}$ defined by
\mbox{$g = \{ ([\alpha], h(\alpha)) \/ / \/ \alpha \in A \}$} and $ f = \{ ([h(\alpha)], \alpha) \/ / \/ \alpha \in A \}$.
Then $ ( g ) \in \Lambda(\mathfrak{M})(( A ), ( B )) $ and $ ( f ) $ is a retraction of $ ( g ) $ in $ \Lambda(\mathfrak{M}) $.
\end{lemme}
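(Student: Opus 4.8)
The plan is to show that $(f) \circ_{\Lambda(\mathfrak{M})} (g) = \mathrm{id}_{(B)}$ by a direct computation using the explicit description of composition in $\Lambda(\mathfrak{M})$ given above (the case $m = p = q = 1$, so composition is just the first displayed formula with all sequences of length one). Write $g = \{ ([\alpha], h(\alpha)) \mid \alpha \in A \}$ as a subset of $\mathcal{M}_{\mathrm{fin}}(A) \times B$ and $f = \{ ([h(\alpha)], \alpha) \mid \alpha \in A \}$ as a subset of $\mathcal{M}_{\mathrm{fin}}(B) \times A$. First I would note that for $(f)$ and $(g)$ to be arrows of $\Lambda(\mathfrak{M})$ there is nothing to check beyond that they are subsets of the appropriate products, which is immediate.

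Next I would unfold the composition formula. An element of $(f) \circ_{\Lambda(\mathfrak{M})} (g)$ is a pair $\big((\sum_{i=1}^{n} a^{i}), \beta\big)$ — here the indexing collapses since $p = 1$ — such that there exist $\beta_1, \ldots, \beta_n \in B$ with $([\beta_1, \ldots, \beta_n], \beta) \in f$ and $(a^{i}, \beta_i) \in g$ for each $i$. Now the shape of $f$ forces $n = 1$: the first component of any element of $f$ is a singleton multiset, so $n = 1$, $\beta_1 = h(\alpha)$ and $\beta = \alpha$ for some $\alpha \in A$. The condition $(a^{1}, \beta_1) \in g$ then forces $a^{1} = [\alpha']$ and $\beta_1 = h(\alpha')$ for some $\alpha' \in A$; since $h(\alpha') = \beta_1 = h(\alpha)$ and $h$ is injective, $\alpha' = \alpha$, hence $a^{1} = [\alpha]$ and the pair is $([\alpha], \alpha)$. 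Conversely, for every $\alpha \in A$, taking $n = 1$, $a^1 = [\alpha]$, $\beta_1 = h(\alpha)$, $\beta = \alpha$ witnesses that $([\alpha], \alpha)$ belongs to the composite. Therefore $(f) \circ_{\Lambda(\mathfrak{M})} (g) = \{ ([\alpha], \alpha) \mid \alpha \in A \}$, which is exactly $d^1$, i.e.\ the identity $\mathrm{id}_{(B)}$ of the one-element sequence $(B)$ in $\Lambda(\mathfrak{M})$.

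The only point requiring any care — and the one place the hypothesis that $h$ is an injection (rather than an arbitrary function) is used — is the uniqueness step: when matching $\beta_1 = h(\alpha) = h(\alpha')$ one must conclude $\alpha = \alpha'$ so that the multiset $a^1$ recovered on the source side is $[\alpha]$ and not some other $[\alpha']$. Without injectivity one would get extra pairs $([\alpha'], \alpha)$ in the composite and it would fail to be the identity. Everything else is bookkeeping with the (already collapsed) composition formula, so I would keep the write-up to the two inclusions above.
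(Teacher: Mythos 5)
Your proof is correct and is essentially the computation the paper leaves implicit: the paper simply states $(f) \circ_{\Lambda(\mathfrak{M})} (g) = (f \circ T(g) \circ \delta_A) = (d_A)$ and calls it an easy computation, and your unfolding of the composition formula --- the singleton first component of $f$ forcing $n=1$, and injectivity of $h$ forcing $\alpha'=\alpha$ --- is exactly that computation made explicit. One slip in the last line: the composite is an endomorphism of $(A)$, so the set $\{([\alpha],\alpha) \: / \: \alpha \in A\}$ you obtain is $\mathrm{id}_{(A)}$ (i.e.\ $(d_A)$), not $\mathrm{id}_{(B)}$.
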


\begin{proof}
An easy computation shows that we have
\begin{eqnarray*}
( f ) \circ_{\Lambda(\mathfrak{M})} ( g ) & = & ( f \circ T(g) \circ \delta_A ) \\
& = & ( d_A ) .
\end{eqnarray*}
\end{proof}

If $ D $ is a set, then $ ( D ) \Rightarrow ( D ) = ( \mathcal{M}_{\textrm{fin}}(D) \times D ) $. From now on, we assume that $ D $ is a non-empty set and that $h$ is an injection from $ \mathcal{M}_{\textrm{fin}}(D) \times D $ to $ D $. Set 
$$ g = \{ ([\alpha], h(\alpha)) \: / \: \alpha \in \mathcal{M}_{\textrm{fin}}(D) \times D \} : \mathcal{M}_{\textrm{fin}}(\mathcal{M}_{\textrm{fin}}(D) \times D) \rightarrow D \textrm{ in } \mathbf{Rel} $$
and 
$$ f = \{ ([h(\alpha)], \alpha) \: / \: \alpha \in \mathcal{M}_{\textrm{fin}}(D) \times D \} : \mathcal{M}_{\textrm{fin}}(D) \rightarrow \mathcal{M}_{\textrm{fin}}(D) \times D \textrm{ in } \mathbf{Rel} . $$ 
We have 
$$ (( D ) \Rightarrow ( D )) \lhd ( D ) \text{ in the category } \Lambda(\mathfrak{M})$$
and, more precisely: \mbox{$ ( g ) \in \Lambda(\mathfrak{M})(( D ) \Rightarrow ( D ), ( D )) $} and $(f)$ is a retraction of $(g)$.

We can therefore define the interpretation of any $\lambda$-term.

\begin{definition}\label{def : [t]}
For any $\lambda$-term $ t $ possibly containing constants from $ \mathcal{P}(D) $, for any $ x_1, \ldots, x_m \in \mathcal{V} $ distinct such that $ FV(t) \subseteq \{ x_1, \ldots, x_m \} $, we define, by induction on $ t $, $ \llbracket t \rrbracket_{x_1, \ldots, x_m} \subseteq (\prod_{j=1}^m \mathcal{M}_{\textrm{fin}}(D)) \times D$:
\begin{itemize}
\item $ \llbracket x_j \rrbracket_{x_1, \ldots, x_m} = \{ ((\underbrace{[], \ldots, []}_{j-1 \textrm{ times}}, [\alpha], \underbrace{[], \ldots, []}_{m-j \textrm{ times}}), \alpha) \: / \: \alpha \in D \}$;
\item for any $ c \in \mathcal{P}(D) $, $ \llbracket c \rrbracket_{x_1, \ldots, x_m} = (\prod_{j=1}^m \mathcal{M}_{\textrm{fin}}(D)) \times c$;
\item $ \llbracket \lambda x. u \rrbracket_{x_1, \ldots, x_m} = \{ ((a_1, \ldots, a_m), h(a, \alpha)) \: / \: ((a_1, \ldots, a_m, a), \alpha) \in \llbracket u \rrbracket_{x_1, \ldots, x_m, x} \} $;
\item the value of $\llbracket (v)u \rrbracket_{x_1, \ldots, x_m}$ is 
$$\bigcup_{n \in \mathbb{N}} \bigcup_{\alpha_1, \ldots, \alpha_n \in D} \left\lbrace \begin{array}{l} ((\sum_{i=0}^n a_1^i, \ldots, \sum_{i=0}^n a_m^i), \alpha) \: / \: \\
\begin{array}{ll}
& ((a_1^0, \ldots, a_m^0), h([\alpha_1, \ldots, \alpha_n],\alpha)) \in \llbracket v \rrbracket_{x_1, \ldots, x_m} \\
\textrm{ and} & (\forall i \in \{ 1, \ldots, n \}) ((a_1^i, \ldots, a_m^i), \alpha_i) \in \llbracket u \rrbracket_{x_1, \ldots, x_m} \end{array} \end{array} \right\rbrace ; $$
\end{itemize}
with the conventions $ (\prod_{j=1}^m \mathcal{M}_{\textrm{fin}}(D)) \times D=D$ and $ ((a_1, \ldots, a_m), \alpha) = \alpha $ if $ m = 0 $.
\end{definition}

Now, we can define the interpretation of any $\lambda$-term in any environment.

\begin{definition}\label{Def : [t]_rho}
For any $ \rho \in {\mathcal{P}(D)}^\mathcal{V} $ and for any $\lambda$-term $ t $ possibly containing constants from $ \mathcal{P}(D) $ such that $ FV(t) = \{ x_1, \ldots, x_m \} $, we set 
$$ \llbracket t \rrbracket_\rho = \bigcup_{a_1 \in \mathcal{M}_{\textrm{fin}}(\rho(x_1)), \ldots, a_m \in \mathcal{M}_{\textrm{fin}}(\rho(x_m))} 
\{ \alpha \in D \; / \; ((a_1, \ldots, a_m), \alpha) \in \llbracket t \rrbracket_{x_1, \ldots, x_m}) \} . $$
\end{definition}

For any $ d_1, d_2 \in \mathcal{P}(D) $, we set 
$$ d_1 \ast d_2 = \bigcup_{a \in \mathcal{M}_{\textrm{fin}}(d_2)} \{ \alpha \in D \; / \; h(a, \alpha) \in d_1 \} . $$
We have

\begin{prop}
The triple $(\mathcal{P}(D), \ast, \llbracket - \rrbracket_-)$ is a $\lambda$-algebra.
\end{prop}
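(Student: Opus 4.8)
The plan is to check, directly from Definitions~\ref{def : [t]} and~\ref{Def : [t]_rho}, the defining clauses of a $\lambda$-algebra (in the sense of, e.g., \cite{Barendregt}): $\llbracket x\rrbracket_\rho=\rho(x)$; $\llbracket(v)u\rrbracket_\rho=\llbracket v\rrbracket_\rho\ast\llbracket u\rrbracket_\rho$; $\llbracket\lambda x.u\rrbracket_\rho\ast d=\llbracket u\rrbracket_{\rho[x:=d]}$; invariance of $\llbracket t\rrbracket_\rho$ under $\alpha$-conversion of $t$ and under modifying $\rho$ outside $FV(t)$; and the ensuing soundness under $\beta$-conversion. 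Since $\llbracket-\rrbracket$ is nothing but the categorical interpretation induced by the cartesian closed structure of Proposition~\ref{prop : CCC} together with the retraction pair $((g),(f))$ making $(D)$ a reflexive object, one could instead quote the general fact that a reflexive object in a cartesian closed category yields a $\lambda$-algebra (see \cite{these}); but since the explicit relational formulas are already spelled out, the hands-on verification is short enough.

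First I would prove a \emph{weakening lemma} by induction on $t$: if $y\notin FV(t)$, then $((a_1,\ldots,a_m,a),\alpha)\in\llbracket t\rrbracket_{x_1,\ldots,x_m,y}$ if and only if $a=[]$ and $((a_1,\ldots,a_m),\alpha)\in\llbracket t\rrbracket_{x_1,\ldots,x_m}$, and more generally that $\llbracket t\rrbracket$ is equivariant under permutations of the variable list. This immediately yields that $\llbracket t\rrbracket_\rho$ in Definition~\ref{Def : [t]_rho} is independent of the chosen enumeration of $FV(t)$ and depends only on $\rho\restriction FV(t)$, and it reduces the clauses $\llbracket x\rrbracket_\rho=\rho(x)$ and $\llbracket(v)u\rrbracket_\rho=\llbracket v\rrbracket_\rho\ast\llbracket u\rrbracket_\rho$ to unwinding the relevant definitions (the latter by matching the union in the application clause of Definition~\ref{def : [t]} against the definition of $\ast$). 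Invariance under $\alpha$-conversion is a similarly routine induction once bound variables are suitably renamed apart.

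The technical core is a \emph{substitution lemma}: for $u$ with free variables among $x_1,\ldots,x_m,x$ and $v$ with free variables among $x_1,\ldots,x_m$, the interpretation $\llbracket u[v/x]\rrbracket_{x_1,\ldots,x_m}$ is obtained from $\llbracket u\rrbracket_{x_1,\ldots,x_m,x}$ and $\llbracket v\rrbracket_{x_1,\ldots,x_m}$ by the very "sum over $n$ copies" pattern appearing in the application clause of Definition~\ref{def : [t]} --- the multiset recorded in the $x$-component of a point of $\llbracket u\rrbracket$ enumerates the demands made on $x$, each of which must be witnessed by a point of $\llbracket v\rrbracket$, the corresponding component-wise sums being formed with the $+$ of multisets. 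This is proved by induction on $u$, the variable case $u=x$ using the singleton shape of $\llbracket x\rrbracket$, and the application and abstraction cases using the associativity and commutativity of $+$ to regroup copies, together with the weakening lemma in the abstraction case. Combining the substitution lemma with the abstraction clause of Definition~\ref{def : [t]} and the injectivity of $h$ --- the single place where injectivity of $h$, which is precisely the hypothesis of Lemma~\ref{lemma : retraction} that makes $(D)$ a reflexive object, is used: it forces $h([\alpha_1,\ldots,\alpha_n],\alpha)=h(a,\beta)$ to imply $a=[\alpha_1,\ldots,\alpha_n]$ and $\beta=\alpha$ --- one reads off $\llbracket(\lambda x.u)v\rrbracket_{x_1,\ldots,x_m}=\llbracket u[v/x]\rrbracket_{x_1,\ldots,x_m}$, and in the same way the clause $\llbracket\lambda x.u\rrbracket_\rho\ast d=\llbracket u\rrbracket_{\rho[x:=d]}$; soundness under arbitrary $\beta$-conversion then follows by induction on the reduction (using the weakening lemma to pass to subterms), so all the clauses hold and the triple is a $\lambda$-algebra. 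I expect the only real obstacle to be the bookkeeping in the substitution lemma: tracking the decomposition of each multiset component into the part coming from $u$ and the $n$ parts coming from the copies of $v$, and verifying that the induction goes through cleanly in the abstraction case, where a fresh variable is appended to the list and the weakening lemma must be invoked to realign components.
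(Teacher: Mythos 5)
Your proof is correct, but it takes a genuinely different route from the paper. The paper's proof is a one‑line appeal to general theory: Proposition~\ref{prop : CCC} establishes that $\Lambda(\mathfrak{M})$ is cartesian closed, Lemma~\ref{lemma : retraction} makes $(D)$ a reflexive object, and Theorem~5.5.6 of \cite{Barendregt} (a reflexive object in a cartesian closed category yields a $\lambda$-algebra) does the rest --- exactly the alternative you mention in passing and then set aside. You instead verify the syntactical $\lambda$-algebra clauses directly on the relational formulas of Definitions~\ref{def : [t]} and~\ref{Def : [t]_rho}, via a weakening lemma and a substitution lemma. Your route is self-contained and has the merit of isolating precisely where the injectivity of $h$ is used (to invert the abstraction clause when computing $\llbracket \lambda x.u \rrbracket_\rho \ast d$ and $\llbracket (\lambda x. u)v \rrbracket_{x_1, \ldots, x_m}$); the paper's route is shorter and reuses structure it needs anyway, the substitution lemma being absorbed into the proof that $\Lambda(\mathfrak{M})$ is cartesian closed. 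One point in your argument deserves emphasis: since Proposition~\ref{prop : not a lambda-model} shows the $\xi$-rule fails, the congruence step under $\lambda$ in your induction on $\beta$-conversion cannot proceed from equality of $\llbracket u \rrbracket_{\rho[x:=d]}$ for all $d$; it must proceed, as you in fact arrange, from equality of the full relations $\llbracket u \rrbracket_{x_1, \ldots, x_m, x}$, which the abstraction clause of Definition~\ref{def : [t]} does preserve. With that understood, your bookkeeping in the substitution lemma is the only remaining labour, and it goes through as you describe.
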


\begin{proof}
Apply our Proposition \ref{prop : CCC} and Lemma \ref{lemma : retraction}, and Theorem 5.5.6 of \cite{Barendregt}. 
\end{proof}

But the following proposition, a corollary of Proposition \ref{prop : not enough points}, states that the triple $(\mathcal{P}(D), \ast, \llbracket - \rrbracket_-)$ \emph{is not} a $\lambda$-model. We recall (see, for instance, \cite{Barendregt}), that a $\lambda$-model is a $\lambda$-algebra $(\mathcal{D}, \ast, \llbracket - \rrbracket_-)$ such that the following property, expressing the $\xi$-rule, holds: 

for any $ \rho \in \mathcal{D}^\mathcal{V} $, for any $ x \in \mathcal{V} $ and for any $\lambda$-terms $ t_1 $ and $ t_2 $, we have 
$$ ((\forall d \in \mathcal{D}) \llbracket t_1 \rrbracket_{\rho[x:=d]} = \llbracket t_2 \rrbracket_{\rho[x:=d]} \Rightarrow \llbracket \lambda x. t_1 \rrbracket_\rho = \llbracket \lambda x. t_2 \rrbracket_\rho) . $$

\begin{prop}\label{prop : not a lambda-model}
The $\lambda$-algebra $(\mathcal{P}(D), \ast, \llbracket - \rrbracket_-)$ is not a $\lambda$-model. 

In other words, there exist $ \rho \in {\mathcal{P}(D)}^\mathcal{V} $, $ x \in \mathcal{V} $ and two $\lambda$-terms $ t_1 $ and $ t_2 $ such that
$$ ((\forall d \in \mathcal{P}(D)) \llbracket t_1 \rrbracket_{\rho[x:=d]} = \llbracket t_2 \rrbracket_{\rho[x:=d]} \textrm{ and } \llbracket \lambda x. t_1 \rrbracket_\rho \not= \llbracket \lambda x. t_2 \rrbracket_\rho) . $$
\end{prop}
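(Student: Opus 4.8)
The plan is to exploit \emph{non-uniformity}: under $\lambda x$, the interpretation of a term records, for \emph{each} finite multiset $a$ of ``values'' of $x$, the result it produces, whereas the value of a $\lambda$-term in the $\lambda$-algebra $(\mathcal{P}(D), \ast, \llbracket - \rrbracket_-)$ at an argument $d \in \mathcal{P}(D)$ only records the union of these results over $a \in \mathcal{M}_{\textrm{fin}}(d)$ --- so $d$, being a set, cannot see the difference between passing a multiset and passing a ``fatter'' multiset with the same support. Concretely, I would take the witnesses of Proposition~\ref{prop : not enough points}; its proof shows they can be chosen inside $\textsf{Im}(h) = \{ h(a, \alpha) \: / \: a \in \mathcal{M}_{\textrm{fin}}(D),\ \alpha \in D \}$, namely $d_1 = \{ h([\gamma_0], \beta_0) \}$ and $d_2 = \{ h([\gamma_0, \gamma_0], \beta_0) \}$ for any $\gamma_0, \beta_0 \in D$ (which exist since $D \not= \emptyset$). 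Since $h$ is injective and $[\gamma_0] \not= [\gamma_0, \gamma_0]$ we have $d_1 \not= d_2$; and for every $e \in \mathcal{P}(D)$ one checks $d_1 \ast e = d_2 \ast e$ directly from the definition of $\ast$ (both equal $\{ \beta_0 \}$ when $\gamma_0 \in e$ and $\emptyset$ otherwise, because $[\gamma_0] \in \mathcal{M}_{\textrm{fin}}(e)$ iff $\gamma_0 \in e$ iff $[\gamma_0, \gamma_0] \in \mathcal{M}_{\textrm{fin}}(e)$). Now I fix three distinct variables $x, y, z$, let $\rho \in {\mathcal{P}(D)}^\mathcal{V}$ be any environment with $\rho(y) = d_1$ and $\rho(z) = d_2$, and set $t_1 = (y)x$ and $t_2 = (z)x$.

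For the pointwise equality I would use that $\llbracket (v)u \rrbracket_\sigma = \llbracket v \rrbracket_\sigma \ast \llbracket u \rrbracket_\sigma$ for every environment $\sigma$ --- a direct consequence of Definition~\ref{def : [t]}, Definition~\ref{Def : [t]_rho} and the definition of $\ast$, and one of the equations witnessing that $(\mathcal{P}(D), \ast, \llbracket - \rrbracket_-)$ is a $\lambda$-algebra --- together with $\llbracket w \rrbracket_\sigma = \sigma(w)$ for a variable $w$. Then, for every $d \in \mathcal{P}(D)$,
$$ \llbracket (y)x \rrbracket_{\rho[x := d]} = \rho(y) \ast d = d_1 \ast d = d_2 \ast d = \rho(z) \ast d = \llbracket (z)x \rrbracket_{\rho[x := d]} . $$

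It remains to separate $\llbracket \lambda x. (y)x \rrbracket_\rho$ and $\llbracket \lambda x. (z)x \rrbracket_\rho$, and this is where the only real computation lies. Starting from $\llbracket y \rrbracket_{y,x} = \{ (([\alpha],[]), \alpha) \: / \: \alpha \in D \}$ and $\llbracket x \rrbracket_{y,x} = \{ (([],[\alpha]), \alpha) \: / \: \alpha \in D \}$, unfolding the clause of Definition~\ref{def : [t]} for application gives $\llbracket (y)x \rrbracket_{y,x} = \{ (([h(m, \alpha)], m), \alpha) \: / \: m \in \mathcal{M}_{\textrm{fin}}(D),\ \alpha \in D \}$, whence the clause for abstraction yields $\llbracket \lambda x. (y)x \rrbracket_y = \{ ([h(a, \alpha)], h(a, \alpha)) \: / \: a \in \mathcal{M}_{\textrm{fin}}(D),\ \alpha \in D \} = \{ ([\beta], \beta) \: / \: \beta \in \textsf{Im}(h) \}$; that is, $\lambda x. (y)x$ denotes the $\eta$-expansion of $y$ cut down to $\textsf{Im}(h)$, reflecting the fact that $(f)$ is only a retraction of $(g)$, not an inverse. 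Applying Definition~\ref{Def : [t]_rho} then gives $\llbracket \lambda x. (y)x \rrbracket_\rho = \rho(y) \cap \textsf{Im}(h)$ and, identically, $\llbracket \lambda x. (z)x \rrbracket_\rho = \rho(z) \cap \textsf{Im}(h)$. Since $d_1, d_2 \subseteq \textsf{Im}(h)$, these are $d_1$ and $d_2$, which are distinct, so the $\xi$-rule fails for $\rho$, $x$, $t_1$, $t_2$. The main obstacle is just carrying out the computation of $\llbracket \lambda x.(y)x \rrbracket_y$ correctly from the definitions; everything else is bookkeeping, and the conceptual heart --- visible already in the computation of $d_i \ast e$ --- is that a set argument cannot distinguish $[\gamma_0]$ from $[\gamma_0, \gamma_0]$, whereas the relation interpreting $\lambda x. t_i$ does.
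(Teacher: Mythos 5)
Your proposal is correct and follows exactly the ``more direct proof'' the paper itself sketches after Proposition~\ref{prop : not enough points}: the same terms $t_1=(y)x$, $t_2=(z)x$ and the same environment $\rho(y)=\{h([\gamma_0],\beta_0)\}$, $\rho(z)=\{h([\gamma_0,\gamma_0],\beta_0)\}$. You merely carry out in full the computations ($d_1\ast e=d_2\ast e$ and $\llbracket\lambda x.(y)x\rrbracket_\rho=\rho(y)\cap\textsf{Im}(h)$) that the paper leaves implicit, and they check out.
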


In particular, $ \llbracket t \rrbracket_\rho $ \emph{can not be} defined by induction on $ t $ (an interpretation by polynomials is nevertheless possible in such a way that the $\xi$-rule holds - see \cite{selinger}).

Before stating Proposition \ref{prop : not enough points}, we recall that any object $ A $ of any category $ \mathbb{K} $ with a terminal object is said \emph{to have enough points} if for any terminal object $ 1 $ of $ \mathbb{K} $ and for any $y, z \in \mathbb{K}(A, A) $, we have $ ((\forall x \in \mathbb{K}(1, A)) y \circ_\mathbb{K} x = z \circ_\mathbb{K} x \Rightarrow y=z) . $

Remark: it does not follow necessarily that the same holds for any $ y, z \in \mathbb{K}(A, B) $.

\begin{prop}\label{prop : not enough points}
Let $ A $ be a non-empty set. Then $ ( A ) $ does not have enough points in $ \Lambda(\mathfrak{M}) $.
\end{prop}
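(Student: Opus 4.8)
The plan is to exhibit two arrows $y, z \in \Lambda(\mathfrak{M})((A), (A))$ that agree on every point $x \in \Lambda(\mathfrak{M})((\,), (A))$ but differ as arrows. An arrow $(\,) \to (A)$ in $\Lambda(\mathfrak{M})$ is (by the convention $\prod_{j=1}^0 \mathcal{M}_{\textrm{fin}}(A_j) \times A = A$) just a subset of $A$, i.e.\ an element of $\mathcal{P}(A)$; and an arrow $(A) \to (A)$ is a subset of $\mathcal{M}_{\textrm{fin}}(A) \times A$. Composition $y \circ_{\Lambda(\mathfrak{M})} x$ for such $x \subseteq A$ and $y \subseteq \mathcal{M}_{\textrm{fin}}(A) \times A$ unwinds, via the explicit composition formula, to
$$ y \circ_{\Lambda(\mathfrak{M})} x = \{ \gamma \in A \; / \; (\exists a \in \mathcal{M}_{\textrm{fin}}(x))\ (a, \gamma) \in y \} , $$
that is, $y$ applied pointwise to the ``downward-closed-by-multisets'' image of the subset $x$. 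The key observation is that this composite only sees, for each $\gamma$, \emph{whether some multiset $a$ with $\mathsf{Supp}(a) \subseteq x$ is related to $\gamma$}; it cannot distinguish two arrows $y, z$ that have the same such ``flattened'' behaviour on every subset of $A$ but differ on the multiplicities inside a fixed multiset.

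Concretely, fix any $\alpha_0 \in A$ (possible since $A \neq \emptyset$) and set $y = \{([\alpha_0], \alpha_0)\}$ and $z = \{([\alpha_0, \alpha_0], \alpha_0)\}$. For any $x \subseteq A$: if $\alpha_0 \in x$ then both composites equal $\{\alpha_0\}$ (for $y$ take $a = [\alpha_0]$; for $z$ take $a = [\alpha_0,\alpha_0]$, which has support $\subseteq x$); if $\alpha_0 \notin x$ then no multiset with support $\subseteq x$ can be $[\alpha_0]$ or $[\alpha_0,\alpha_0]$, so both composites are $\emptyset$. Hence $y \circ_{\Lambda(\mathfrak{M})} x = z \circ_{\Lambda(\mathfrak{M})} x$ for all points $x$, yet $y \neq z$ as subsets of $\mathcal{M}_{\textrm{fin}}(A) \times A$. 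This shows $(A)$ does not have enough points.

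The steps, in order: (1) identify $\Lambda(\mathfrak{M})((\,),(A))$ with $\mathcal{P}(A)$ and $\Lambda(\mathfrak{M})((A),(A))$ with $\mathcal{P}(\mathcal{M}_{\textrm{fin}}(A) \times A)$, using the $m=0$ conventions; (2) specialize the composition formula of Subsection~\ref{subsection:typed} to the case $p=1$, $m=0$, $q=1$ to get the flattening formula for $y \circ_{\Lambda(\mathfrak{M})} x$ above — this is the one genuine computation, and is the step most likely to need care, since the general formula is notation-heavy and one must check the indices collapse correctly when $m=0$ and all $n_k \le$ the relevant bound; (3) verify the empty sequence $(\,)$ is indeed terminal in $\Lambda(\mathfrak{M})$ (immediate, or cite Proposition~\ref{prop : CCC}); (4) plug in the two chosen arrows $y,z$ and check the two cases $\alpha_0 \in x$, $\alpha_0 \notin x$. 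The main obstacle is purely bookkeeping in step (2); conceptually the point is simply that relational composition through the comonadic multiset construction forgets multiplicities once you restrict the support to a set, which is exactly the failure of pointwise extensionality.
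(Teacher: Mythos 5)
Your proof is correct and uses exactly the same witnesses as the paper's own proof ($y=\{([\alpha_0],\alpha_0)\}$ and $z=\{([\alpha_0,\alpha_0],\alpha_0)\}$, with points identified with subsets of $A$); the paper merely states the final equality of composites without spelling out the flattening computation that you carry out explicitly. No gap.
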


\begin{proof}
Let $ \alpha \in A$. Let $y$ and $z$ be the arrows $\mathcal{M}_{\textrm{fin}}(A) \rightarrow A$ of the category $\mathbf{Rel}$ defined by 
$y = \{ ([\alpha], \alpha) \}$ and $z = \{ ([\alpha, \alpha], \alpha) \}$.
Then $( y )$ and $( z )$ are two arrows $( A ) \rightarrow ( A )$ of the category $\Lambda(\mathfrak{M})$.

We recall that the terminal object in $\Lambda(\mathfrak{M})$ is the empty sequence $( )$. Now, for any arrow $x : ( ) \rightarrow ( A )$ of the category $\Lambda(\mathfrak{M})$, we have $( y ) \circ_{\Lambda(\mathfrak{M})} x = ( z ) \circ_{\Lambda(\mathfrak{M})} x$. 
\end{proof}

This proposition explains \emph{why} Proposition \ref{prop : not a lambda-model} holds. A more direct proof of Proposition \ref{prop : not a lambda-model} can be obtained by considering the two $\lambda$-terms $ t_1 = (y)x $ and $ t_2 = (z)x $ with $ \rho(y) = \{ ([\alpha], \alpha) \} $ and $ \rho(z) = \{ ([\alpha, \alpha], \alpha) \} $.

\subsection{Non-uniformity}\label{subsection:non uniformity}

Example \ref{example : non-uniformity} illustrates the non-uniformity of the semantics. It is based on the following idea. 

Consider the program

$\begin{array}{lll}
\lambda x. \textsf{if } x & \textsf{then } \mathbf{1} & \\
& \textsf{else if } x & \textsf{then } \mathbf{1} \\
& & \textsf{else } \mathbf{0}
\end{array}$

applied to a boolean. The second \textsf{then} is never read. A \emph{uniform} semantics would ignore it. It is not the case when the semantics is \emph{non-uniform}.

\begin{ex}\label{example : non-uniformity}
Set $ \mathbf{0} = \lambda x. \lambda y. y $ and $ \mathbf{1} = \lambda x. \lambda y. x $. Assume that $h$ is the inclusion from $\mathcal{M}_{\textrm{fin}}(D) \times D$ to $D$. 

Let $ \gamma \in D $; set $\delta = ([], ([\gamma], \gamma))$ and $ \beta = ([\gamma], ([], \gamma)) $. 
We have
\begin{itemize}
\item $([([], ([\delta], \delta))], ([\delta], \delta)) \in \llbracket (x) \mathbf{1} \rrbracket_x$; 
\item and $([([], ([\delta], \delta))], \delta) \in \llbracket (x) \mathbf{1} \mathbf{0} \rrbracket_x$.
\end{itemize}
Hence we have $\alpha_1 = ([([], ([\delta], \delta)), ([], ([\delta], \delta))], \delta) \in \llbracket \lambda x. (x) \mathbf{1}(x)\mathbf{1} \mathbf{0} \rrbracket . $ 

We have
\begin{itemize}
\item $ ([([], ([\beta], \beta))], ([\beta], \beta)) \in \llbracket (x) \mathbf{1} \rrbracket_x $;
\item and $ ([([\beta], ([], \beta))], \beta) \in \llbracket (x) \mathbf{1} \mathbf{0} \rrbracket_x . $
\end{itemize}
Hence we have $\alpha_2 = ([([], ([\beta], \beta)), ([\beta], ([], \beta))], \beta) \in \llbracket \lambda x. (x) \mathbf{1}(x)\mathbf{1} \mathbf{0} \rrbracket . $ 

In a uniform semantics (as in \cite{systemF}), the point $\alpha_1$ would appear in the semantics of this $\lambda$-term, but not the point $\alpha_2$, because $ [([], ([\beta], \beta)), ([\beta], ([], \beta))] $ corresponds to a chimerical argument: the argument is read twice and provides two contradictory values.
\end{ex}

\section{Non-idempotent intersection types}\label{section : System R}

From now on, $ D = \bigcup_{n \in \mathbb{N}} D_n $, where $ D_n $ is defined by induction on $ n $: $ D_0 $ is a non-empty set $ A $ that does not contain any pairs and $ D_{n+1} = A \cup (\mathcal{M}_{\textrm{fin}}(D_n) \times D_n)$. We have $D = A \dot{\cup} (\mathcal{M}_{\textrm{fin}}(D) \times D) $, where $ \dot{\cup}$ is the disjoint union; the injection $h$ from $\mathcal{M}_{\textrm{fin}}(D) \times D$ to $D$ will be the inclusion. Hence any element of $D$ can be written $a_1 \ldots a_m \alpha$, where $a_1, \ldots, a_m \in \mathcal{M}_{\textrm{fin}}(D)$, $\alpha \in D$ and $a_1 \ldots a_m \alpha$ is defined by induction on $m$:
\begin{itemize}
\item $a_1 \ldots a_0 \alpha = \alpha$;
\item $a_1 \ldots a_{m+1} \alpha = (a_1 \ldots a_m, (a_{m+1}, \alpha))$.
\end{itemize}

For any $\alpha \in D$, we denote by $\textsf{depth}(\alpha)$ the least integer $n$ such that $\alpha \in D_n$.

In the preceding section, we defined the semantics we consider (Definitions \ref{def : [t]} and 
\ref{Def : [t]_rho}). 
Now, we want to describe this semantics as a logical system: the elements of $ D $ are viewed as propositional formulas. More precisely, a comma separating a multiset of types and a type is understood as an arrow and a non-empty multiset is understood as the conjunction of its elements (their intersection). Note that this means we are considering a commutative (but not necessarily idempotent) intersection.

\subsection{System $R$}\label{subsection:System R}

A \emph{context} $ \Gamma $ is a function from $ \mathcal{V} $ to $\mathcal{M}_{\textrm{fin}}(D) $ such that $ \{ x \in \mathcal{V} \; / \; \Gamma(x) \not= [] \} $ is finite. If $ x_1, \ldots, x_m \in \mathcal{V} $ are distinct and $ a_1, \ldots, a_m \in \mathcal{M}_{\textrm{fin}}(D) $, then $ x_1 : a_1, \ldots, x_m : a_m $ denotes the context defined by 
$x \mapsto \left\lbrace \begin{array}{ll} a_j & \textrm{if $x = x_j$;}\\ \textrm{[]} & \textrm{else.}  \end{array} \right.$ 
We denote by $ \Phi $ the set of contexts. We define the following binary operation on $\Phi$: 
$$\begin{array}{rcl} \Phi \times \Phi & \rightarrow & \Phi \\ (\Gamma_1, \Gamma_2) & \mapsto & \Gamma_1 + \Gamma_2 : \begin{array}{rcl} \mathcal{V} & \rightarrow & \mathcal{M}_\textsf{fin}(D) \\ x & \mapsto & \Gamma_1(x)+\Gamma_2(x), \end{array} \end{array}$$ where the second $+$ denotes the sum of multisets given by term-by-term addition of multiplicities.
Note that this operation is associative and commutative. 
Typing rules concern judgements of the form $\Gamma \vdash_R t : \alpha$, where $\Gamma \in \Phi$, $t$ is a $\lambda$-term and $\alpha \in D$.

\begin{definition}\label{definition:System R}
The typing rules of System $ R $ are the following:

\begin{center}
\AxiomC{}
\UnaryInfC{$ x : [\alpha] \vdash_R x : \alpha $}
\DisplayProof
\end{center}
\begin{center}
\AxiomC{$ \Gamma, x : a \vdash_R v : \alpha $}
\UnaryInfC{$ \Gamma \vdash_R \lambda x. v : (a, \alpha) $}
\DisplayProof
\end{center}
\begin{center}
\AxiomC{$ \Gamma_0 \vdash_R v : ([\alpha_1, \ldots, \alpha_n], \alpha) 
$}
\AxiomC{$ \Gamma_1 \vdash_R u : \alpha_1, \ldots, \Gamma_n \vdash_R u :
  \alpha_n $}
\RightLabel{$ n \in \mathbb{N} $}
\BinaryInfC{$ \Gamma_0 + \Gamma_1 + \ldots + \Gamma_n \vdash_R (v)u :
  \alpha $}
\DisplayProof
\end{center}
\end{definition}

The typing rule of the application has $ n+1 $ premisses. In particular, in the case where $ n = 0 $, we obtain the following rule:  \AxiomC{$ \Gamma_0 \vdash_R v : ([],\alpha) $} \UnaryInfC{$ \Gamma_0 \vdash_R (v)u : \alpha $} \DisplayProof for any $\lambda$-term $ u $. So, the empty multiset plays the role of the universal type $ \Omega $.

The intersection we consider is \emph{not} idempotent in the following sense: if a closed $\lambda$-term $ t $ has the type $ a_1 \ldots a_m \alpha $ and, for $ 1 \leq j \leq m $, $ \textsf{Supp}(a'_j) = \textsf{Supp}(a_j) $, it does not follow necessarily that $ t $ has the type $ a'_1 \ldots a'_m \alpha $. 
For instance, the $\lambda$-term $\lambda z. \lambda x. (z)x$ has types $( [ ([\alpha], \alpha) ], ([\alpha], \alpha) )$ and $( [ ([\alpha, \alpha], \alpha) ], ([\alpha, \alpha], \alpha) )$ but not the type $( [ ([\alpha], \alpha) ], ([\alpha, \alpha], \alpha) )$. 
On the contrary, the system presented in \cite{Ronchi} and the System $ \mathcal{D} $ presented in \cite{typesetmodeles} consider an idempotent intersection. System $ \lambda $ of \cite{Kfoury} and System $ \mathbb{I} $ of \cite{NeeMairson} consider a non-idempotent intersection, but the treatment of weakening is not the same.

Interestingly, System $ R $ can be seen as a reformulation of the system of \cite{CDV}. More precisely, types of System $ R $ correspond to their normalized types. 

\subsection{Relating types and semantics}\label{subsection : relating types and semantics}

We prove in this subsection that the semantics of a closed $\lambda$-term as defined in Subsection \ref{subsection:untyped} is the set of its types in System $R$. The following assertions relate more precisely types and semantics of any $\lambda$-term.

\begin{theorem}\label{Theo : semantique = types}
For any $\lambda$-term $ t $ such that $ FV(t) \subseteq \{ x_1, \ldots, x_m \} $, we have
$$ \llbracket t \rrbracket_{x_1, \ldots, x_m} = \{ a_1 \ldots a_m \alpha \in (\prod_{j=1}^m \mathcal{M}_{\textrm{fin}}(D)) \times D \; / \; x_1 : a_1, \ldots, x_m : a_m \vdash_R 
t : \alpha \}  . $$
\end{theorem}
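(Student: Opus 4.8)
The plan is to prove the statement by structural induction on the $\lambda$-term $t$, showing the two inclusions simultaneously at each step. The semantics $\llbracket t \rrbracket_{x_1,\ldots,x_m}$ is defined by induction on $t$ in Definition~\ref{def : [t]} (with the constant case omitted here, since we deal with genuine $\lambda$-terms), and the typing relation $\vdash_R$ is also essentially syntax-directed: each of the three rules of System~$R$ corresponds to one of the three syntactic forms of a $\lambda$-term. So the proof amounts to checking, case by case, that the clause defining $\llbracket - \rrbracket$ matches exactly the set of derivable judgements.

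First I would treat the variable case. If $t = x_j$, then $\llbracket x_j \rrbracket_{x_1,\ldots,x_m} = \{ (([],\ldots,[\alpha],\ldots,[]),\alpha) \mid \alpha \in D \}$; on the type side, a derivation of $x_1:a_1,\ldots,x_m:a_m \vdash_R x_j : \alpha$ must end with the axiom rule, forcing $a_j = [\alpha]$ and $a_i = []$ for $i \neq j$ (the context-splitting is rigid because the axiom has a fixed shape). So both sides coincide. Second, the abstraction case $t = \lambda x.u$: here I would use the definition $\llbracket \lambda x.u \rrbracket_{x_1,\ldots,x_m} = \{ ((a_1,\ldots,a_m), h(a,\alpha)) \mid ((a_1,\ldots,a_m,a),\alpha) \in \llbracket u \rrbracket_{x_1,\ldots,x_m,x}\}$, observe that $h$ is the inclusion so $h(a,\alpha) = (a,\alpha)$, apply the induction hypothesis to $u$ (with the extra variable $x$), and match this against the abstraction typing rule, which says $x_1:a_1,\ldots,x_m:a_m \vdash_R \lambda x.u : (a,\alpha)$ iff $x_1:a_1,\ldots,x_m:a_m, x:a \vdash_R u : \alpha$. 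One has to be slightly careful that the notation $a_1\ldots a_m\alpha$ used on the right-hand side of the theorem unfolds, for $\lambda x.u$, to exactly a pair $(a,\alpha)$ in the last coordinate, which is precisely the shape forced by the abstraction rule.

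The application case $t = (v)u$ is the one I expect to be the main obstacle, since it is where the multiset bookkeeping and the context addition $\Gamma_0 + \Gamma_1 + \cdots + \Gamma_n$ have to be reconciled with the big union over $n$ and over $\alpha_1,\ldots,\alpha_n \in D$ in the definition of $\llbracket (v)u \rrbracket$. The key observation is that a context $x_1:a_1,\ldots,x_m:a_m$ decomposes as a sum $\Gamma_0 + \Gamma_1 + \cdots + \Gamma_n$ exactly when, writing $\Gamma_i = (x_1 : a_1^i,\ldots,x_m:a_m^i)$, we have $a_j = \sum_{i=0}^n a_j^i$ for each $j$ — and this is precisely the condition appearing in Definition~\ref{def : [t]}. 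So, for the forward inclusion, given a derivation of $x_1:a_1,\ldots,x_m:a_m \vdash_R (v)u:\alpha$, I read off from its last rule the integer $n$, the split $\Gamma_0,\ldots,\Gamma_n$, the type $([\alpha_1,\ldots,\alpha_n],\alpha) = h([\alpha_1,\ldots,\alpha_n],\alpha)$ of $v$ and the types $\alpha_i$ of $u$, apply the induction hypothesis to $v$ and to $u$ (the latter $n$ times, once for each premiss $\Gamma_i \vdash_R u:\alpha_i$), and assemble a witness showing $((a_1,\ldots,a_m),\alpha) \in \llbracket (v)u\rrbracket_{x_1,\ldots,x_m}$. The reverse inclusion runs the same correspondence backwards. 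The only genuinely delicate point is handling $n=0$ correctly: the application rule with no $u$-premisses corresponds to $v$ having type $([],\alpha)$ and to the empty union-over-$i$ contributing the empty multiset, and one must check the degenerate conventions ($((a_1,\ldots,a_m),\alpha)=\alpha$ when $m=0$, etc.) line up. None of this is deep, but it is the step where one must be scrupulous about indices and about the associativity/commutativity of $+$ on multisets and on contexts, which is exactly what makes the context-decomposition characterization legitimate.
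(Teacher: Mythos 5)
Your proposal is correct and follows exactly the route the paper takes: the paper's entire proof is ``By induction on $t$,'' and your case analysis (axiom rule forcing the context shape for variables, $h$ being the inclusion in the abstraction case, and the context-sum decomposition matching the multiset sums in the application clause, including $n=0$) is precisely the intended unfolding of that induction.
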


\begin{proof}
By induction on $ t $.
\end{proof}

\begin{corollaire}\label{corollary : subject conversion}
For any $\lambda$-terms $ t $ and $ t' $ such that $ t =_\beta t' $, if $ \Gamma \vdash_R t : \alpha $, then we have $ \Gamma \vdash_R t' : \alpha $.
\end{corollaire}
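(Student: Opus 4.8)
The plan is to deduce Corollary~\ref{corollary : subject conversion} from Theorem~\ref{Theo : semantique = types}, which identifies the set of types of a $\lambda$-term (in a fixed sequence of free variables) with its denotational semantics $\llbracket - \rrbracket_{x_1, \ldots, x_m}$. Since the semantics is invariant under $\beta$-conversion — this is the defining property of a denotational model, and here it follows from the fact that $(\mathcal{P}(D), \ast, \llbracket - \rrbracket_-)$ is a $\lambda$-algebra — the set of types must be invariant under $\beta$-conversion too, and that is exactly the statement of the corollary.

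Concretely, I would proceed as follows. First, fix $t$ and $t'$ with $t =_\beta t'$ and a derivation of $\Gamma \vdash_R t : \alpha$. Choose distinct variables $x_1, \ldots, x_m$ such that $FV(t) \cup FV(t') \subseteq \{x_1, \ldots, x_m\}$ (this is possible since $\beta$-equal terms have the same, or a smaller, set of free variables — and in any case one can always enlarge the chosen context). Let $a_j = \Gamma(x_j)$ for $1 \le j \le m$; note that the variables outside $FV(t)$ receive $[]$, which is consistent with the axiom and the typing rules never introducing types for variables not occurring in the term, so from $\Gamma \vdash_R t : \alpha$ we get $x_1 : a_1, \ldots, x_m : a_m \vdash_R t : \alpha$. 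By Theorem~\ref{Theo : semantique = types}, this means $a_1 \ldots a_m \alpha \in \llbracket t \rrbracket_{x_1, \ldots, x_m}$. Now invoke the soundness of the semantics with respect to $\beta$-conversion: $\llbracket t \rrbracket_{x_1, \ldots, x_m} = \llbracket t' \rrbracket_{x_1, \ldots, x_m}$ whenever $t =_\beta t'$. Hence $a_1 \ldots a_m \alpha \in \llbracket t' \rrbracket_{x_1, \ldots, x_m}$, and applying Theorem~\ref{Theo : semantique = types} again in the other direction yields $x_1 : a_1, \ldots, x_m : a_m \vdash_R t' : \alpha$, from which $\Gamma \vdash_R t' : \alpha$ follows (again the variables with empty type can be discarded).

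The one point that deserves care — and which I expect to be the main obstacle, or at least the main thing to make precise — is the claim that $\llbracket t \rrbracket_{x_1, \ldots, x_m} = \llbracket t' \rrbracket_{x_1, \ldots, x_m}$ for $\beta$-equal $t, t'$. This is the semantic soundness statement; it is standard for interpretations in a $\lambda$-algebra, but here the interpretation $\llbracket - \rrbracket_{x_1, \ldots, x_m}$ of Definition~\ref{def : [t]} is given by a direct induction on terms, so one should check that it agrees with (or is an instance of) the categorical interpretation in the cartesian closed category $\Lambda(\mathfrak{M})$ via the reflexive object $(D)$, for which $\beta$-invariance is guaranteed by Proposition~\ref{prop : CCC}, Lemma~\ref{lemma : retraction}, and the fact that $(s, r) = ((g), (f))$ is a retraction pair. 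Once that identification is in hand — it is essentially the content of the surrounding development in Subsection~\ref{subsection:untyped} — the corollary is immediate. An alternative, more syntactic route would be to prove subject reduction and subject expansion for $\succ_\beta$ directly by induction on the $\beta$-reduction, but given that Theorem~\ref{Theo : semantique = types} is already available, routing through the semantics is the shortest honest path.
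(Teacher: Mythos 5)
Your proposal is correct and follows essentially the same route as the paper: the paper's proof also derives the corollary from the $\beta$-invariance of $\llbracket - \rrbracket_{x_1,\ldots,x_m}$ (justified via Proposition~\ref{prop : CCC}, Lemma~\ref{lemma : retraction} and Proposition 5.5.5 of Barendregt) combined with Theorem~\ref{Theo : semantique = types}. Your extra care about choosing $x_1,\ldots,x_m \supseteq FV(t)\cup FV(t')$ and about discarding empty-multiset entries is a reasonable tightening of details the paper leaves implicit.
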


\begin{proof}
By our Proposition \ref{prop : CCC} and Lemma \ref{lemma : retraction}, and Proposition 5.5.5 of \cite{Barendregt}, the following property holds: for any $\lambda$-terms $ t $ and $ t' $ such that $ t =_\beta t' $ and such that $ FV(t) \subseteq \{ x_1, \ldots, x_m \} $, we have $\llbracket t \rrbracket_{x_1, \ldots, x_m} = \llbracket t' \rrbracket_{x_1, \ldots, x_m}$.
\end{proof}

\begin{theorem}\label{Theo : systeme -> modele}
For any $\lambda$-term $ t $ and for any $ \Gamma \in \Phi $, we have 
\begin{eqnarray*}
& & \{ \alpha \in D \: / \: \Gamma \vdash_R t : \alpha \} \\
& \subseteq & \{ \alpha \in D \: / \: (\forall \rho \in {\mathcal{P}(D)}^\mathcal{V}) ((\forall x \in \mathcal{V}) \Gamma(x) \in \mathcal{M}_{\textrm{fin}}(\rho(x)) \Rightarrow \alpha \in \llbracket t \rrbracket_\rho ) \} .
\end{eqnarray*}
\end{theorem}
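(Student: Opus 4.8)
The plan is to reduce the statement to Theorem~\ref{Theo : semantique = types} together with Definition~\ref{Def : [t]_rho}. Fix $t$ and $\Gamma$. If no $\alpha$ satisfies $\Gamma \vdash_R t : \alpha$, the inclusion is vacuous, so assume $\Gamma \vdash_R t : \alpha$. First I would establish a \emph{relevance} lemma: whenever $\Gamma' \vdash_R t' : \beta$ is derivable, $\{ x \in \mathcal{V} \mid \Gamma'(x) \neq [\,] \} \subseteq FV(t')$. This is a routine induction on the derivation: the axiom gives $\Gamma' = x : [\beta]$; the abstraction rule removes the bound variable from both the context and the set of free variables; and the application rule produces $\Gamma_0 + \Gamma_1 + \cdots + \Gamma_n$, whose support is contained in $FV(v) \cup FV(u) = FV((v)u)$ (including the degenerate case $n = 0$, where only $\Gamma_0$ and $FV(v)$ intervene).

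Granting this, write $FV(t) = \{ x_1, \ldots, x_m \}$ with the $x_j$ distinct, and set $a_j = \Gamma(x_j)$. By the relevance lemma, $\Gamma$ agrees with the context $x_1 : a_1, \ldots, x_m : a_m$ at every variable (both send a variable outside $FV(t)$ to $[\,]$), so $x_1 : a_1, \ldots, x_m : a_m \vdash_R t : \alpha$. By Theorem~\ref{Theo : semantique = types}, this means $a_1 \ldots a_m \alpha \in \llbracket t \rrbracket_{x_1, \ldots, x_m}$, that is, $((a_1, \ldots, a_m), \alpha) \in \llbracket t \rrbracket_{x_1, \ldots, x_m}$ (using the convention that identifies $((a_1, \ldots, a_m), \alpha)$ with $\alpha$ when $m = 0$).

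Now let $\rho \in {\mathcal{P}(D)}^\mathcal{V}$ be any environment with $\Gamma(x) \in \mathcal{M}_{\textrm{fin}}(\rho(x))$ for every $x \in \mathcal{V}$; in particular $a_j = \Gamma(x_j) \in \mathcal{M}_{\textrm{fin}}(\rho(x_j))$ for $1 \leq j \leq m$. Since $FV(t) = \{ x_1, \ldots, x_m \}$, Definition~\ref{Def : [t]_rho} applies, and by the previous paragraph the tuple $(a_1, \ldots, a_m)$ is exactly a witness that $\alpha \in \llbracket t \rrbracket_\rho$. As $\rho$ was arbitrary, $\alpha$ belongs to the right-hand side, which concludes the proof.

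I do not expect any genuine obstacle: the real content is already in Theorem~\ref{Theo : semantique = types}, and the only point requiring care is the bookkeeping between a context $\Gamma$, a distinguished list of variables $x_1, \ldots, x_m$, and the set $FV(t)$ — specifically that $\llbracket t \rrbracket_\rho$ in Definition~\ref{Def : [t]_rho} ranges over \emph{exactly} the free variables of $t$, so the relevance lemma is precisely what legitimises replacing $\Gamma$ by $x_1 : a_1, \ldots, x_m : a_m$. (Alternatively one could bypass Theorem~\ref{Theo : semantique = types} and argue directly by induction on $t$, mirroring the three typing rules against the clauses of Definition~\ref{def : [t]}, but invoking Theorem~\ref{Theo : semantique = types} is shorter.)
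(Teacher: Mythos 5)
Your proposal is correct and follows the same route as the paper, whose entire proof is ``Apply Theorem~\ref{Theo : semantique = types}''; you have merely made explicit the bookkeeping (the relevance lemma ensuring $\textsf{Supp}(\Gamma) \subseteq FV(t)$, so that $\Gamma$ coincides with a context $x_1 : a_1, \ldots, x_m : a_m$ over exactly the free variables of $t$) needed to pass from that theorem and Definition~\ref{Def : [t]_rho} to the stated inclusion. No gaps.
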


\begin{proof}
Apply Theorem \ref{Theo : semantique = types}.
\end{proof}

\begin{rem}
The reverse inclusion is not true.
\end{rem}

\begin{theorem}
For any $\lambda$-term $ t $ and for any $ \rho \in {\mathcal{P}(D)}^\mathcal{V} $, we have
$$\llbracket t \rrbracket_\rho = \{ \alpha \in D \: / \: (\exists \Gamma \in \Phi) ((\forall x \in \mathcal{V}) \Gamma(x) \in \mathcal{M}_{\textrm{fin}}(\rho(x)) \textrm{ and } \Gamma \vdash_R t : \alpha) \} \enspace . $$
\end{theorem}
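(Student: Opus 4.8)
The plan is to reduce the claim to two earlier results and an elementary manipulation of the definitions. Recall that Definition~\ref{Def : [t]_rho} gives, for a $\lambda$-term $t$ with $FV(t) = \{x_1, \ldots, x_m\}$,
$$ \llbracket t \rrbracket_\rho = \bigcup_{a_1 \in \mathcal{M}_{\textrm{fin}}(\rho(x_1)), \ldots, a_m \in \mathcal{M}_{\textrm{fin}}(\rho(x_m))} \{ \alpha \in D \; / \; ((a_1, \ldots, a_m), \alpha) \in \llbracket t \rrbracket_{x_1, \ldots, x_m} \} , $$
so an element $\alpha$ lies in $\llbracket t \rrbracket_\rho$ if and only if there exist multisets $a_j \in \mathcal{M}_{\textrm{fin}}(\rho(x_j))$ (for $1 \leq j \leq m$) with $((a_1, \ldots, a_m), \alpha) \in \llbracket t \rrbracket_{x_1, \ldots, x_m}$. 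By Theorem~\ref{Theo : semantique = types}, the latter condition is equivalent to $x_1 : a_1, \ldots, x_m : a_m \vdash_R t : \alpha$.

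First I would prove the left-to-right inclusion. Given $\alpha \in \llbracket t \rrbracket_\rho$, pick such $a_1, \ldots, a_m$ and set $\Gamma = (x_1 : a_1, \ldots, x_m : a_m)$. Then $\Gamma \vdash_R t : \alpha$ by Theorem~\ref{Theo : semantique = types}, and for every variable $x$ we have $\Gamma(x) \in \mathcal{M}_{\textrm{fin}}(\rho(x))$: indeed, for $x = x_j$ this is the choice of $a_j$, and for $x \notin \{x_1, \ldots, x_m\}$ we have $\Gamma(x) = []$, which belongs to $\mathcal{M}_{\textrm{fin}}(\rho(x))$ since the empty multiset has support $\emptyset \subseteq \rho(x)$. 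This witnesses membership of $\alpha$ in the right-hand set.

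For the reverse inclusion, suppose $\Gamma \in \Phi$ satisfies $\Gamma(x) \in \mathcal{M}_{\textrm{fin}}(\rho(x))$ for all $x$ and $\Gamma \vdash_R t : \alpha$. A routine weakening/strengthening observation on System~$R$ derivations shows that variables not free in $t$ do not occur in the context of any derivation of $\Gamma \vdash_R t : \alpha$, so $\Gamma(x) = []$ for $x \notin FV(t)$; hence $\Gamma = (x_1 : a_1, \ldots, x_m : a_m)$ with $a_j = \Gamma(x_j) \in \mathcal{M}_{\textrm{fin}}(\rho(x_j))$. By Theorem~\ref{Theo : semantique = types} we get $((a_1, \ldots, a_m), \alpha) \in \llbracket t \rrbracket_{x_1, \ldots, x_m}$, and by Definition~\ref{Def : [t]_rho} this places $\alpha$ in $\llbracket t \rrbracket_\rho$.

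The only genuinely delicate point is the handling of free versus bound variables in the second inclusion: one must be careful that Theorem~\ref{Theo : semantique = types} is stated for a list $x_1, \ldots, x_m$ with $FV(t) \subseteq \{x_1, \ldots, x_m\}$, so that enlarging the context with extra variables assigned $[]$ is harmless, and that conversely a derivation of $\Gamma \vdash_R t : \alpha$ forces $\Gamma(x) = []$ outside $FV(t)$ (a straightforward induction on the derivation, using that the axiom rule introduces exactly one variable and that $+$ on contexts preserves this property). Everything else is direct unwinding of the two definitions and an appeal to Theorem~\ref{Theo : semantique = types}.
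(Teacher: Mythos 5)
Your proof is correct and follows essentially the same route as the paper, which simply invokes Theorem~\ref{Theo : semantique = types} together with Theorem~\ref{Theo : systeme -> modele}; your reverse inclusion just inlines the content of the latter (including the observation that a derivable context is empty outside $FV(t)$, which indeed holds by induction on derivations in System~$R$). Nothing is missing.
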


\begin{proof}
Apply Theorems \ref{Theo : semantique = types} and \ref{Theo : systeme -> modele}.
\end{proof}

There is another way to compute the interpretation of $\lambda$-terms in this semantics. Indeed, it is well-known that we can translate $\lambda$-terms into linear logic nets labelled with the types $I$, $O$, $?I$ and $!O$ (as in \cite{regnier}): this translation is defined by induction on the $\lambda$-terms. Now, we can do \emph{experiments} (in the sense of \cite{LinearLogic}, that introduced this notion in the framework of coherent semantics for working with proof-nets directly, without sequentializing)  to compute the semantics of the net in the multiset based relational model: all the translations corresponding to the encoding $ A \Rightarrow B \equiv ? A^\perp \wp B $ have the same semantics. And this semantics is the same as the semantics defined here.

For a survey of translations of $\lambda$-terms in proof nets, see \cite{lambdatoproofs}.

\subsection{An equivalence relation on derivations}\label{subsection : resource terms}

Definition \ref{definition : equivalence relation} introduces an equivalence relation on the set of derivations of a given $\lambda$-term. This relation, as well as the notion of substitution defined immediately after, will play a role in Subsection \ref{subsection:exact_number}.

\begin{definition}
For any $\lambda$-term $t$, for any $(\Gamma, \alpha) \in \Phi \times D$, we denote by $\Delta(t, (\Gamma, \alpha))$ the set of derivations of $\Gamma \vdash_R t : \alpha$.

For any $\lambda$-term $t$, we set $\Delta(t) = \bigcup_{(\Gamma, \alpha) \in \Phi \times D} \Delta(t, (\Gamma, \alpha))$.

For any closed $\lambda$-term $ t $, for any $ \alpha \in D $, we denote by $ \Delta(t, \alpha) $ the set of derivations of $ \vdash_R t : \alpha $.

For any closed $\lambda$-term $t$, for any integer $n$, for any $a \in \mathcal{M}_n(D)$, we set 
$$\Delta(t, a) = \bigcup_{(\alpha_1, \ldots, \alpha_n) \in \mathfrak{S}(a)} \{ (\Pi_1, \ldots, \Pi_n) \in \Delta(t)^n \: / \: (\forall i \in \{ 1, \ldots, n \}) \: \Pi_i \in \Delta(t, \alpha_i) \} \enspace .$$

We set $\Delta = \bigcup_{t \in \Lambda} \Delta(t)$.
\end{definition}

\begin{definition}\label{definition : equivalence relation}
Let $t$ be a $\lambda$-term. For any $\Pi, \Pi' \in \Delta(t)$, we define, by induction on $\Pi$, when $\Pi \sim \Pi'$ holds:
\begin{itemize}
\item if $ \Pi $ is only a leaf, then $ \Pi \sim \Pi' $ if, and only if, $ \Pi' $ is a leaf too;
\item if $ \Pi = \AxiomC{$ \begin{array}{c} \Pi_0 \\ \Gamma, x : a \vdash_R v : \alpha \end{array}$}
\UnaryInfC{$ \Gamma \vdash_R \lambda x. v : (a, \alpha) $}
\DisplayProof,$ then $\Pi \sim \Pi'$ if, and only if, there exists $\Pi_0' \sim \Pi_0$ such that 
$\Pi' = \AxiomC{$ \begin{array}{c} \Pi_0' \\ \Gamma', x : a' \vdash_R v : \alpha' \end{array}$}
\UnaryInfC{$ \Gamma' \vdash_R \lambda x. v : (a', \alpha') $}
\DisplayProof$;
\item if $$ \Pi = 
 \AxiomC{$ \begin{array}{c} \Pi_0 \\ \Gamma_0 \vdash_R v : ([\alpha_1, \ldots, \alpha_n], \alpha) \end{array} 
$}
\AxiomC{$ \begin{array}{ccc} \Pi_1 & \ldots & \Pi_n \\ \Gamma_1 \vdash_R u : \alpha_1 & \ldots & \Gamma_n \vdash u :
  \alpha_n \end{array} $}
\BinaryInfC{$ \Gamma_0 + \Gamma_1 + \ldots + \Gamma_n \vdash_R (v)u :
  \alpha $}
\DisplayProof \enspace ,$$ 
then $ \Pi \sim \Pi' $ if, and only if, there exist $\Pi_0 \sim \Pi_0'$, $\sigma \in \mathfrak{S}_n$, $\Pi_1 \sim \Pi'_{\sigma(1)}, \ldots, \Pi_n \sim \Pi'_{\sigma(n)}$ such that
$$ \Pi' = \AxiomC{$ \begin{array}{c} \Pi_0' \\ \Gamma_0' \vdash_R v : ([\alpha_1', \ldots, \alpha_n'], \alpha') \end{array}
$}
\AxiomC{$ \begin{array}{ccc} \Pi_1' & \ldots & \Pi_n' \\ \Gamma_1' \vdash_R u : \alpha_1' & \ldots & \Gamma_n' \vdash u :
  \alpha_n' \end{array} $}
\BinaryInfC{$ \Gamma_0' + \Gamma_1' + \ldots + \Gamma_n' \vdash_R (v)u :
  \alpha' \enspace , $}
\DisplayProof  \enspace .$$
\end{itemize}
\end{definition}

An equivalence class of derivations of a $\lambda$-term $ t $ in System $ R $ can be seen as a \emph{simple resource term of the shape of} $ t $ that does not reduce to $ 0 $. \emph{Resource $\lambda$-calculus} is defined in \cite{thomas} and is similar to resource oriented versions of the $\lambda$-calculus previously introduced and studied in \cite{BCL} and \cite{Kfoury}. For a full exposition of a precise relation between this equivalence relation and simple resource terms, see \cite{these}.

\begin{definition}\label{definition : substitution}
A substitution $\sigma$ is a function from $ D $ to $ D $ such that 
$$ \textrm{for any $ \alpha, \alpha_1, \ldots, \alpha_n \in D $, } \sigma([\alpha_1, \ldots, \alpha_n], \alpha) = ([\sigma(\alpha_1), \ldots, \sigma(\alpha_n)], \sigma(\alpha)) \enspace . $$ 

We denote by $\mathcal{S}$ the set of substitutions.

For any $\sigma \in \mathcal{S}$, we denote by $ \overline{\sigma} $ the function from $ \mathcal{M}_{\textrm{fin}}(D) $ to $ \mathcal{M}_{\textrm{fin}}(D) $ defined by $ \overline{\sigma}([\alpha_1, \ldots, \alpha_n]) = [\sigma(\alpha_1), \ldots, \sigma(\alpha_n)] $.
\end{definition}

\begin{prop}\label{proposition : substitution and equivalence class}
Let $ \Pi $ be a derivation of $ \Gamma \vdash_R t : \alpha $ and let $ \sigma $ be a substitution. Then there exists a derivation $ \Pi' $ of $ \overline{\sigma} \circ \Gamma \vdash_R t : \sigma(\alpha) $ such that $ \Pi \sim \Pi' $.
\end{prop}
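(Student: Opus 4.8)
The plan is to prove the statement by induction on the derivation $\Pi$ (equivalently, by induction on the $\lambda$-term $t$, since the last rule of a System $R$ derivation is dictated by the shape of $t$ and the sub-derivations concern structurally smaller terms), building $\Pi'$ by applying $\sigma$ to every type occurring in $\Pi$. Before treating the three cases I would record three elementary facts about substitutions, all immediate from Definition \ref{definition : substitution}: first, $\overline{\sigma}([\,]) = [\,]$, so that $\overline{\sigma} \circ \Gamma$ is again a context (its support is contained in that of $\Gamma$); second, $\overline{\sigma}$ is additive, i.e. $\overline{\sigma}(a + b) = \overline{\sigma}(a) + \overline{\sigma}(b)$, whence $\overline{\sigma} \circ (\Gamma_0 + \cdots + \Gamma_n) = (\overline{\sigma} \circ \Gamma_0) + \cdots + (\overline{\sigma} \circ \Gamma_n)$; and third, $\sigma$ commutes with the arrow constructor, $\sigma(a, \alpha) = (\overline{\sigma}(a), \sigma(\alpha))$, which is exactly the defining equation of a substitution.

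If $\Pi$ is a single leaf $x : [\alpha] \vdash_R x : \alpha$, I take $\Pi'$ to be the leaf $x : [\sigma(\alpha)] \vdash_R x : \sigma(\alpha)$; since $\overline{\sigma} \circ (x : [\alpha])$ is exactly $x : [\sigma(\alpha)]$, this $\Pi'$ has the required conclusion, and $\Pi \sim \Pi'$ because both are leaves. If $\Pi$ ends with the abstraction rule applied to a sub-derivation $\Pi_0$ of $\Gamma, x : a \vdash_R v : \alpha$, the induction hypothesis gives $\Pi_0' \sim \Pi_0$ with conclusion $(\overline{\sigma} \circ \Gamma), x : \overline{\sigma}(a) \vdash_R v : \sigma(\alpha)$; applying the abstraction rule to $\Pi_0'$ yields a derivation $\Pi'$ of $\overline{\sigma} \circ \Gamma \vdash_R \lambda x. v : (\overline{\sigma}(a), \sigma(\alpha))$, and by the third fact $(\overline{\sigma}(a), \sigma(\alpha)) = \sigma(a, \alpha)$, so $\Pi'$ has the desired conclusion, while the inductive clause of Definition \ref{definition : equivalence relation} gives $\Pi \sim \Pi'$ from $\Pi_0 \sim \Pi_0'$.

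If $\Pi$ ends with the application rule, with sub-derivations $\Pi_0$ of $\Gamma_0 \vdash_R v : ([\alpha_1, \ldots, \alpha_n], \alpha)$ and $\Pi_1, \ldots, \Pi_n$ of $\Gamma_i \vdash_R u : \alpha_i$, I apply the induction hypothesis to each of the $n + 1$ premisses, obtaining $\Pi_0' \sim \Pi_0$ of $\overline{\sigma} \circ \Gamma_0 \vdash_R v : ([\sigma(\alpha_1), \ldots, \sigma(\alpha_n)], \sigma(\alpha))$ (using the third fact to rewrite the type) and $\Pi_i' \sim \Pi_i$ of $\overline{\sigma} \circ \Gamma_i \vdash_R u : \sigma(\alpha_i)$. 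Reassembling with the application rule produces $\Pi'$ of $(\overline{\sigma} \circ \Gamma_0) + \cdots + (\overline{\sigma} \circ \Gamma_n) \vdash_R (v)u : \sigma(\alpha)$, which by additivity equals $\overline{\sigma} \circ (\Gamma_0 + \cdots + \Gamma_n) \vdash_R (v)u : \sigma(\alpha)$; and $\Pi \sim \Pi'$ follows from the inductive clause for the application rule, taking the identity permutation in $\mathfrak{S}_n$.

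There is no genuine obstacle here: the argument is a routine structural induction, and the only points requiring any care are the purely combinatorial facts that $\overline{\sigma}$ preserves the empty multiset and finite sums of multisets — so that contexts and the operation $+$ on $\Phi$ are respected — and that $\sigma$ commutes with the arrow constructor — so that the subject's type and the types in the premisses transform consistently. All of these are immediate consequences of Definition \ref{definition : substitution}.
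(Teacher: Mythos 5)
Your proof is correct and follows exactly the route the paper takes (the paper's proof is just the phrase ``by induction on $t$''): a structural induction that pushes $\sigma$ through every type in the derivation, relying on the facts that $\overline{\sigma}$ preserves the empty multiset and sums of multisets and that $\sigma$ commutes with the arrow constructor. The three cases are handled correctly, including the use of the identity permutation to witness $\Pi \sim \Pi'$ in the application case.
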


\begin{proof}
By induction on $ t $.
\end{proof}

\section{Qualitative results}\label{section:qualitative}

In this section, inspired by \cite{typesetmodeles}, we prove Theorem \ref{th normalisable de tete}, which formulates \emph{qualitative} relations between assignable types and normalization properties: it characterizes the (head) normalizable $\lambda$-terms by semantic means. 
We also answer to the following question: if $v$ and $u$ are two closed normal $\lambda$-terms, is it the case that $(v)u$ is (head) normalizable? The answer is given only referring to $\llbracket v \rrbracket$ and $\llbracket u \rrbracket$ in Corollary \ref{cor:qualitative}. Quantitative versions of this last result will be proved in Section \ref{section:quantitative}.

\begin{definition}
For any $n \in \mathbb{N}$, we define, by induction on $n$, $D_n^\textsf{ex}$ and $\overline{D_n^\textsf{ex}}$:
\begin{itemize}
\item $D_0^\textsf{ex} = \overline{D_0^\textsf{ex}} = A$;
\item $D_{n+1}^\textsf{ex} = A \cup (\mathcal{M}_\textrm{fin}(\overline{D_n^\textsf{ex}}) \times D_n^\textsf{ex})$ and $\overline{D_{n+1}^\textsf{ex}} = A \cup ((\mathcal{M}_\textrm{fin}(D_n^\textsf{ex}) \setminus \{ [] \}) \times \overline{D_n^\textsf{ex}})$.
\end{itemize}
We set 
\begin{itemize}
\item $D^\textsf{ex} = \bigcup_{n \in \mathbb{N}} D_n^\textsf{ex}$;
\item $\overline{D^\textsf{ex}} = \bigcup_{n \in \mathbb{N}} \overline{D_n^\textsf{ex}}$;
\item and $\Phi^\textsf{ex} = \{ \Gamma \in \Phi \: / \: (\forall x \in \mathcal{V}) \Gamma(x) \in \mathcal{M}_\textrm{fin}(\overline{D^\textsf{ex}}) \}  .$
\end{itemize}
\end{definition}

Note that $D^\textsf{ex}$ is the set of the $\alpha \in D$ such that $[]$ has no positive occurrences in $\alpha$.

\renewcommand{\labelenumi}{(\roman{enumi})}

\begin{prop}\label{prop : head => typable}
\begin{enumerate}
\item Every head-normalizable $\lambda$-term is typable in System $ R $.
\item For any normalizable $\lambda$-term $ t $, there exists $(\Gamma, \alpha) \in \Phi^\textsf{ex} \times D^\textsf{ex}$ 
such that $ \Gamma \vdash_R t : \alpha $. 
\end{enumerate}
\end{prop}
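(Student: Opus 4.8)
The plan is to prove both assertions by a single subject-expansion argument along the head (respectively leftmost) reduction sequence, using Corollary~\ref{corollary : subject conversion} to transport types backwards. First I would establish the base cases. For a head-normal form $t = \lambda x_1. \ldots \lambda x_m. (x)t_1 \ldots t_p$, one can build a derivation directly: type each $t_i$ with the empty multiset (so the application rule is used with $n=0$, which by the remark after Definition~\ref{definition:System R} makes the $t_i$ irrelevant), type the head occurrence of $x$ by the axiom with any $\alpha \in A$, and then re-abstract, introducing empty multisets for the $x_j$ that do not occur. This gives assertion (i) for head-normal forms. For assertion (ii), the base case is a $\beta$-normal form $t = \lambda x_1. \ldots \lambda x_m.(x)t_1 \ldots t_p$ where each $t_i$ is itself normal; here I would argue by induction on the size of $t$ that one can pick $(\Gamma, \alpha) \in \Phi^\textsf{ex} \times D^\textsf{ex}$: recursively type each $t_i$ with some $(\Gamma_i, \alpha_i) \in \Phi^\textsf{ex} \times D^\textsf{ex}$, feed the singleton multisets $[\alpha_i]$ (which are nonempty, as required by the definition of $\overline{D^\textsf{ex}}$) into the application rule, type the head variable with $[\alpha_1] \ldots [\alpha_p]\beta$ for $\beta \in A$, and check that the resulting type and context stay inside $D^\textsf{ex}$ and $\Phi^\textsf{ex}$ — this is exactly where the asymmetry in the definition of $D^\textsf{ex}$ versus $\overline{D^\textsf{ex}}$ (positive versus negative occurrences of $[]$) is needed, and one has to be careful that abstracting a variable $x_j$ with empty context contributes $[]$ on the left of an arrow, which is a negative occurrence and hence allowed in $D^\textsf{ex}$.

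Next I would handle the inductive step. If $t$ is head-normalizable but not itself a head-normal form, then $t \to_h t'$ for some $t'$ with one fewer head-reduction step, and $t =_\beta t'$. By the induction hypothesis $t'$ is typable (resp.\ typable with a judgement in $\Phi^\textsf{ex} \times D^\textsf{ex}$), say $\Gamma \vdash_R t' : \alpha$; by Corollary~\ref{corollary : subject conversion} applied to $t =_\beta t'$ we get $\Gamma \vdash_R t : \alpha$ with the \emph{same} context and type, so the membership in $\Phi^\textsf{ex} \times D^\textsf{ex}$ is preserved for free. The same argument works for (ii) using leftmost reduction instead of head reduction, since leftmost reduction of a normalizable term terminates at its normal form and each step is still a $\beta$-conversion.

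The main obstacle is the base case of (ii), not the inductive step: Corollary~\ref{corollary : subject conversion} does all the work for the induction once the normal form is handled, but one genuinely has to construct, by induction on the structure of a $\beta$-normal term, a derivation whose type and context avoid positive occurrences of $[]$. The delicate points are (a) that every argument of a head variable must receive a \emph{nonempty} multiset (otherwise a positive $[]$ appears), which forces the use of the full recursive typing of the subterms rather than the shortcut $n=0$ rule, and (b) that one must verify the bookkeeping: when the subterms $t_i$ are typed in contexts $\Gamma_i \in \Phi^\textsf{ex}$, the summed context $\Gamma_0 + \Gamma_1 + \cdots + \Gamma_p$ is still in $\Phi^\textsf{ex}$ (immediate, since $\mathcal{M}_\textrm{fin}(\overline{D^\textsf{ex}})$ is closed under $+$), and that pushing a nonempty $[\alpha_i]$ with $\alpha_i \in D^\textsf{ex}$ through the arrow constructor lands in $\overline{D^\textsf{ex}}$, while the resulting type of $t$ lands in $D^\textsf{ex}$ — a routine but necessary check against the mutually recursive definitions of $D^\textsf{ex}$ and $\overline{D^\textsf{ex}}$. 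Everything else, including (i) in full, is a short computation.
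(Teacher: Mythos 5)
Your argument is correct; the only real divergence from the paper is in part (i). For (ii) you do essentially what the paper does: a structural induction constructing a derivation of the normal form inside $\Phi^\textsf{ex} \times D^\textsf{ex}$, followed by a single transfer via Corollary~\ref{corollary : subject conversion}. The paper packages the induction with a strengthened hypothesis (``if $t$ does not begin with $\lambda$, then for \emph{any} $\alpha \in D^\textsf{ex}$ there is a suitable context''), which is needed because it decomposes normal forms as $(v)u$ and must feed $v$ an arrow type matching the type chosen for $u$; your head-form decomposition $\lambda x_1 \ldots \lambda x_m.(x)t_1 \ldots t_p$ obtains the same freedom by choosing the axiom type $[\alpha_1]\ldots[\alpha_p]\beta$ of the head variable directly, and your bookkeeping about where $[]$ may occur ($a_j = []$ is a negative occurrence, hence harmless, while the $[\alpha_i]$ fed to the head variable must be nonempty singletons of elements of $D^\textsf{ex}$) is exactly the check the paper relies on. For (i) the paper argues differently: it invokes the solvability characterization of head-normalizability (there are $v_1, \ldots, v_n$ with $(\lambda x_1 \ldots \lambda x_k. t)v_1 \ldots v_n =_\beta x$), types the variable $x$, transfers by subject conversion, and then extracts a typing of $t$ from inside the resulting derivation of the application; you instead type the head normal form directly with empty multisets for the arguments and transfer. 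Your route is more uniform with (ii) and avoids appealing to the solvability theorem, at the cost of nothing; conversely the paper's version of (i) needs no analysis of head normal forms at all. One small simplification available to you: since Corollary~\ref{corollary : subject conversion} already holds for full $\beta$-equality, the induction on the length of the (head or leftmost) reduction sequence is unnecessary --- a single application of the corollary to $t =_\beta t'$, where $t'$ is the (head) normal form, suffices.
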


\begin{proof}
\begin{enumerate}
\item Let $ t $ be a head-normalizable $\lambda$-term. There exists a $\lambda$-term of the shape $(\lambda x_1. \ldots \lambda x_k. t)v_1\ldots v_n$ such that $(\lambda x_1. \ldots \lambda x_k. t)v_1\ldots v_n =_\beta x $. Now, $ x $ is typable. Therefore, by Corollary~\ref{corollary : subject conversion}, the $ \lambda $-term $ (\lambda x_1. \ldots \lambda x_k. t)v_1 \ldots v_n $ is typable. Hence $ \lambda x_1. \ldots \lambda x_k. t $ is typable.
\item We prove, by induction on $ t $, that for any normal $\lambda$-term $ t $, the following properties hold:
\begin{itemize}
\item there exists $(\Gamma, \alpha) \in \Phi^\textsf{ex} \times D^\textsf{ex}$ such that $ \Gamma \vdash_R t : \alpha $;
\item if, moreover, $ t $ does not begin with $ \lambda $, then, for any $ \alpha \in D^\textsf{ex} $, there exists $ \Gamma \in \Phi^\textsf{ex}$ such that $ \Gamma \vdash_R t : \alpha $. 
\end{itemize}
Next, just apply Corollary \ref{corollary : subject conversion}.
\end{enumerate}
\end{proof}

If $ \mathcal{X}_1 $ and $ \mathcal{X}_2 $ are two sets of $\lambda$-terms, then $ \mathcal{X}_1 \rightarrow \mathcal{X}_2 $ denotes the set of $ \lambda $-terms $ v $ such that for any $ u \in \mathcal{X}_1 $, we have $(v)u \in \mathcal{X}_2 $. 
A set $ \mathcal{X} $ of $\lambda$-terms is said to be \emph{saturated} if for any $\lambda$-terms $ t_1, \ldots, t_n, u $ and for any $ x \in \mathcal{V} $, we have 
$$ ((u[t/x])t_1 \ldots t_n \in \mathcal{X} \Rightarrow (\lambda x. u)tt_1 \ldots t_n \in \mathcal{X}) .$$ 
An interpretation is a map from $ A $ to the set of saturated sets. For any interpretation $ \mathcal{I} $ and for any $ \delta \in D \cup \mathcal{M}_{\textrm{fin}}(D) $, we define, by induction on $ \delta $, a saturated set $ \vert \delta \vert_\mathcal{I} $:
\begin{itemize}
\item if $ \delta \in A $, then $ \vert \delta \vert_\mathcal{I} = \mathcal{I}(\delta) $;
\item if $ \delta = [] $, then $ \vert \delta \vert_\mathcal{I} $ is the set of all $\lambda$-terms;
\item if $ \delta = [\alpha_1, \ldots, \alpha_{n+1}] $, then $ \vert \delta \vert_\mathcal{I} = \bigcap_{i=1}^{n+1} \vert \alpha_i \vert_\mathcal{I} $. 
\item if $ \delta = (a, \alpha) $, then $ \vert \delta \vert_\mathcal{I} = \vert a \vert_\mathcal{I} \rightarrow \vert \alpha \vert_\mathcal{I} $.
\end{itemize}

\begin{lemme}\label{adequation}
Let $ \mathcal{I} $ be an interpretation and let $ u $ be a $\lambda$-term such that $ x_1 : a_1, \ldots, x_k : a_k \vdash_R u :\alpha $. If $ t_1 \in \vert a_1 \vert_\mathcal{I}, \ldots $, $ t_k \in \vert a_k \vert_\mathcal{I} $, then $ u[t_1/x_1, \ldots, t_k/x_k] \in \vert \alpha \vert_\mathcal{I} $.
\end{lemme}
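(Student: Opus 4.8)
The plan is to prove Lemma~\ref{adequation} by induction on the $\lambda$-term $u$ (equivalently, by induction on the derivation of $x_1 : a_1, \ldots, x_k : a_k \vdash_R u : \alpha$), following the classical realizability/reducibility pattern adapted to the non-idempotent setting. The three cases correspond to the three typing rules of System~$R$ (Definition~\ref{definition:System R}): the axiom, the abstraction rule, and the application rule. Throughout, write $\vec{t}/\vec{x}$ for the simultaneous substitution $t_1/x_1, \ldots, t_k/x_k$.

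First I would treat the axiom case. Here $u = x_i$ for some $i$, the derivation is the single leaf $x_i : [\alpha] \vdash_R x_i : \alpha$, so $a_i = [\alpha]$ and $a_j = []$ for $j \neq i$. Then $\vert a_i \vert_\mathcal{I} = \vert [\alpha] \vert_\mathcal{I} = \vert \alpha \vert_\mathcal{I}$, and $u[\vec{t}/\vec{x}] = t_i \in \vert a_i \vert_\mathcal{I} = \vert \alpha \vert_\mathcal{I}$, as required. (One should be slightly careful that the other $t_j$ with $j \ne i$ live in $\vert [\,] \vert_\mathcal{I}$, which is the set of all $\lambda$-terms, so there is no constraint on them, consistently with the fact that $x_j \notin FV(x_i)$.)

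Next, the abstraction case: $u = \lambda y. v$ with $\alpha = (b, \beta)$, and the premise is $x_1 : a_1, \ldots, x_k : a_k, y : b \vdash_R v : \beta$. By definition $\vert (b,\beta) \vert_\mathcal{I} = \vert b \vert_\mathcal{I} \rightarrow \vert \beta \vert_\mathcal{I}$, so I must show $(\lambda y. v[\vec{t}/\vec{x}])\,s \in \vert \beta \vert_\mathcal{I}$ for every $s \in \vert b \vert_\mathcal{I}$. By the induction hypothesis applied to the premise (with the extra entry $y : b$, $s \in \vert b \vert_\mathcal{I}$), we get $v[\vec{t}/\vec{x}, s/y] \in \vert \beta \vert_\mathcal{I}$; since $\vert \beta \vert_\mathcal{I}$ is saturated (with $n = 0$ extra arguments), the redex $(\lambda y. v[\vec{t}/\vec{x}])\,s$ also lies in $\vert \beta \vert_\mathcal{I}$. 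Here one uses the variable convention so that $y$ is not free in the $t_i$ and the substitutions commute correctly.

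The application case is where the non-idempotency really shows, and I expect it to be the main obstacle to state cleanly. We have $u = (v)w$, and the derivation ends with the application rule: $\Gamma_0 \vdash_R v : ([\beta_1,\ldots,\beta_n], \alpha)$ and $\Gamma_i \vdash_R w : \beta_i$ for $1 \le i \le n$, with $(x_1:a_1,\ldots,x_k:a_k) = \Gamma_0 + \Gamma_1 + \cdots + \Gamma_n$. The plan is: split each $a_j$ as $a_j = a_j^0 + a_j^1 + \cdots + a_j^n$ according to the $\Gamma$'s, note that $\vert a_j \vert_\mathcal{I} = \bigcap_{l=0}^n \vert a_j^l \vert_\mathcal{I}$ (since $\vert [\,] \vert_\mathcal{I}$ is everything and $\vert b + b' \vert_\mathcal{I} = \vert b \vert_\mathcal{I} \cap \vert b' \vert_\mathcal{I}$ by the multiset-intersection clause), hence each $t_j$ lies in every $\vert a_j^l \vert_\mathcal{I}$ simultaneously. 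Then by the induction hypothesis on $v$ we get $v[\vec{t}/\vec{x}] \in \vert ([\beta_1,\ldots,\beta_n],\alpha) \vert_\mathcal{I} = \bigl(\bigcap_{i=1}^n \vert \beta_i \vert_\mathcal{I}\bigr) \rightarrow \vert \alpha \vert_\mathcal{I}$, and by the induction hypothesis on $w$ (once for each $i$) we get $w[\vec{t}/\vec{x}] \in \vert \beta_i \vert_\mathcal{I}$ for all $i$, i.e. $w[\vec{t}/\vec{x}] \in \bigcap_{i=1}^n \vert \beta_i \vert_\mathcal{I}$; applying the arrow gives $(v[\vec{t}/\vec{x}])\,w[\vec{t}/\vec{x}] = u[\vec{t}/\vec{x}] \in \vert \alpha \vert_\mathcal{I}$. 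The subtlety to handle carefully is the bookkeeping of the context splitting and the fact that the $\beta_i$ may coincide as elements of $D$ (the multiset $[\beta_1,\ldots,\beta_n]$ may have repetitions), so the "intersection" $\bigcap_i \vert \beta_i \vert_\mathcal{I}$ is genuinely over a list with repetitions; but since each $\vert \beta_i \vert_\mathcal{I}$ is just a set of $\lambda$-terms, repetitions are harmless for membership, and the $n = 0$ sub-case (empty multiset, $\vert [\,] \vert_\mathcal{I}$ = all terms) goes through since the arrow then demands nothing of the argument $w[\vec{t}/\vec{x}]$.
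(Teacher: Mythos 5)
Your proof is correct and follows exactly the route the paper intends: the paper's proof is simply ``by induction on $u$,'' and your three cases (axiom, abstraction via saturation with $n=0$, application via the splitting $\vert a + a' \vert_\mathcal{I} = \vert a \vert_\mathcal{I} \cap \vert a' \vert_\mathcal{I}$) are the standard unfolding of that induction. The points you flag as delicate --- the variable convention in the abstraction case and the $n=0$ / repeated-$\beta_i$ subtleties in the application case --- are handled correctly.
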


\begin{proof}
By induction on $ u $.
\end{proof}

\begin{lemme}\label{head-normalizable : lemme 2}
\begin{enumerate}
\item Let $ \mathcal{N} $ be the set of head-normalizable terms. For any $ \gamma \in A $, we set $ \mathcal{I}(\gamma) = \mathcal{N} $. Then, for any $ \alpha \in D $, we have $\mathcal{V} \subseteq \vert \alpha \vert_\mathcal{I} \subseteq \mathcal{N} $.
\item Let $ \mathcal{N} $ be the set of normalizable terms.  For any $ \gamma \in A $, we set \mbox{$ \mathcal{I}(\gamma) = \mathcal{N} $}. For any $ \alpha \in \overline{D^\textsf{ex}} $ (resp. $\alpha \in D^\textsf{ex}$), we have \mbox{$ \mathcal{V} \subseteq \vert \alpha \vert_\mathcal{I} $} (resp. $ \vert \alpha \vert_\mathcal{I} \subseteq \mathcal{N} $).
\end{enumerate}
\end{lemme}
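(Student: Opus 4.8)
The plan is to prove both parts simultaneously by a single induction, exactly as in the proof of Proposition~\ref{prop : head => typable}(ii), except that now the "adequacy" ingredient is Lemma~\ref{adequation} rather than subject conversion. For part~(1), the proof is a structural induction on the type $\alpha \in D$; for part~(2), the two statements "$\mathcal{V} \subseteq |\alpha|_\mathcal{I}$ for $\alpha \in \overline{D^\textsf{ex}}$" and "$|\alpha|_\mathcal{I} \subseteq \mathcal{N}$ for $\alpha \in D^\textsf{ex}$" must be proved by a simultaneous induction on $\textsf{depth}(\alpha)$ (plus an inner induction on the cardinality of multisets), because the definitions of $D^\textsf{ex}$ and $\overline{D^\textsf{ex}}$ are mutually recursive: a type in $D^\textsf{ex}$ has the form $a\,\alpha'$ with $a \in \mathcal{M}_\textrm{fin}(\overline{D^\textsf{ex}})$ and $\alpha' \in D^\textsf{ex}$, while a type in $\overline{D^\textsf{ex}}$ has the form $a\,\alpha'$ with $a \in \mathcal{M}_\textrm{fin}(D^\textsf{ex}) \setminus \{[]\}$ and $\alpha' \in \overline{D^\textsf{ex}}$.

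\textbf{Key steps for part (1).}
First, the case $\alpha \in A$ is immediate: $|\alpha|_\mathcal{I} = \mathcal{I}(\alpha) = \mathcal{N}$, and $\mathcal{V} \subseteq \mathcal{N}$ since any variable is already in head normal form. Next, for $\alpha = (a, \alpha')$ we have $|\alpha|_\mathcal{I} = |a|_\mathcal{I} \rightarrow |\alpha'|_\mathcal{I}$. For the inclusion $\mathcal{V} \subseteq |\alpha|_\mathcal{I}$: given $x \in \mathcal{V}$ and $u \in |a|_\mathcal{I}$, I must show $(x)u \in |\alpha'|_\mathcal{I} = \mathcal{N}$; by the induction hypothesis $|\alpha'|_\mathcal{I} \subseteq \mathcal{N}$ and $\mathcal{V} \subseteq |\alpha'|_\mathcal{I}$, and since $\mathcal{N}$ is saturated (and applying a head-normalizable term that is a variable to anything keeps it head-normalizable — $(x)u$ is itself in head normal form), we get $(x)u \in \mathcal{N}$. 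For the inclusion $|\alpha|_\mathcal{I} \subseteq \mathcal{N}$: take $v \in |a|_\mathcal{I} \rightarrow |\alpha'|_\mathcal{I}$; picking any fresh variable $z$, the induction hypothesis gives $z \in |a|_\mathcal{I}$ (if $a = []$ this is trivial since $|[]|_\mathcal{I}$ is everything; if $a = [\beta_1,\dots,\beta_{n+1}]$ then $z \in \bigcap_i |\beta_i|_\mathcal{I}$ by the IH on each $\beta_i$), hence $(v)z \in |\alpha'|_\mathcal{I} \subseteq \mathcal{N}$, so $(v)z$ is head-normalizable, and therefore $v$ is head-normalizable too (a head reduction of $v$ lifts to a head reduction of $(v)z$).

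\textbf{Key steps for part (2).}
Here $\mathcal{N}$ is the set of normalizable terms, which is still saturated. The base case $\alpha \in A$ gives $|\alpha|_\mathcal{I} = \mathcal{N}$, so both $\mathcal{V} \subseteq |\alpha|_\mathcal{I}$ and $|\alpha|_\mathcal{I} \subseteq \mathcal{N}$ hold. For the step, write $\alpha = (a,\alpha')$. If $\alpha \in \overline{D^\textsf{ex}}$, then $a \in \mathcal{M}_\textrm{fin}(D^\textsf{ex}) \setminus \{[]\}$ and $\alpha' \in \overline{D^\textsf{ex}}$; to show $x \in |a|_\mathcal{I} \rightarrow |\alpha'|_\mathcal{I}$, take $u \in |a|_\mathcal{I} = \bigcap_i |\beta_i|_\mathcal{I}$ with $\beta_i \in D^\textsf{ex}$, so by the IH each $|\beta_i|_\mathcal{I} \subseteq \mathcal{N}$, hence $u$ is normalizable; then $(x)u$ is normalizable (normalize $u$, prepend $x$), and it lies in $|\alpha'|_\mathcal{I}$ because $\mathcal{V} \subseteq |\alpha'|_\mathcal{I}$ by the IH at lower depth and $\mathcal{N}$ is saturated. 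If $\alpha \in D^\textsf{ex}$, then $a \in \mathcal{M}_\textrm{fin}(\overline{D^\textsf{ex}})$ and $\alpha' \in D^\textsf{ex}$; to show $|\alpha|_\mathcal{I} \subseteq \mathcal{N}$, take $v$ in it and a fresh $z$: if $a \neq []$ the IH ($\mathcal{V} \subseteq |\beta_i|_\mathcal{I}$ for $\beta_i \in \overline{D^\textsf{ex}}$) gives $z \in |a|_\mathcal{I}$, and if $a = []$ then $z \in |[]|_\mathcal{I}$ trivially, so in either case $(v)z \in |\alpha'|_\mathcal{I} \subseteq \mathcal{N}$ by the IH, hence $(v)z$ is normalizable, and therefore $v$ is normalizable (the normal form of $v$ is obtained from that of $(v)z$ by stripping $z$, using that $z$ is fresh).

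\textbf{Main obstacle.}
The delicate point is getting the mutual/simultaneous induction set up correctly so that the two clauses of part~(2) feed each other at the right "level": each use of the IH for "$\mathcal{V} \subseteq |\cdot|_\mathcal{I}$" on a component type must be at a strictly smaller depth (or smaller multiset cardinality), and likewise for "$|\cdot|_\mathcal{I} \subseteq \mathcal{N}$". One also has to handle carefully the asymmetry that $\overline{D^\textsf{ex}}$ forbids the empty multiset on the left of an arrow while $D^\textsf{ex}$ allows it — this is exactly what makes "$\mathcal{V} \subseteq |\alpha|_\mathcal{I}$" available for $\alpha \in \overline{D^\textsf{ex}}$ but not for all of $D^\textsf{ex}$. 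Everything else (saturation of $\mathcal{N}$, the lifting of reductions of $v$ to reductions of $(v)z$, the preservation of (head) normalizability under the relevant operations) is routine.
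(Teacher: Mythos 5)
There is a genuine gap in your argument for the inclusions $\mathcal{V} \subseteq \vert \alpha \vert_\mathcal{I}$, in both parts. In the arrow case $\alpha = (a, \alpha')$ you must show, for $x \in \mathcal{V}$ and $u \in \vert a \vert_\mathcal{I}$, that $(x)u \in \vert \alpha' \vert_\mathcal{I}$. You justify this by writing $(x)u \in \vert \alpha' \vert_\mathcal{I} = \mathcal{N}$, but $\vert \alpha' \vert_\mathcal{I} = \mathcal{N}$ holds only when $\alpha'$ is an atom; when $\alpha' = (b, \beta)$ is itself an arrow, $\vert \alpha' \vert_\mathcal{I} = \vert b \vert_\mathcal{I} \rightarrow \vert \beta \vert_\mathcal{I}$ is in general a proper subset of $\mathcal{N}$ (for instance, with $\mathcal{N}$ the head-normalizable terms, $\lambda y.(y)y \in \mathcal{N}$ but $(\lambda y.(y)y)\lambda y.(y)y$ is not head-normalizable, so $\mathcal{N} \rightarrow \mathcal{N} \not= \mathcal{N}$). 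The induction hypothesis you actually have, $\mathcal{V} \subseteq \vert \alpha' \vert_\mathcal{I}$, says nothing about $(x)u$, which is not a variable, and saturation does not help since $(x)u$ is not a redex. The same defect occurs in part (2), where you claim $(x)u$ lies in $\vert \alpha' \vert_\mathcal{I}$ ``because $\mathcal{V} \subseteq \vert \alpha' \vert_\mathcal{I}$''.

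The standard repair --- and the one the paper uses --- is to strengthen the invariant: prove $\mathcal{N}_0 \subseteq \vert \alpha \vert_\mathcal{I} \subseteq \mathcal{N}$ where $\mathcal{N}_0 = \{ (x)t_1 \ldots t_n \: / \: x \in \mathcal{V} \textrm{ and } t_1, \ldots, t_n \in \Lambda \}$ in part (1), and $\mathcal{N}_0 = \{ (x)t_1 \ldots t_n \: / \: x \in \mathcal{V} \textrm{ and } t_1, \ldots, t_n \in \mathcal{N} \}$ in part (2). The point is that $\mathcal{N}_0$ is closed under application to the relevant class of arguments, so $(x)u \in \mathcal{N}_0 \subseteq \vert \alpha' \vert_\mathcal{I}$ follows directly from the induction hypothesis at $\alpha'$. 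Your treatment of the other inclusion $\vert \alpha \vert_\mathcal{I} \subseteq \mathcal{N}$ (apply $v$ to a fresh variable and recover normalizability of $v$ from that of $(v)z$) is correct and matches the paper, as does your handling of the mutual recursion between $D^\textsf{ex}$ and $\overline{D^\textsf{ex}}$ in part (2); only the lower bound needs the strengthened invariant.
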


\begin{proof}
\begin{enumerate}
\item Set $\mathcal{N}_0 = \{ (x)t_1 \ldots t_n \: / \: x \in \mathcal{V} \textrm{ and } t_1, \ldots, t_n \in \Lambda \}$. We prove, by induction on $\alpha$, that we have $\mathcal{N}_0 \subseteq \vert \alpha \vert_\mathcal{I} \subseteq \mathcal{N}$.

If $ \alpha = (b, \beta) $, then, by induction hypothesis, we have $ \mathcal{N}_0 \subseteq \vert \beta \vert_\mathcal{I} \subseteq \mathcal{N} $ and $ \mathcal{N}_0 \subseteq \vert b \vert_\mathcal{I} $. Hence we have $ \mathcal{N}_0 \subseteq \Lambda \rightarrow \mathcal{N}_0 \subseteq \vert \alpha \vert_\mathcal{I} $ and $ \vert \alpha \vert_\mathcal{I} \subseteq \mathcal{N}_0 \rightarrow \mathcal{N} \subseteq \mathcal{N} $.
\item Set $ \mathcal{N}_0 = \{ (x)t_1 \ldots t_n \: / \: x \in \mathcal{V} \textrm{ and } t_1, \ldots, t_n \in \mathcal{N} \} $. We prove, by induction on $ \alpha $, that
\begin{itemize}
\item if $\alpha \in \overline{D^\textsf{ex}}$, then we have $\mathcal{N}_0 \subseteq \vert \alpha \vert_\mathcal{I}$;
\item if $\alpha \in D^\textsf{ex}$, then we have $\vert \alpha \vert_\mathcal{I} \subseteq \mathcal{N}$.
\end{itemize}
Suppose $ \alpha = (b, \beta) \in \mathcal{M}_{\textrm{fin}}(D) \times D $.
\begin{itemize}
\item If $\alpha \in \overline{D^\textsf{ex}}$, then $b \in \mathcal{M}_\textrm{fin}(D^\textsf{ex})$ and $\beta \in \overline{D^\textsf{ex}}$. By induction hypothesis, we have $ \vert b \vert_\mathcal{I} \subseteq \mathcal{N} $ and $ \mathcal{N}_0 \subseteq \vert \beta \vert_\mathcal{I} $. Hence $ \mathcal{N}_0 \subseteq \mathcal{N} \rightarrow \mathcal{N}_0 \subseteq \vert b \vert_\mathcal{I} \rightarrow \vert \beta \vert_\mathcal{I} = \vert \alpha \vert_\mathcal{I} $.
\item If $\alpha \in D^\textsf{ex}$, then $b \in \mathcal{M}_\textrm{fin}(\overline{D^\textsf{ex}})$ and $\beta \in D^\textsf{ex}$. By induction hypothesis, we have $ \mathcal{N}_0 \subseteq \vert b \vert_\mathcal{I} $ and $ \vert \beta \vert_\mathcal{I} \subseteq \mathcal{N} $. Hence $ \vert \alpha \vert_\mathcal{I} = \vert b \vert_\mathcal{I} \rightarrow \vert \beta \vert_\mathcal{I} \subseteq \mathcal{N}_0 \rightarrow \mathcal{N} \subseteq \mathcal{N} $ (this last inclusion follows from the fact that for any $\lambda$-term $t$, for any variable $x$ that is not free in $t$, if $(t)x$ is normalizable, then $t$ is normalizable, fact that can be proved by induction on the number of left-reductions of $(t)x$).
\end{itemize}
\end{enumerate}
\end{proof}

\begin{prop}\label{prop : typable => head}
\begin{enumerate}
\item Every typable $\lambda$-term in System $ R $ is head-normalizable.
\item Let $t \in \Lambda$, $\alpha \in D^\textsf{ex}$ and $\Gamma \in \Phi^\textsf{ex}$ such that $ \Gamma \vdash_R t : \alpha $. Then $ t $ is normalizable. 
\end{enumerate}
\end{prop}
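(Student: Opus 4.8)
The plan is to derive both statements from the reducibility method already set up, namely the adequacy Lemma~\ref{adequation} together with Lemma~\ref{head-normalizable : lemme 2}; the argument is the classical pattern ``variables are reducible, hence every typable term is reducible, hence it normalizes''.

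\emph{Part (1).} Let $\mathcal{N}$ be the set of head-normalizable $\lambda$-terms and let $\mathcal{I}$ be the interpretation with $\mathcal{I}(\gamma) = \mathcal{N}$ for every $\gamma \in A$; one first checks that $\mathcal{N}$ is saturated (a standard fact about head reduction: $(u[t/x])t_1 \ldots t_n$ is head-normalizable iff $(\lambda x.u)tt_1\ldots t_n$ is). Suppose $\Gamma \vdash_R t : \alpha$ and write $\Gamma = x_1 : a_1, \ldots, x_k : a_k$. By Lemma~\ref{head-normalizable : lemme 2}(1), $\mathcal{V} \subseteq \vert \beta \vert_\mathcal{I}$ for every $\beta \in D$; since $\vert a_j \vert_\mathcal{I}$ is either the set of all $\lambda$-terms (when $a_j = []$) or the finite intersection $\bigcap_{\beta \in \textsf{Supp}(a_j)} \vert \beta \vert_\mathcal{I}$, we get $x_j \in \vert a_j \vert_\mathcal{I}$ for each $j$. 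Applying Lemma~\ref{adequation} with $t_j := x_j$, and using $t[x_1/x_1, \ldots, x_k/x_k] = t$, we obtain $t \in \vert \alpha \vert_\mathcal{I}$; by Lemma~\ref{head-normalizable : lemme 2}(1) again, $\vert \alpha \vert_\mathcal{I} \subseteq \mathcal{N}$, so $t$ is head-normalizable.

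\emph{Part (2).} The argument is identical, now taking $\mathcal{N}$ to be the set of normalizable $\lambda$-terms (again saturated) and the corresponding interpretation $\mathcal{I}(\gamma) = \mathcal{N}$. The hypotheses $\Gamma \in \Phi^\textsf{ex}$ and $\alpha \in D^\textsf{ex}$ are precisely what make the two halves of Lemma~\ref{head-normalizable : lemme 2}(2) applicable: from $\Gamma \in \Phi^\textsf{ex}$ we get $a_j \in \mathcal{M}_{\textrm{fin}}(\overline{D^\textsf{ex}})$, so every $\beta \in \textsf{Supp}(a_j)$ lies in $\overline{D^\textsf{ex}}$ and hence $\mathcal{V} \subseteq \vert \beta \vert_\mathcal{I}$, whence $x_j \in \vert a_j \vert_\mathcal{I}$; and from $\alpha \in D^\textsf{ex}$ we get $\vert \alpha \vert_\mathcal{I} \subseteq \mathcal{N}$. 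Lemma~\ref{adequation} then yields $t = t[x_1/x_1, \ldots, x_k/x_k] \in \vert \alpha \vert_\mathcal{I} \subseteq \mathcal{N}$, i.e.\ $t$ is normalizable.

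\emph{Main obstacle.} There is no serious difficulty remaining: the reducibility lemmas carry all the weight and the proof is essentially bookkeeping. The only points needing a little care are verifying that the candidate sets $\mathcal{N}$ (head-normalizable terms in~(1), normalizable terms in~(2)) are saturated in the sense required by the definition of an interpretation, and handling multiset types---in particular the empty multiset---when checking that variables are reducible.
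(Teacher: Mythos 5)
Your proof is correct and follows essentially the same route as the paper: instantiate the interpretation with $\mathcal{I}(\gamma) = \mathcal{N}$ (head-normalizable terms for (1), normalizable terms for (2)), use Lemma~\ref{head-normalizable : lemme 2} to get $x_j \in \vert a_j \vert_\mathcal{I}$ and $\vert \alpha \vert_\mathcal{I} \subseteq \mathcal{N}$, and conclude via the adequacy Lemma~\ref{adequation} applied to the identity substitution. The only difference is that you make explicit two points the paper leaves implicit (saturation of $\mathcal{N}$ and the passage from $\mathcal{V} \subseteq \vert \beta \vert_\mathcal{I}$ for each $\beta$ in the support to $x_j \in \vert a_j \vert_\mathcal{I}$, including the empty-multiset case), which is fine.
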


\begin{proof}
\begin{enumerate}
\item Let $\Gamma$ be the context $x_1 : a_1, \ldots, x_k : a_k$. For any $ \gamma \in A $, we set $ \mathcal{I}(\gamma) = \mathcal{N} $, where $ \mathcal{N} $ is the set of head-normalizable terms. By Lemma \ref{head-normalizable : lemme 2} (i), we have $x_1 \in \vert a_1 \vert_\mathcal{I} $, $\ldots$, $x_k \in \vert a_k \vert_\mathcal{I}$. Hence, by Lemma \ref{adequation}, we have $ t = t[x_1/x_1, \ldots, x_k/x_k] \in \vert \alpha \vert_\mathcal{I} $. Using again Lemma \ref{head-normalizable : lemme 2} (i), we obtain $ \vert \alpha \vert_\mathcal{I} \subseteq \mathcal{N} $.
\item Let $\Gamma$ be the context $ x_1 : a_1, \ldots, x_k : a_k $. For any $ \gamma \in A $, we set $ \mathcal{I}(\gamma) = \mathcal{N} $, where $ \mathcal{N} $ is the set of normalizable terms. By Lemma \ref{head-normalizable : lemme 2} (ii), we have $x_1 \in \vert a_1 \vert_\mathcal{I}$, $\ldots$, $x_k \in \vert a_k \vert_\mathcal{I}$. Hence, by Lemma \ref{adequation}, we have $ t = t[x_1/x_1, \ldots, x_k/x_k] \in \vert \alpha \vert_\mathcal{I} $. Using again Lemma \ref{head-normalizable : lemme 2} (ii), we obtain $ \vert \alpha \vert_\mathcal{I} \subseteq \mathcal{N} $.
\end{enumerate}
\end{proof}

\begin{theorem}\label{th normalisable de tete}
\begin{enumerate}
\item For any $ t \in \Lambda $, $ t $ is head-normalizable if, and only if, $ t $ is typable in System $ R $.
\item For any $ t \in \Lambda $, $ t $ is normalizable if, and only if, there exist $ (\Gamma, \alpha) \in \Gamma^\textsf{ex} \times D^\textsf{ex}$ such that $ \Gamma \vdash_R t : \alpha $.
\end{enumerate}
\end{theorem}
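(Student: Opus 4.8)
The plan is to observe that this theorem is nothing more than the conjunction of the two preceding propositions, each of which supplies one direction of each equivalence; so the proof will be a short assembly with no new argument.

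For item (1), the forward implication, that every head-normalizable $\lambda$-term is typable in System $R$, is precisely Proposition~\ref{prop : head => typable}(i), and the converse, that every $\lambda$-term typable in System $R$ is head-normalizable, is precisely Proposition~\ref{prop : typable => head}(i); putting the two together yields the stated equivalence. For item (2), if $t$ is normalizable then Proposition~\ref{prop : head => typable}(ii) produces a pair $(\Gamma, \alpha) \in \Phi^\textsf{ex} \times D^\textsf{ex}$ with $\Gamma \vdash_R t : \alpha$, and conversely, given such a pair, Proposition~\ref{prop : typable => head}(ii) gives that $t$ is normalizable; again the two implications combine into the equivalence. I would write the proof essentially as ``By Proposition~\ref{prop : head => typable} and Proposition~\ref{prop : typable => head}.''

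The substantive work has all been done upstream, so I do not expect any real obstacle here: the ``typable $\Rightarrow$ (head-)normalizable'' halves rest on the reducibility argument of Lemma~\ref{adequation} together with the identification of the saturated sets $\vert \alpha \vert_\mathcal{I}$ in Lemma~\ref{head-normalizable : lemme 2}, while the ``(head-)normalizable $\Rightarrow$ typable'' halves rest on subject conversion (Corollary~\ref{corollary : subject conversion}) applied to a redex contracting to a variable, respectively on an induction over the shape of normal forms. The only point to be careful about is that in item (2) the witnessing context must lie in $\Phi^\textsf{ex}$ and the witnessing type in $D^\textsf{ex}$; but this is exactly the form in which Proposition~\ref{prop : head => typable}(ii) and Proposition~\ref{prop : typable => head}(ii) are stated, so no additional verification is needed, and the proof reduces to citing those two results.
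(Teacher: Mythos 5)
Your proof is correct and matches the paper's own proof exactly: the theorem is proved by citing Proposition~\ref{prop : head => typable} and Proposition~\ref{prop : typable => head}, item by item, with no additional argument needed. Your remarks on where the substantive work lives (the reducibility argument and the induction on normal forms) accurately reflect the structure of those upstream results.
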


\begin{proof}
\begin{enumerate}
\item Apply Proposition \ref{prop : head => typable} (i) and Proposition \ref{prop : typable => head} (i).
\item Apply Proposition \ref{prop : head => typable} (ii) and Proposition \ref{prop : typable => head} (ii).
\end{enumerate}
\end{proof}

This theorem is not surprising: although System $ R $ is not considered in \cite{Honsell}, it is quite obvious that its typing power is the same as that of the systems containing $ \Omega $ considered in this paper. We can note here a difference with Systems $ \lambda $ and $ \mathbb{I} $ already mentioned: in those systems, only strongly normalizable terms are typable. Of course, such systems characterizing the strongly normalizable terms, cannot be in correspondence with a denotational semantics of $\lambda$-calculus.

\begin{corollaire}\label{cor:qualitative}
Let $v$ and $u$ two closed normal terms.
\begin{enumerate}
\item The $\lambda$-term $(v)u$ is head-normalizable if, and only if, there exist $a \in \mathcal{M}_{\textrm{fin}}(\llbracket u \rrbracket)$ and $\alpha \in D$ such that $(a, \alpha) \in \llbracket v \rrbracket$.
\item The $\lambda$-term $(v)u$ is normalizable if, and only if, there exist $a \in \mathcal{M}_{\textrm{fin}}(\llbracket u \rrbracket)$ and $\alpha \in D^\textsf{ex}$ such that $(a, \alpha) \in \llbracket v \rrbracket$.
\end{enumerate}
\end{corollaire}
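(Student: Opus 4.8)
The plan is to reduce both statements to the qualitative characterization of (head-)normalizable terms via typability (Theorem~\ref{th normalisable de tete}), combined with the identification of the semantics of a closed term with its set of types in System~$R$ (Theorem~\ref{Theo : semantique = types}), and then to perform an inversion on the application typing rule.

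First I would record the basic observation that, since the typing rules of System~$R$ contain no weakening, in any derivation of $\Gamma \vdash_R t : \alpha$ every variable $x$ with $\Gamma(x) \neq []$ is free in $t$; hence for a closed term $w$ the context is forced to be the everywhere-$[]$ context, and Theorem~\ref{Theo : semantique = types} (with $m=0$) specializes to $\llbracket w \rrbracket = \{ \alpha \in D \; / \; {\vdash_R} w : \alpha \}$. Moreover that context lies in $\Phi^{\textsf{ex}}$, which is what part~(ii) will need.

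For part~(i): by Theorem~\ref{th normalisable de tete}~(i), $(v)u$ is head-normalizable iff it is typable, i.e., by the previous remark, iff ${\vdash_R} (v)u : \alpha$ for some $\alpha \in D$. The only typing rule whose conclusion is a judgement about an application is the application rule, so by inversion this holds iff there are $n \in \mathbb{N}$ and $\alpha_1, \ldots, \alpha_n \in D$ with ${\vdash_R} v : ([\alpha_1, \ldots, \alpha_n], \alpha)$ and ${\vdash_R} u : \alpha_i$ for $1 \le i \le n$; conversely, any such data yields ${\vdash_R} (v)u : \alpha$ by a single application of that rule. Translating the $v$- and $u$-judgements through Theorem~\ref{Theo : semantique = types}, this is precisely the existence of $a = [\alpha_1, \ldots, \alpha_n] \in \mathcal{M}_{\textrm{fin}}(\llbracket u \rrbracket)$ and $\alpha \in D$ with $(a, \alpha) \in \llbracket v \rrbracket$.

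Part~(ii) follows the same pattern, now invoking Theorem~\ref{th normalisable de tete}~(ii): $(v)u$ is normalizable iff ${\vdash_R} (v)u : \alpha$ for some $\alpha \in D^{\textsf{ex}}$ (the context being empty, hence in $\Phi^{\textsf{ex}}$). Inversion on the application rule and Theorem~\ref{Theo : semantique = types} again produce $a \in \mathcal{M}_{\textrm{fin}}(\llbracket u \rrbracket)$ with $(a, \alpha) \in \llbracket v \rrbracket$; the only additional points are that the side condition $\alpha \in D^{\textsf{ex}}$ is transported verbatim, and that in the converse direction the intermediate types $\alpha_i$ occurring in the derivation of ${\vdash_R} v : (a,\alpha)$ are irrelevant, since Theorem~\ref{th normalisable de tete}~(ii) constrains only the final type and context. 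I do not expect a genuine obstacle: the substantive content is already packaged in Theorems~\ref{th normalisable de tete} and~\ref{Theo : semantique = types}, and the only steps requiring care are the (routine) inversion of the $(n+1)$-premise application rule, the empty-context observation for closed terms, and the bookkeeping of the $D^{\textsf{ex}}$ constraint in part~(ii).
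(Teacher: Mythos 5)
Your proof is correct and is exactly the derivation the paper intends (it states the corollary without proof, immediately after Theorem~\ref{th normalisable de tete}): combine Theorem~\ref{th normalisable de tete} with Theorem~\ref{Theo : semantique = types}, note that the absence of weakening forces the empty context for closed terms (and the empty context lies in $\Phi^{\textsf{ex}}$), and invert the application rule. Your remark that only the conclusion's type $\alpha$, not the argument types $\alpha_i$, must lie in $D^{\textsf{ex}}$ in part~(ii) is precisely the point that needs care, and you handle it correctly.
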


\section{Quantitative results}\label{section:quantitative}

We now turn our attention to the quantitative aspects of reduction. The aim is to give a purely semantic account of execution time. Of course, if $t'$ is the normal form of $t$, we know that $\llbracket t \rrbracket = \llbracket t' \rrbracket$, so that from $\llbracket t \rrbracket$ it is clearly impossible to determine the number of reduction steps from $t$ to $t'$. Nevertheless, if $v$ and $u$ are two normal $\lambda$-terms, we can wonder what is the number of steps leading from $(v)u$ to its (principal head) normal form. We prove in this section that we can answer the question by only referring to $\llbracket v \rrbracket$ and $\llbracket u \rrbracket$ (Theorem \ref{theorem semantics}).

\subsection{Type Derivations for States}

We now extend the type derivations for $\lambda$-terms to type derivations for closures (Definition \ref{definition : derivation for closures}) and for states (Definition \ref{definition : derivation for states}). We will define also the size $\vert \Pi \vert$ of such derivations $\Pi$. Naturally, the size $\vert \Pi \vert$ of a derivation $\Pi$ of System R is quite simply its size as a tree, i.e. the number of its nodes; moreover, for any $n \in \mathbb{N}$, for any $(\Pi_1, \ldots, \Pi_n) \in \Delta^n$, we set $\vert (\Pi_1, \ldots, \Pi_n) \vert = \sum_{i=1}^n \vert \Pi_i \vert$.

\begin{definition}\label{definition : derivation for closures} 
For any closure $c = (t, e)$, for any $(\Gamma, \alpha) \in \Phi \times D$ (respectively $(\Gamma, a) \in \Phi \times \mathcal{M}_{\textrm{fin}}(D)$), we define, by induction on $\textsf{d}(e)$, what is a derivation $\Pi$ of $\Gamma \vdash c : \alpha$ (respectively $\Gamma \vdash c : a$) and what is $\vert \Pi \vert$ for such a derivation:
\begin{itemize}
\item A derivation of $\Gamma \vdash (t, \bigcup_{j=1}^m \{ (x_j, c_j) \} ) : \alpha$ is a pair $(\Pi_0, \bigcup_{j=1}^m \{ (x_j, \Pi_j) \} ) $, where
\begin{itemize}
\item $\Pi_0$ is a derivation of $\Gamma_0, \: x_1 : a_1, \ldots, x_m : a_m \vdash_R t:\alpha$;
\item for any $ j \in \{1, \ldots, m \}$, $\Pi_j$ is a derivation of $ \Gamma_j \vdash c_j : a_j $;
\item and $\Gamma = \sum_{j=0}^m \Gamma_j$.
\end{itemize}
If $ \Pi = (\Pi_0, \bigcup_{j=1}^m \{ (x_j, \Pi_j) \}) $ is a derivation of $ \Gamma \vdash c : \alpha $, then we set $ \vert \Pi \vert = \sum_{j=0}^m \vert \Pi_j \vert $. 
\item For any integer $ p $, a derivation of $ \Gamma \vdash c : [\alpha_1, \ldots, \alpha_p] $ is a $ p $-tuple $ (\Pi^1, \ldots, \Pi^p) $ such that there exists $ (\Gamma^1, \ldots, \; \Gamma^p) \in \Phi^p $ and  
\begin{itemize}
\item for $ 1 \leq i \leq p $, $ \Pi^i $ is a derivation of $ \Gamma^i \vdash c : \alpha_i $;
\item and $ \Gamma = \sum_{i=1}^p \Gamma^i $.
\end{itemize}
If $ \Pi = (\Pi^1, \ldots, \Pi^p) $ is a derivation of $ \Gamma \vdash c : a $, then we set $ \vert \Pi \vert = \sum_{i=1}^p \vert \Pi^i \vert $.
\end{itemize}
\end{definition}

Definition \ref{definition : derivation for closures} is not so easy to use directly. This is why we introduce Lemmas \ref{KAM = D cas application} and \ref{lemme : derivation cloture : cas abstraction}, that will be useful for proving Propositions \ref{Prop KAM = D} and \ref{prop:normal_principal}.

\begin{lemme}\label{KAM = D cas application}
Let $ ((v)u, e) \in \mathcal{C} $. For any $ b \in \mathcal{M}_{\textrm{fin}}(D) $, $ \Gamma', $ $ \Gamma'' \in \Phi $, if $ \Pi' $ is a derivation of $ \Gamma' \vdash (v, e) : (b, \alpha) $ and $ \Pi'' $ is a derivation of $ \Gamma'' \vdash (u, e) : b $, then there exists a derivation $ \Pi $ of $
\Gamma'+ \Gamma'' \vdash ((v)u, e) : \alpha $ such that $ \vert \Pi \vert = \vert \Pi' \vert + \vert \Pi'' \vert +1 $.
\end{lemme}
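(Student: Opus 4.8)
The plan is to unfold Definition~\ref{definition : derivation for closures} on the two hypotheses, reorganise the resulting sub-derivations, and close with a single application of the application rule of Definition~\ref{definition:System R}; no induction on $\textsf{d}(e)$ is needed, since the sub-closure derivations will only be permuted and concatenated, never transformed. Write $e = \bigcup_{j=1}^m \{ (x_j, c_j) \}$, where $m = 0$ handles the case $\textsf{d}(e) = 0$, and fix $\beta_1, \ldots, \beta_p \in D$ with $b = [\beta_1, \ldots, \beta_p]$.

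First I would unfold the hypotheses. By the first clause of Definition~\ref{definition : derivation for closures}, $\Pi' = (\Pi'_0, \bigcup_{j=1}^m \{ (x_j, \Pi'_j) \})$, where $\Pi'_0$ derives $\Gamma'_0, x_1 : a'_1, \ldots, x_m : a'_m \vdash_R v : (b, \alpha)$, each $\Pi'_j$ derives $\Gamma'_j \vdash c_j : a'_j$, $\Gamma' = \sum_{j=0}^m \Gamma'_j$ and $\vert \Pi' \vert = \sum_{j=0}^m \vert \Pi'_j \vert$. By the second clause, $\Pi'' = (\Pi''_1, \ldots, \Pi''_p)$ where each $\Pi''_i$ derives some $\Theta_i \vdash (u, e) : \beta_i$, $\Gamma'' = \sum_{i=1}^p \Theta_i$ and $\vert \Pi'' \vert = \sum_{i=1}^p \vert \Pi''_i \vert$; and unfolding each $\Pi''_i$ once more, $\Pi''_i = (\Pi''_{i,0}, \bigcup_{j=1}^m \{ (x_j, \Pi''_{i,j}) \})$ with $\Pi''_{i,0}$ deriving $\Theta_{i,0}, x_1 : b_{i,1}, \ldots, x_m : b_{i,m} \vdash_R u : \beta_i$, each $\Pi''_{i,j}$ deriving $\Theta_{i,j} \vdash c_j : b_{i,j}$, $\Theta_i = \sum_{j=0}^m \Theta_{i,j}$ and $\vert \Pi''_i \vert = \sum_{j=0}^m \vert \Pi''_{i,j} \vert$.

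Then I would reassemble. Feeding $\Pi'_0$ as the left premise and $\Pi''_{1,0}, \ldots, \Pi''_{p,0}$ as the $p$ right premises into the application rule of Definition~\ref{definition:System R} (taking $n = p$) yields a derivation $\Pi_0$ of $\Gamma_0, x_1 : a_1, \ldots, x_m : a_m \vdash_R (v)u : \alpha$ with $\Gamma_0 = \Gamma'_0 + \sum_{i=1}^p \Theta_{i,0}$, $a_j = a'_j + \sum_{i=1}^p b_{i,j}$ and $\vert \Pi_0 \vert = \vert \Pi'_0 \vert + \sum_{i=1}^p \vert \Pi''_{i,0} \vert + 1$, the single extra node being the application rule. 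For each $j \in \{ 1, \ldots, m \}$, concatenating the tuple $\Pi'_j$ with the tuples $\Pi''_{1,j}, \ldots, \Pi''_{p,j}$ gives, by the second clause of Definition~\ref{definition : derivation for closures}, a derivation $\Pi_j$ of $\Gamma_j \vdash c_j : a_j$ with $\Gamma_j = \Gamma'_j + \sum_{i=1}^p \Theta_{i,j}$ and $\vert \Pi_j \vert = \vert \Pi'_j \vert + \sum_{i=1}^p \vert \Pi''_{i,j} \vert$; here it matters that the environment $e$ is literally shared, so the closures $c_j$ are common to $(v, e)$, $(u, e)$ and $((v)u, e)$, which is what makes the concatenation a well-formed derivation against the single closure $c_j$. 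Put $\Pi = (\Pi_0, \bigcup_{j=1}^m \{ (x_j, \Pi_j) \})$.

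It then remains to verify two identities. For the context, $\sum_{j=0}^m \Gamma_j = \sum_{j=0}^m \Gamma'_j + \sum_{i=1}^p \sum_{j=0}^m \Theta_{i,j} = \Gamma' + \sum_{i=1}^p \Theta_i = \Gamma' + \Gamma''$, using associativity and commutativity of $+$ on $\Phi$ to exchange the two finite sums; together with the typings produced above, the first clause of Definition~\ref{definition : derivation for closures} is met, so $\Pi$ is a derivation of $\Gamma' + \Gamma'' \vdash ((v)u, e) : \alpha$. For the size, $\vert \Pi \vert = \sum_{j=0}^m \vert \Pi_j \vert = 1 + \sum_{j=0}^m \vert \Pi'_j \vert + \sum_{i=1}^p \sum_{j=0}^m \vert \Pi''_{i,j} \vert = 1 + \vert \Pi' \vert + \sum_{i=1}^p \vert \Pi''_i \vert = \vert \Pi' \vert + \vert \Pi'' \vert + 1$. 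I do not expect a genuine obstacle: the argument is pure bookkeeping. The only points needing care are keeping the doubly indexed families $(\Theta_{i,j}, b_{i,j})$ straight, and observing that the degenerate cases $m = 0$ (i.e.\ $\textsf{d}(e) = 0$) and $p = 0$ (i.e.\ $b = []$, where the application rule specialises as noted after Definition~\ref{definition:System R}) are subsumed by reading the corresponding sums as empty.
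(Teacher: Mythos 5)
Your proof is correct and follows essentially the same route as the paper's: unfold both closure derivations per Definition~\ref{definition : derivation for closures}, combine the $\lambda$-term sub-derivations with one instance of the application rule, concatenate the sub-derivations for each shared closure $c_j$, and check the context and size identities. The only differences are notational, plus your (welcome but inessential) explicit remarks about the degenerate cases $m=0$, $p=0$ and about why sharing of $e$ makes the concatenation well-formed.
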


\begin{proof}
Set $ e = \bigcup_{j=1}^m \{ (x_j, c_j) \} $ and $ \Pi' = (\Pi'_0, \bigcup_{j=1}^m \{ (x_j, \Pi'_j) \}) $, where
\begin{itemize}
\item $ \Pi'_0 $ is a derivation of $ \Gamma'_0, x_1 : a'_1, \ldots, x_m : a'_m \vdash_R v : (b, \alpha) $;
\item for $ 1 \leq j \leq m $, $ \Pi'_j $ is a derivation of $ \Gamma'_j \vdash c_j : a'_j $;
\item and $ \Gamma' = \sum_{j=0}^m \Gamma'_j $. 
\end{itemize}
Set $ b = [\beta_1, \ldots, \beta_p ] $ and $ \Pi'' = ({\Pi''}^1, \ldots, {\Pi''}^p) $ where, for $ 1 \leq k \leq p $, 
$ {\Pi''}^k $ is a derivation $({\Pi''}_0^k, \bigcup_{j=1}^m \{ (x_j, {\Pi''}_j^k) \} )$
of $ {\Gamma''}^k \vdash (u, e) : \beta_k $ with $ \Gamma'' = \sum_{k=1}^p {\Gamma''}^k $. For $ k \in \{ 1, \ldots, p \} $, 
\begin{itemize}
\item $ {\Pi''}_0^k $ is a derivation of $ {\Gamma''}_0^k, x_1 : {a''}_1^k, \ldots, x_m : {a''}_m^k \vdash_R u : \beta_k $;
\item for $ 1 \leq j \leq m $, $ {\Pi''}_j^k $ is a derivation of $ {\Gamma''}_j^k \vdash c_j : {a''}_j^k $;
\item and $ {\Gamma''}^k = \sum_{j=0}^m {\Gamma''}_j^k $. 
\end{itemize}
For $ j \in \{ 0, \ldots, m \} $, we set $ \Gamma_j = \Gamma'_j + \sum_{k=1}^p {\Gamma''}_j^k $ and $ a_j = a'_j + \sum_{k=1}^p {a''}_j^k $. There exists a derivation $ \Pi_0 $ of $ \Gamma_0, x_1 : a_1, \ldots, x_m : a_m \vdash_R (v)u : \alpha $ with $ \vert \Pi_0 \vert = \vert \Pi'_0 \vert + \sum_{k=1}^p \vert {\Pi''}_0^k \vert + 1$. Moreover, for $j \in \{ 1, \ldots, m \}$, $ \Pi_j = \Pi'_j \ast {\Pi''}_j^1 \ast \ldots \ast {\Pi''}_j^p $, where $ \ast $ is the concatenation of finite sequences, is a derivation of $ \Gamma_j \vdash c_j : a_j $. We have
\begin{eqnarray*}
\sum_{j=0}^m \Gamma_j & = & \sum_{j=0}^m (\Gamma'_j + \sum_{k=1}^p {\Gamma''}_j^k) \allowdisplaybreaks \\
& = & \sum_{j=0}^m \Gamma'_j + \sum_{j=0}^m \sum_{k=1}^p {\Gamma''}_j^k \allowdisplaybreaks \\ 
& = & \Gamma' + \Gamma'' .
\end{eqnarray*}
Hence $ \Pi = (\Pi_0, \bigcup_{j=1}^m \{ (x_j, \Pi_j) \}) $ is a derivation of $
\Gamma'+ \Gamma'' \vdash ((v)u, e) : \alpha $. We have 
\begin{eqnarray*}
\vert \Pi \vert & = & \sum_{j=0}^m \vert \Pi_j \vert \allowdisplaybreaks \\
& = & \vert \Pi'_0 \vert + \sum_{k=1}^p \vert {\Pi''}_0^k \vert + 1 + \sum_{j=1}^m (\vert \Pi'_j \vert + \sum_{k=1}^p \vert {\Pi''}_j^k \vert) \allowdisplaybreaks \\
& = & \vert \Pi'\vert + \sum_{k=1}^p \vert {\Pi''}^k \vert +1 \allowdisplaybreaks \\
& = & \vert \Pi'\vert + \vert \Pi'' \vert + 1 . 
\end{eqnarray*}
\end{proof}

\begin{lemme}\label{lemme : derivation cloture : cas abstraction}
For any closure $ (u, e) $, for any derivation $ \Pi' $ of $ \Gamma, x : b \vdash (u, e) : \beta $, there exists a derivation $ \Pi $ of $ \Gamma \vdash (\lambda x. u, e) : (b, \beta) $ such that $\vert \Pi \vert = \vert \Pi' \vert + 1 $.
\end{lemme}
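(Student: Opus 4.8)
The plan is to unfold Definition~\ref{definition : derivation for closures} once, apply the abstraction rule of System~$R$ to the sub-derivation that concerns the $\lambda$-term $u$, and then repackage the result as a derivation for the closure $(\lambda x. u, e)$, checking that exactly one node has been added.

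First I would write $e = \bigcup_{j=1}^m \{ (x_j, c_j) \}$ and, by Definition~\ref{definition : derivation for closures}, decompose $\Pi'$ as $(\Pi'_0, \bigcup_{j=1}^m \{ (x_j, \Pi'_j) \})$, where $\Pi'_0$ is a derivation of $\Gamma_0, x_1 : a_1, \ldots, x_m : a_m \vdash_R u : \beta$ for some context $\Gamma_0$, each $\Pi'_j$ is a derivation of $\Gamma_j \vdash c_j : a_j$, and the contexts add up: $\Gamma, x : b = \sum_{j=0}^m \Gamma_j$. Since $x$ is the variable bound by the $\lambda$, the variable convention keeps it distinct from every $x_j \in \textsf{dom}(e)$ and disjoint from the free variables occurring in the environment, hence $x$ does not occur in the contexts $\Gamma_1, \ldots, \Gamma_m$; therefore $\Gamma_j(x) = []$ for $1 \leq j \leq m$ and $\Gamma_0(x) = b$. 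Writing $\Gamma_0 = \Gamma'_0, x : b$ with $\Gamma'_0(x) = []$, this exhibits $\Pi'_0$ as a derivation of $\Gamma'_0, x_1 : a_1, \ldots, x_m : a_m, x : b \vdash_R u : \beta$ and gives $\Gamma = \Gamma'_0 + \sum_{j=1}^m \Gamma_j$.

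Then I would apply the abstraction rule of Definition~\ref{definition:System R} to $\Pi'_0$, obtaining a derivation $\Pi_0$ of $\Gamma'_0, x_1 : a_1, \ldots, x_m : a_m \vdash_R \lambda x. u : (b, \beta)$ with $\vert \Pi_0 \vert = \vert \Pi'_0 \vert + 1$, and set $\Pi = (\Pi_0, \bigcup_{j=1}^m \{ (x_j, \Pi'_j) \})$. Matching this against Definition~\ref{definition : derivation for closures}: $\Pi_0$ types $\lambda x. u$ with the $x_j$ still assigned $a_j$, the $\Pi'_j$ still type $c_j : a_j$, and the contexts sum to $\Gamma'_0 + \sum_{j=1}^m \Gamma_j = \Gamma$, so $\Pi$ is indeed a derivation of $\Gamma \vdash (\lambda x. u, e) : (b, \beta)$. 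Adding up sizes, $\vert \Pi \vert = \vert \Pi_0 \vert + \sum_{j=1}^m \vert \Pi'_j \vert = \vert \Pi'_0 \vert + 1 + \sum_{j=1}^m \vert \Pi'_j \vert = \vert \Pi' \vert + 1$. The base case $m = 0$ (empty environment) is just the abstraction rule itself.

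The argument is essentially bookkeeping, and the one point that deserves care is the step flagged above: isolating the declaration $x : b$ inside the $\Gamma_0$-component of $\Pi'_0$, which is precisely where the bound variable $x$ must be kept disjoint from $\textsf{dom}(e)$ and from the free variables carried by the environment closures — guaranteed by the variable convention. Everything else is a direct comparison of the two instances of Definition~\ref{definition : derivation for closures}.
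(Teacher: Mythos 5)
Your proposal is correct and follows essentially the same route as the paper: decompose $\Pi'$ as $(\Pi'_0, \bigcup_{j=1}^m \{ (x_j, \Pi'_j) \})$, apply the abstraction rule of System $R$ to $\Pi'_0$ to obtain $\Pi_0$, and repackage as $\Pi = (\Pi_0, \bigcup_{j=1}^m \{ (x_j, \Pi'_j) \})$, with the same size count. You are in fact slightly more careful than the paper, which silently assumes the declaration $x : b$ lives entirely in the $\Gamma_0$-component; your explicit appeal to the variable convention to justify $\Gamma_j(x) = []$ for $j \geq 1$ is exactly the point that needed flagging.
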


\begin{proof}
We set $ e = \bigcup_{j=1}^m \{ (x_j, c_j) \} $ and $ \Pi' = (\Pi'_0, \bigcup_{j=1}^m \{ (x_j, \Pi'_j) \} ) $. We know that $ \Pi'_0 $ is a derivation of $ \Gamma, x : b, x_1 : a_1, \ldots, x_m : a_m \vdash_R u : \beta $, hence there exists a derivation $ \Pi_0 $ of $ \Gamma, x_1 : a_1, \ldots, x_m : a_m \vdash_R \lambda x. u : (b, \beta) $. We set $ \Pi = (\Pi_0, \bigcup_{j=1}^m \{ (x_j, \Pi'_j) \} ) $: it is a derivation of $ \Gamma \vdash (\lambda x. u, e) : (b, \beta) $ and we have
\begin{eqnarray*}
\vert \Pi \vert & = & \vert \Pi_0 \vert + \sum_{j=1}^m \vert \Pi'_j \vert \allowdisplaybreaks \\
& = & \vert \Pi'_0 \vert + 1 + \sum_{j=1}^m \vert \Pi'_j \vert \allowdisplaybreaks \\
& = & \vert \Pi' \vert + 1 . 
\end{eqnarray*}
\end{proof}

\begin{definition}\label{definition : derivation for states}
Let $ s = (c_0, \ldots, c_q) $ be a state. A finite sequence $ (\Pi_0, \ldots, \Pi_q) $ is said to be a derivation of $ \Gamma \vdash s : \alpha $ if there exist $ b_1, \ldots, $ $ b_q \in \mathcal{M}_{\textrm{fin}}(D) $, $ \Gamma_0, \ldots, $ $ \Gamma_q \in \varPhi $ such that  
\begin{itemize}
\item $ \Pi_0 $ is a derivation of $ \Gamma_0 \vdash c_0 : b_1 \ldots b_q \alpha $;
\item for any $ k \in \{ 1, \ldots, q \} $, $ \Pi_k $ is a derivation of $ \Gamma_k \vdash c_k : b_k $;
\item and $ \Gamma = \sum_{k=0}^q \Gamma_k $.
\end{itemize}
In this case, we set $ \vert (\Pi_0, \ldots, \Pi_q) \vert = \sum_{k=0}^q \vert \Pi_k \vert $.
\end{definition}

As for derivations of closures, we introduce two lemmas about derivations for states, that will be useful for proving 
Propositions \ref{Prop KAM = D} and \ref{prop:normal_principal}.

\begin{lemme}\label{lemme : cas variable dans environnement}
Let $ m, j_0 \in \mathbb{N} $ such that $ 1 \leq j_0 \leq m $. Let $ s = (c'_{j_0}, c_1, \ldots, c_q) \in \mathbb{S} $, $ x_1, \ldots, x_m \in \mathcal{V} $, $ c'_1, \ldots, c'_m \in \mathcal{C} $. For any $ (\Gamma, \alpha) \in \Phi \times D $, if $ \Pi'$ is a derivation of $ \Gamma \vdash s : \alpha $, then there exists a derivation $ \Pi $ of $ \Gamma \vdash ((x_{j_0}, \bigcup_{j=1}^m \{ (x_j, c'j) \} ), c_1, \ldots, c_q) : \alpha $ such that $ \vert \Pi \vert = \vert \Pi' \vert + 1 $.
\end{lemme}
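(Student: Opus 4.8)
The plan is to peel $\Pi'$ apart according to Definition~\ref{definition : derivation for states}, keep the subderivations of the tail closures $c_1, \ldots, c_q$ unchanged, and check that replacing the head closure $c'_{j_0}$ by the closure $(x_{j_0}, e)$, with $e = \bigcup_{j=1}^m \{ (x_j, c'_j) \}$, costs exactly one additional node: the axiom leaf typing the variable $x_{j_0}$. This is the step matching the Krivine transition in which the current subterm is a variable bound in the environment.

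First I would apply Definition~\ref{definition : derivation for states} to $s = (c'_{j_0}, c_1, \ldots, c_q)$: there are $b_1, \ldots, b_q \in \mathcal{M}_{\textrm{fin}}(D)$ and $\Gamma_0, \ldots, \Gamma_q \in \Phi$ with $\Gamma = \sum_{k=0}^q \Gamma_k$ such that $\Pi' = (\Pi'_0, \ldots, \Pi'_q)$, where $\Pi'_0$ is a derivation of $\Gamma_0 \vdash c'_{j_0} : b_1 \ldots b_q \alpha$, each $\Pi'_k$ (for $1 \leq k \leq q$) is a derivation of $\Gamma_k \vdash c_k : b_k$, and $\vert \Pi' \vert = \sum_{k=0}^q \vert \Pi'_k \vert$. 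Write $\gamma = b_1 \ldots b_q \alpha \in D$.

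Next I would build a derivation $\Pi_0$ of $\Gamma_0 \vdash (x_{j_0}, e) : \gamma$ with $\vert \Pi_0 \vert = \vert \Pi'_0 \vert + 1$. By Definition~\ref{definition : derivation for closures}, such a $\Pi_0$ is a pair $(\Sigma_0, \bigcup_{j=1}^m \{ (x_j, \Sigma_j) \})$ in which $\Sigma_0$ derives $\Delta_0, x_1 : a_1, \ldots, x_m : a_m \vdash_R x_{j_0} : \gamma$ for some $\Delta_0 \in \Phi$ and $a_1, \ldots, a_m \in \mathcal{M}_{\textrm{fin}}(D)$, each $\Sigma_j$ derives $\Delta_j \vdash c'_j : a_j$, and $\Gamma_0 = \sum_{j=0}^m \Delta_j$. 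The only rule of System~$R$ whose conclusion types a variable is the axiom rule, so $\Sigma_0$ must be a single leaf; this forces $\Delta_0$ to be the empty context, $a_j = []$ for every $j \neq j_0$, and $a_{j_0} = [\gamma]$. I would then take $\Delta_j$ to be the empty context for every $j \in \{ 0, \ldots, m \} \setminus \{ j_0 \}$, $\Delta_{j_0} = \Gamma_0$ (so that $\sum_{j=0}^m \Delta_j = \Gamma_0$), $\Sigma_j$ the empty sequence for $j \neq j_0$, and $\Sigma_{j_0}$ the one-element sequence $(\Pi'_0)$, which is indeed a derivation of $\Gamma_0 \vdash c'_{j_0} : [\gamma]$. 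Since $\vert \Sigma_0 \vert = 1$ and $\vert \Sigma_j \vert = 0$ for $j \neq j_0$, we get $\vert \Pi_0 \vert = \vert \Sigma_0 \vert + \sum_{j=1}^m \vert \Sigma_j \vert = 1 + \vert \Pi'_0 \vert$.

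Finally I would set $\Pi = (\Pi_0, \Pi'_1, \ldots, \Pi'_q)$ and check, via Definition~\ref{definition : derivation for states} with the same $b_1, \ldots, b_q$, the contexts $\Gamma_1, \ldots, \Gamma_q$ for $c_1, \ldots, c_q$ and $\Gamma_0$ for the head closure $(x_{j_0}, e)$, that $\Pi$ is a derivation of $\sum_{k=0}^q \Gamma_k \vdash ((x_{j_0}, e), c_1, \ldots, c_q) : \alpha$, that is, of $\Gamma \vdash ((x_{j_0}, e), c_1, \ldots, c_q) : \alpha$; moreover $\vert \Pi \vert = \vert \Pi_0 \vert + \sum_{k=1}^q \vert \Pi'_k \vert = \vert \Pi'_0 \vert + 1 + \sum_{k=1}^q \vert \Pi'_k \vert = \vert \Pi' \vert + 1$. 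There is no real obstacle here; the one point to get right is that the axiom rule rigidly determines the shape of $\Sigma_0$---hence the single extra node---while the empty-multiset components contribute nothing to the size and the contexts recombine exactly to $\Gamma$.
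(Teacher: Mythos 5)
Your construction is exactly the paper's: split the state derivation, replace the head component by the pair consisting of the axiom leaf $x_{j_0} : [b_1\ldots b_q\alpha] \vdash_R x_{j_0} : b_1\ldots b_q\alpha$ and an environment assignment sending $x_{j_0}$ to the singleton sequence $(\Pi'_0)$ and every other variable to the empty sequence, then recount the size to pick up exactly the one new leaf. The argument is correct and matches the paper's proof step for step.
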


\begin{proof}
We set $ \Pi'= (\Pi'', \Pi_1, \ldots, \Pi_q) $ with $ \Pi'' $ a derivation of $ \Gamma'' \vdash c'_{j_0} : b_1 \ldots b_q \alpha $. We denote by $ \Pi_0 $ the derivation of $ x : [b_1 \ldots b_q \alpha ] \vdash_R x : b_1 \ldots b_q \alpha $. For any $ j \in \{ 1, \ldots, m \} $, we set  
$$ \Pi''_j = \left\{ \begin{array}{ll} (\Pi'') & \textrm{ if $ j = j_0$~;} \\ \epsilon & \textrm{ else.} \end{array} \right.$$
The sequence $ ((\Pi_0, \bigcup_{j=1}^m \{ (x_j, \Pi''_j) \} ), \Pi_1, \ldots, \Pi_q ) $ is a derivation of 
$$ \Gamma \vdash ((x_{j_0}, \bigcup_{j=1}^m \{ (x_j, c'j) \} ), c_1, \ldots, c_q) : \alpha $$ 
and we have 
\begin{eqnarray*}
\vert ((\Pi_0, \bigcup_{j=1}^m \{ (x_j, \Pi''_j) \}) , \Pi_1, \ldots, \Pi_q) \vert & = & \vert \Pi_0 \vert + \sum_{j=1}^m \vert \Pi''_j \vert + \sum_{k=1}^q \vert \Pi_k \vert \allowdisplaybreaks \\
& = & 1 + \vert \Pi'' \vert + \sum_{k=1}^q \vert \Pi_k \vert \allowdisplaybreaks \\
& = & 1 + \vert \Pi' \vert .
\end{eqnarray*}
\end{proof}

\begin{lemme}\label{lemme : derivation etat : cas abstraction et pile non vide}
For any state $ s = ((u, \{ (x, c) \} \cup e), c_1, \ldots, c_q) $, for any derivation $ \Pi' $ of $ \Gamma \vdash s : \alpha $, there exists a derivation $ \Pi $ of $ \Gamma \vdash ((\lambda x. u, e), c, c_1, \ldots, c_q) : \alpha$ such that $ \vert \Pi \vert = \vert \Pi' \vert + 1 $.
\end{lemme}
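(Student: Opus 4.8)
The plan is to unfold the given derivation of $s$ using Definitions~\ref{definition : derivation for closures} and~\ref{definition : derivation for states}, perform a single abstraction step in System~$R$, and reassemble the pieces. First I write $\Pi' = (\Pi_0, \Pi_1, \ldots, \Pi_q)$: by Definition~\ref{definition : derivation for states} there are $b_1, \ldots, b_q \in \mathcal{M}_{\textrm{fin}}(D)$ and $\Gamma_0, \ldots, \Gamma_q \in \Phi$ such that $\Pi_0$ is a derivation of $\Gamma_0 \vdash (u, \{(x,c)\} \cup e) : b_1 \ldots b_q \alpha$, each $\Pi_k$ (for $1 \le k \le q$) is a derivation of $\Gamma_k \vdash c_k : b_k$, $\Gamma = \sum_{k=0}^q \Gamma_k$, and $\vert\Pi'\vert = \sum_{k=0}^q \vert\Pi_k\vert$. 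Writing $e = \bigcup_{j=1}^m \{(x_j, e_j)\}$ with $\textsf{dom}(e) = \{x_1,\ldots,x_m\}$ and noting that $x \notin \textsf{dom}(e)$ since $\{(x,c)\}\cup e$ is a genuine environment, Definition~\ref{definition : derivation for closures} gives $\Pi_0 = (\Pi_0^R, \{(x, \Pi_c)\} \cup \bigcup_{j=1}^m \{(x_j, \Sigma_j)\})$, where, for suitable $b, a_1, \ldots, a_m \in \mathcal{M}_{\textrm{fin}}(D)$ and $\Gamma_0', \Gamma_c, \Delta_1, \ldots, \Delta_m \in \Phi$: the derivation $\Pi_0^R$ derives $\Gamma_0', x : b, x_1 : a_1, \ldots, x_m : a_m \vdash_R u : b_1 \ldots b_q \alpha$, $\Pi_c$ derives $\Gamma_c \vdash c : b$, $\Sigma_j$ derives $\Delta_j \vdash e_j : a_j$, $\Gamma_0 = \Gamma_0' + \Gamma_c + \sum_{j=1}^m \Delta_j$, and $\vert\Pi_0\vert = \vert\Pi_0^R\vert + \vert\Pi_c\vert + \sum_{j=1}^m \vert\Sigma_j\vert$.

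Next I apply the $\lambda$-rule of System~$R$ to $\Pi_0^R$, abstracting the hypothesis $x : b$, which produces a derivation of $\Gamma_0', x_1 : a_1, \ldots, x_m : a_m \vdash_R \lambda x. u : (b, b_1 \ldots b_q \alpha)$ of size $\vert\Pi_0^R\vert + 1$; bundling it with the environment derivations $\Sigma_1, \ldots, \Sigma_m$ as in Definition~\ref{definition : derivation for closures} yields a derivation $\widehat{\Pi}_0$ of $(\Gamma_0' + \sum_{j=1}^m \Delta_j) \vdash (\lambda x. u, e) : (b, b_1 \ldots b_q \alpha)$ with $\vert\widehat{\Pi}_0\vert = \vert\Pi_0^R\vert + 1 + \sum_{j=1}^m \vert\Sigma_j\vert$. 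Since $(b, b_1 \ldots b_q \alpha)$ is exactly $b\, b_1 \ldots b_q \alpha$, Definition~\ref{definition : derivation for states} then shows that $\Pi := (\widehat{\Pi}_0, \Pi_c, \Pi_1, \ldots, \Pi_q)$ is a derivation of $\Gamma \vdash ((\lambda x. u, e), c, c_1, \ldots, c_q) : \alpha$: the head type $b\, b_1 \ldots b_q \alpha$ is consumed against the stack of types $b, b_1, \ldots, b_q$ carried by the closures $c, c_1, \ldots, c_q$ to produce $\alpha$.

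It remains to verify the two bookkeeping facts. The context of $\Pi$ is $(\Gamma_0' + \sum_{j=1}^m \Delta_j) + \Gamma_c + \sum_{k=1}^q \Gamma_k = \Gamma_0 + \sum_{k=1}^q \Gamma_k = \Gamma$, using commutativity and associativity of $+$ on $\Phi$. The size is $\vert\Pi\vert = \vert\widehat{\Pi}_0\vert + \vert\Pi_c\vert + \sum_{k=1}^q \vert\Pi_k\vert = \bigl(\vert\Pi_0^R\vert + \vert\Pi_c\vert + \sum_{j=1}^m \vert\Sigma_j\vert\bigr) + 1 + \sum_{k=1}^q \vert\Pi_k\vert = \vert\Pi_0\vert + 1 + \sum_{k=1}^q \vert\Pi_k\vert = \vert\Pi'\vert + 1$, as required. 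The case $q = 0$ (empty stack, target state $((\lambda x. u, e), c)$) is subsumed, reading $b_1 \ldots b_q \alpha$ as $\alpha$.

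The argument is pure repacking of the inductive data defining derivations of closures and states; the sole piece of genuine content is the single application of the abstraction rule of System~$R$, which is precisely what accounts for the $+1$. The only thing demanding care is notational hygiene: keeping the environment closures $e_j$ and their derivations $\Sigma_j$ cleanly separated from the stack closures $c_k$ and their derivations $\Pi_k$, and recording that $x$ is fresh for $\textsf{dom}(e)$ so that the System-$R$ context $\Gamma_0', x : b, x_1 : a_1, \ldots, x_m : a_m$ is well formed and the $\lambda$-rule applies. This lemma is the non-empty-stack companion of Lemma~\ref{lemme : derivation cloture : cas abstraction}, and no deeper difficulty is expected.
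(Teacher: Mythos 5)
Your proof is correct and follows essentially the same route as the paper's: unfold the state and closure derivations, apply the abstraction rule of System~$R$ once to the underlying derivation for $u$ (which accounts for the $+1$), repackage the closure derivation for $c$ as the new head of the stack, and check the context and size bookkeeping. The only difference is presentational — you are more explicit about naming the intermediate contexts and about the identification $(b, b_1 \ldots b_q \alpha) = b\,b_1 \ldots b_q \alpha$, which the paper leaves implicit.
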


\begin{proof}
The environment $e$ is of the shape $\bigcup_{j=1}^m \{(x_j, c'_j) \} $ and $\Pi'$ is of the shape $((\Pi'_0, \{ (x, \Pi'') \} \cup \bigcup_{j=1}^m \{ (x_j, \Pi''_j) \} ), \Pi'_1, \ldots, \Pi'_q)$.
We know that $ \Pi'_0 $ is a derivation of $ \Gamma, x_1 : a_1, \ldots, x_m : a_m, x : a \vdash_R u : b_1 \ldots b_q \alpha $, hence there exists a derivation $ \Pi_0 $ of $ x_1 : a_1, \ldots, x_m : a_m \vdash_R \lambda x. u : a b_1 \ldots b_q \alpha $ such that $ \vert \Pi_0 \vert = \vert \Pi'_0 \vert + 1 $. Set $ \Pi = ((\Pi_0, \{ (x_1, \Pi'_1), \ldots, (x_m, \Pi'_m) \} ), \Pi'', \Pi'_1, \ldots, \Pi'_q) $: it is a derivation of $ \Gamma \vdash ((\lambda x. u, e), c, c_1, \ldots, c_q) : \alpha $ and we have 
\begin{eqnarray*}
\vert \Pi \vert & = & \vert \Pi_0 \vert + \sum_{j=1}^m \vert \Pi'_j \vert + \vert \Pi'' \vert + \sum_{k=1}^q \vert \Pi'_k \vert \allowdisplaybreaks \\
& = & \vert \Pi'_0 \vert + 1 + \sum_{j=1}^m \vert \Pi'_j \vert + \vert \Pi'' \vert + \sum_{k=1}^q \vert \Pi'_k \vert \allowdisplaybreaks \\
& = & \vert \Pi' \vert + 1 .
\end{eqnarray*}
\end{proof}

\subsection{Relating size of derivations and execution time}\label{subsection : size of derivations and execution time}

The aim of this subsection is to prove Theorem \ref{Theorem head}, that gives the exact number of steps leading to the principal head normal form by means of derivations in System $R$.

\begin{lemme}\label{KAM <= D cas application}
Let $ ((v)u, e) $ be a closure and let $
(\Gamma, \alpha) \in \Phi \times D $. For any derivation $ \Pi $ of $ \Gamma \vdash
((v)u, e) : \alpha $, there exist  
$ b \in \mathcal{M}_{\textrm{fin}}(D) , $ $ \Gamma', \Gamma'' \in \Phi $, a derivation $
\Pi' $ of $ \Gamma' \vdash (v,e) : (b,
\alpha) $ and a derivation $ \Pi'' $ of
$ \Gamma'' \vdash (u, e) : b $ such that $ \Gamma = \Gamma' + \Gamma'' $ and $ \vert \Pi \vert = \vert \Pi' \vert + \vert \Pi'' \vert + 1 $.
\end{lemme}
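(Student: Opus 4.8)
The plan is to prove this by unfolding the definition of a derivation of a state/closure and running it "backwards" through the structure of an application closure. This is the converse direction to Lemma~\ref{KAM = D cas application}: there we assembled a derivation for $((v)u,e)$ out of derivations for $(v,e)$ and $(u,e)$; here we must decompose. First I would write $e = \bigcup_{j=1}^m \{(x_j,c_j)\}$, so that by Definition~\ref{definition : derivation for closures} the derivation $\Pi$ of $\Gamma \vdash ((v)u,e):\alpha$ has the form $\Pi = (\Pi_0, \bigcup_{j=1}^m\{(x_j,\Pi_j)\})$, where $\Pi_0$ is a derivation of $\Gamma_0, x_1:a_1,\ldots,x_m:a_m \vdash_R (v)u:\alpha$, each $\Pi_j$ is a derivation of $\Gamma_j \vdash c_j : a_j$, and $\Gamma = \sum_{j=0}^m \Gamma_j$, with $\vert\Pi\vert = \sum_{j=0}^m \vert\Pi_j\vert$.

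Next I would invert the last rule of $\Pi_0$: since $(v)u$ is an application, $\Pi_0$ must end with the application rule of System~$R$, so there are $n\in\mathbb{N}$, types $\beta_1,\ldots,\beta_n \in D$, and contexts $\Delta_0,\Delta_1,\ldots,\Delta_n$ (over the variables $x_1,\ldots,x_m$ together with the free variables of $v,u$) such that $\Delta_0 \vdash_R v : ([\beta_1,\ldots,\beta_n],\alpha)$ is derivable by a subderivation $\Pi_0^v$, each $\Delta_i \vdash_R u : \beta_i$ is derivable by a subderivation $\Pi_0^{u,i}$, $\Gamma_0 + (x_1:a_1,\ldots,x_m:a_m) = \sum_{i=0}^n \Delta_i$, and $\vert\Pi_0\vert = \vert\Pi_0^v\vert + \sum_{i=1}^n \vert\Pi_0^{u,i}\vert + 1$. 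The key bookkeeping step is then to split each of the $m+1$ context pieces and each multiset $a_j$ according to this decomposition: write $\Delta_i$ restricted to $\{x_1,\ldots,x_m\}$ as $x_1:a_1^i,\ldots,x_m:a_m^i$ and let $\Gamma_i'$ be the remainder (the part over $FV(v)$ or $FV(u)$), so that $a_j = \sum_{i=0}^n a_j^i$ and $\Gamma_0 = \sum_{i=0}^n \Gamma_i'$. Correspondingly I would split each $\Pi_j$ (a derivation of $\Gamma_j \vdash c_j : a_j$, i.e. a sequence of length $\vert a_j\vert$) into consecutive blocks $\Pi_j^0,\ldots,\Pi_j^n$ matching the cardinalities of $a_j^0,\ldots,a_j^n$; by Definition~\ref{definition : derivation for closures} each block $\Pi_j^i$ is a derivation of some $\Gamma_j^i \vdash c_j : a_j^i$ with $\Gamma_j = \sum_i \Gamma_j^i$ and $\vert\Pi_j\vert = \sum_i \vert\Pi_j^i\vert$.

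Then I reassemble: set $b = [\beta_1,\ldots,\beta_n]$, let $\Pi' = (\Pi_0^v, \bigcup_{j=1}^m\{(x_j,\Pi_j^0)\})$, which by Definition~\ref{definition : derivation for closures} is a derivation of $\Gamma' \vdash (v,e):(b,\alpha)$ with $\Gamma' = \Gamma_0' + \sum_{j=1}^m \Gamma_j^0$, and let $\Pi'' = (\Pi''^1,\ldots,\Pi''^n)$ where $\Pi''^i = (\Pi_0^{u,i}, \bigcup_{j=1}^m\{(x_j,\Pi_j^i)\})$ is a derivation of $\Gamma''^i \vdash (u,e):\beta_i$; this makes $\Pi''$ a derivation of $\Gamma'' \vdash (u,e):b$ with $\Gamma'' = \sum_{i=1}^n(\Gamma_i' + \sum_{j=1}^m\Gamma_j^i)$. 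A short computation with the associativity and commutativity of $+$ on contexts gives $\Gamma' + \Gamma'' = \sum_{j=0}^m\Gamma_j = \Gamma$, and summing the size equalities gives $\vert\Pi'\vert + \vert\Pi''\vert + 1 = \bigl(\vert\Pi_0^v\vert + \sum_j\vert\Pi_j^0\vert\bigr) + \bigl(\sum_{i=1}^n\vert\Pi_0^{u,i}\vert + \sum_{i=1}^n\sum_j\vert\Pi_j^i\vert\bigr) + 1 = \sum_{j=0}^m\vert\Pi_j\vert = \vert\Pi\vert$, as required.

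The main obstacle is purely organizational rather than conceptual: keeping the triple-indexed bookkeeping ($j$ for environment variables, $i$ for the premisses of the application rule, and the implicit index inside the sequence-valued derivations $\Pi_j$) consistent, and making sure the split of each $\Pi_j$ into blocks is legitimate — i.e. that an arbitrary derivation of $\Gamma_j \vdash c_j : a_j$, being by definition a tuple indexed by the multiplicity of $a_j$, can be re-grouped into sub-derivations whose target multisets are exactly the $a_j^i$. This is immediate from the definition of derivations of closures of multiset type, but it must be invoked carefully. Everything else is the same computation as in the proof of Lemma~\ref{KAM = D cas application}, read in reverse.
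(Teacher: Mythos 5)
Your proposal is correct and follows essentially the same route as the paper's proof: unfold the closure derivation, invert the application rule of System $R$ in $\Pi_0$, split each environment derivation $\Pi_j$ into sub-tuples matching the decomposition $a_j = \sum_i a_j^i$, and reassemble. The only point to phrase carefully (which you already flag) is that the split of $\Pi_j$ is a regrouping of the components of a tuple, not necessarily into consecutive blocks, but this is immediate from Definition~\ref{definition : derivation for closures}.
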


\begin{proof}
Set $ e = \{ (x_1, c_1), \ldots, (x_m, c_m) \} $ and  $ \Pi = (\Pi_0, \bigcup_{j=1}^m \{ (x_j, \Pi_j) \}) $ where
\begin{enumerate}
\item $ \Pi_0 $ is a derivation of $ \Gamma_0, x_1 : a_1, \ldots, x_m : a_m \vdash_R (v)u : \alpha $,
\item for $ 1 \leq j \leq m $, $ \Pi_j $ is a derivation of $ \Gamma_j \vdash c_j : a_j $,
\item $ \Gamma = \sum_{j=0}^m \Gamma_j $.
\end{enumerate}
By (i), there exist $b = [\beta_1, \ldots, \beta_p] \in \mathcal{M}_\textrm{fin}(D) $, 
$\Pi_0^0 \in \Delta(v, (\Gamma_0^0, x_1 : a_1', \ldots, x_m : a_m', (b, \alpha)))$ 
and, for $ 1 \leq k \leq p $, a derivation $ \Pi_0^k $ of $ \Gamma_0^k, x_1 : {a_1''}^k, \ldots, x_m : {a_m''}^k \vdash_R u : \beta_k $ such that 
\begin{itemize}
\item $ \Gamma_0 = \sum_{k=0}^p \Gamma_0^k $, 
\item for $ 1 \leq j \leq m $, $ a_j = a_j' + \sum_{k=1}^p {a_j''}^k $ 
\item and $ \vert \Pi_0 \vert = \sum_{k=0}^p \vert \Pi_0^k \vert +1 $. 
\end{itemize}
For any $ j \in \{ 1, \ldots, m \} $, we set $ a_j'' = \sum_{k=1}^p {a_j''}^k $. By (ii), for any $ j \in \{ 1, \ldots, m \} $, there exist $ \Gamma_j', $ $ \Gamma_j''^1, \ldots, \Gamma_j''^m \in \Phi$, a derivation $ \Pi_j' $ of $ \Gamma_j' \vdash c_j : a_j'$ and, for any $k \in \{ 1, \ldots, m \}$, a derivation $ \Pi_j''^k$ of $ \Gamma_j''^k \vdash c_j : a_j''^k $ such that 
\begin{itemize}
\item $ \Gamma_j = \Gamma_j'+ \sum_{k=1}^p \Gamma_j''^p$ 
\item and $ \vert \Pi_j \vert = \vert \Pi_j' \vert + \sum_{k=1}^p \vert \Pi_j''^p \vert$.
\end{itemize}
Set 
\begin{itemize}
\item $ \Gamma'= \Gamma_0^0 + \sum_{j=1}^m \Gamma_j' ;$ 
\item $ \Gamma'' = \sum_{k=1}^p (\Gamma_0^k + \sum_{j=1}^m \Gamma_j''^k) ;$
\item $ \Pi'= (\Pi_0^0, \bigcup_{j=1}^m \{ (x_j, \Pi_j') \})$ 
\item and $ \Pi'' = ( (\Pi_0^1, \bigcup_{j=1}^m \{(x_j, \Pi_j''^1) \} ), \ldots, (\Pi_0^p, \bigcup_{j=1}^m \{ (x_j, \Pi_j''^p) \} ))$.
\end{itemize}
We have
\begin{eqnarray*}
\Gamma & = & \sum_{j=0}^m \Gamma_m \allowdisplaybreaks \\
& & \textrm{   (by (iii))} \allowdisplaybreaks \\
& = & \sum_{k=0}^p \Gamma_0^k + \sum_{j=1}^m (\Gamma_j' + \sum_{k=1}^p \Gamma_j''^k) \allowdisplaybreaks \\
& = & \Gamma' + \Gamma''
\end{eqnarray*}
and
\begin{eqnarray*}
\vert \Pi \vert & = & \sum_{j=0}^m \vert \Pi_j \vert \allowdisplaybreaks \\
& = & \sum_{k=0}^p \vert \Pi_0^k \vert +1 + \sum_{j=1}^m (\vert \Pi_j' \vert + \sum_{k=1}^p \vert \Pi_j''^k \vert) \allowdisplaybreaks \\
& = & \vert \Pi' \vert + \vert \Pi'' \vert + 1 .
\end{eqnarray*} 
\end{proof}

\begin{prop}\label{Prop KAM<=D}
Let $t$ be a head normalizable $\lambda$-term. For any $(\Gamma, \alpha) \in \Phi \times D$, for any $\Pi \in \Delta(t, (\Gamma, \alpha))$, we have $l_h((t, \emptyset) . \epsilon) \leq \vert \Pi \vert$.
\end{prop}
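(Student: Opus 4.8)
The plan is to prove, by induction on $l_h(s) \in \mathbb{N}$, the following strengthening: for every state $s$ such that $l_h(s)$ is finite and every derivation $\Pi$ of $\Gamma \vdash s : \alpha$, we have $l_h(s) \le |\Pi|$. The proposition then follows by taking $s = (t, \emptyset) . \epsilon$, since $\overline{s} = t$ is head normalizable, so that $l_h(s)$ is finite by Theorem~\ref{theorem : head-normalizable => l fini}. Two preliminary remarks: every state admits a transition, so $l_h(s) \ge 1$; and the head closure $c_0$ of a state is typed by an element of $D$ (not by a multiset), so its derivation ends with a genuine System $R$ rule and has at least one node, whence $|\Pi| \ge 1$. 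In particular, the case $l_h(s) = 1$ is immediate.

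So suppose $l_h(s) \ge 2$ and let $s \succ_h s'$ be the (unique) first transition. Since $l_h(s) \ge 2$, $s'$ is not a Krivine normal form, which rules out the case where the head closure is $(x, e)$ with $x \notin \textrm{dom}(e)$. In each of the remaining four cases, $s'$ has the form $\lambda x_1. \ldots \lambda x_j. s_1$ with $s_1$ a state and $j \in \{ 0, 1 \}$; using $l_h(\lambda x. k) = l_h(k)$ and determinism of $\succ_h$, we get $l_h(s) = l_h(s_1) + 1$. Thus it suffices to produce a derivation $\Pi_1$ of $s_1$ with $|\Pi_1| = |\Pi| - 1$: the induction hypothesis (applicable, since $l_h(s_1) = l_h(s) - 1$ is finite and strictly smaller) then gives $l_h(s_1) \le |\Pi_1|$, hence $l_h(s) \le |\Pi|$.

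The construction of $\Pi_1$ in each case amounts to inverting the last rule at the root of the head-closure derivation and repackaging the environment and the stack accordingly, each time removing exactly one node and preserving $\Gamma$. If the head closure is $((v)u, e)$, one applies Lemma~\ref{KAM <= D cas application} to split it into a derivation of $(v, e)$ of type $(b, b_1 \ldots b_q \alpha) = b\, b_1 \ldots b_q\, \alpha$ and a derivation of $(u, e)$ of type $b$, and prepends $(u, e)$ to the stack; here $s_1 = ((v, e), (u, e), c_1, \ldots, c_q)$. If the head closure is $(\lambda x. u, e)$ with empty stack, then $\alpha$ is necessarily of the form $(b, \beta)$, and inverting the abstraction rule yields a derivation of $(u, e) . \epsilon$ of type $\beta$; here $j = 1$. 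If the head closure is $(\lambda x. u, e)$ with non-empty stack $c . \pi'$, one inverts the abstraction rule and moves $c$ into the environment, as in the converse of Lemma~\ref{lemme : derivation etat : cas abstraction et pile non vide}; here $s_1 = ((u, \{ (x, c) \} \cup e), \pi')$. Finally, if the head closure is $(x, e)$ with $x \in \textrm{dom}(e)$, the head-closure derivation ends with the axiom on $x$, which forces the closure $e(x)$ to be typed by the one-element multiset $[b_1 \ldots b_q \alpha]$; unwrapping that singleton (the converse of Lemma~\ref{lemme : cas variable dans environnement}) produces a derivation of $e(x) . \pi$ with the axiom node removed.

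The computations are routine: they are precisely the converses of Lemmas~\ref{KAM <= D cas application}, \ref{lemme : derivation cloture : cas abstraction}, \ref{lemme : cas variable dans environnement} and~\ref{lemme : derivation etat : cas abstraction et pile non vide}. The only point requiring a little care is the variable-in-environment case, where one must observe that the shape of the axiom rule forces the relevant environment closure to carry a type that is a singleton multiset, so that unwrapping it is legitimate and costs exactly the one node of that axiom.
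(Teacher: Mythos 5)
Your proof is correct and follows essentially the same route as the paper's: a noetherian induction on $l_h(s)$ over states, with a case analysis on the first transition and an inversion of the root rule of the head-closure derivation in each case (the application case being exactly Lemma~\ref{KAM <= D cas application}, the others the converses of Lemmas~\ref{lemme : cas variable dans environnement}, \ref{lemme : derivation cloture : cas abstraction} and~\ref{lemme : derivation etat : cas abstraction et pile non vide}). Your uniform treatment of the base case via $l_h(s) \geq 1$ and $\vert \Pi \vert \geq 1$, and your observation that the axiom rule forces the environment entry at the head variable to be typed by a singleton multiset, correspond exactly to the steps taken in the paper's proof.
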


\begin{proof}
By Theorem \ref{theorem : head-normalizable => l fini}, we can prove, by induction on $ l_h(s) $, that for any $s \in \mathbb{S}$ such that $\overline{s}$ is head normalizable, for any $(\Gamma, \alpha) \in \Phi \times D$, for any derivation $\Pi$ of $\Gamma \vdash s:\alpha$, we have $l_h(s) \leq \vert \Pi \vert$. 

The base case is trivial, because we never have $l_h(s)=0 $. The inductive step is divided into five cases:
\begin{itemize}
\item In the case where $ s = (x, e) . \pi $, $ x \in \mathcal{V} $ and $ x \notin \textrm{dom}(e) $, $ l_h(s) = 1 \leq \vert \Pi \vert $.
\item In the case where $ s = ((x_{j_0,}, \bigcup_{j=1}^m \{(x_j, c'_j) \}), c_1, \ldots, c_q) $ and $ 1 \leq j_0 \leq m $, we have $ \Pi =(\Pi_0, \ldots, \Pi_q) $, where $ \Pi_0 = (\Pi_0', \bigcup_{j=1}^m \{ (x_j, \Pi'_j) \} ) $ with
\begin{itemize}
\item $ \Pi_0' $ is a derivation of $ \Gamma'_0, x_1 : a_1, \ldots, x_m : a_m \vdash_R x_{j_0} : b_1 \ldots b_q \alpha $,
\item for any $ j \in \{1, \ldots, m \} $, $ \Pi'_j $ is a derivation of $ \Gamma'_j \vdash c'_j :a_j $,
\item $ \Gamma_0 = \sum_{j=1}^m \Gamma'_j $,
\item for $ 1 \leq k \leq q $, $\Pi_k $ is a derivation of $ \Gamma_k \vdash c_k : b_k $
\item and $ \Gamma = \sum_{k=0}^q \Gamma_k $.
\end{itemize}
Hence $ a'_{j_0} = [b_1 \ldots b_q \alpha ] $. 
The sequence $ (\Pi'_{j_0}, \Pi_1, \ldots, \Pi_q) $ is a derivation of 
$$ \Gamma'_{j_0} + \sum_{k=1}^q \Gamma_k \vdash (c'_{j_0}, c_1, \ldots, c_q) : \alpha . $$
We have
\begin{eqnarray*}
l_h(s) & = & l_h(c_{j_0}', c_1, \ldots, c_q) + 1 \allowdisplaybreaks \\
& \leq & \vert (\Pi'_{j_0}, \Pi_1, \ldots, \Pi_q) \vert + 1 \\
& & \textrm{   (by induction hypothesis)} \allowdisplaybreaks \\
& = & \vert \Pi_{j_0}' \vert + \sum_{k=1}^q  \vert \Pi_k \vert + 1 \allowdisplaybreaks \\
& \leq & \vert \Pi_0 \vert + \sum_{k=1}^q \vert \Pi_k \vert \allowdisplaybreaks \\
& = & \vert \Pi \vert .
\end{eqnarray*}
\item In the case where $ s = ((\lambda x. u, \{ (x_1, c'_1), \ldots, (x_m, c'_m) \}), c', c_1, \ldots, c_q) $, we have $ \Pi = ((\Pi_0', \Pi_0''), \Pi', \Pi_1, \ldots, \Pi_q) $ with
\begin{itemize}
\item $ \Pi_0' $ is a derivation of $ \Gamma_0', x_1 : a_1, \ldots, x_m : a_m \vdash_R \lambda x. u : b' b_1 \ldots b_q \alpha $;
\item $\Pi_0'' = \bigcup_{j=1}^m \{ (x_j, \Pi'_j) \}$ where, for $1 \leq j \leq m$, $ \Pi'_j $ is a derivation of $ \Gamma'_j \vdash c'_j : a_j $;
\item $ \Gamma_0 = \sum_{j=0}^m \Gamma'_j $;
\item $ \Pi' $ is a derivation of $ \Gamma' \vdash b' : c' $;
\item for $ 1 \leq k \leq q $, $ \Pi_k $ is a derivation of $ \Gamma_k \vdash b_k : c_k $.
\end{itemize}
Hence there exists $\Pi'' \in \Delta(u, (\Gamma'_0, x_1:a_1, \ldots, x_m : a_m, x:b', b_1 \ldots b_q \alpha))$
with $ \vert \Pi'_0 \vert = \vert \Pi'' \vert + 1 $. The pair $ (\Pi'', \bigcup_{j=1}^m \{ (x_j, \Pi'_j) \} \cup \{ \Pi' \}) $ is a derivation of 
$$ \Gamma_0 + \Gamma'\vdash (u, \{ (x_1, c'_1), \ldots, (x_m, c'_m), (x, c) \}) : b_1 \ldots b_q \alpha . $$
Hence $ ((\Pi'', \bigcup_{j=1}^m \{ (x_j, \Pi'_j) \} \cup \{ (x, \Pi') \}), \Pi_1, \ldots, \Pi_q) $ is a derivation of 
$$ \Gamma \vdash ((u, \{ (x_1, c'_1), \ldots, (x_m, c'_m), (x, c) \}), c_1, \ldots, c_q) : \alpha . $$
We have
\begin{eqnarray*}
l_h(s) & = & l_h((u, \{ (x_1, c'_1), \ldots, (x_m, c'_m),  (x, c) \}), c_1, \ldots, c_q)+1 \allowdisplaybreaks \\
& \leq & \vert ((\Pi'', \{ \Pi'_1, \ldots, \Pi'_m, \Pi' \}), \Pi_1, \ldots, \Pi_q) \vert + 1 \\
& & \textrm{   (by induction hypothesis)} \allowdisplaybreaks \\
& = & \vert \Pi'' \vert + \sum_{j=1}^m \vert \Pi'_j \vert + \vert \Pi'\vert + \sum_{k=1}^q \vert \Pi_k \vert + 1 \allowdisplaybreaks \\
& = & \vert \Pi'_0 \vert + \vert \Pi_0'' \vert + \vert \Pi'\vert + \sum_{k=1}^q \vert \Pi_k \vert \allowdisplaybreaks \\
& = & \vert \Pi \vert .
\end{eqnarray*}
\item In the case where $ s = (((v)u, e), c_1, \ldots, c_q) $, we have $ \Pi = (\Pi_0, \ldots, \Pi_q) $ with
\begin{itemize}
\item $ \Pi_0 $ is a derivation of $ \Gamma_0 \vdash ((v)u, e) : b_1 \ldots b_q \alpha $;
\item for $ 1 \leq k \leq q $, $ \Pi_k $ is a derivation of $ \Gamma_k \vdash c_k : b_k $;
\item $ \Gamma = \sum_{k=0}^q \Gamma_k $.
\end{itemize}
By Lemma \ref{KAM <= D cas application}, there exist $ b \in \mathcal{M}_{\textrm{fin}}(D)$, $ \Gamma_0', \Gamma_0'' \in \Phi $, a derivation $ \Pi_0' $ of $ \Gamma_0' \vdash (v, e) : b b_1 \ldots b_q \alpha $ and a derivation $ \Pi_0'' $ of $ \Gamma_0'' \vdash (u, e) : b $ such that $ \Gamma_0 = \Gamma_0' + \Gamma_0''$ and $ \vert \Pi_0 \vert = \vert \Pi_0'\vert + \vert \Pi_0'' \vert + 1 $. The sequence $ (\Pi_0', \Pi_0'', \Pi_1, \ldots, \Pi_q) $ is a derivation of $ \Gamma \vdash ((v, e), (u, e), c_1, \ldots, c_q) : \alpha $. We have
\begin{eqnarray*}
l_h(s) & = & l_h((v, e), (u, e), c_1, \ldots, c_q) + 1 \allowdisplaybreaks \\
& \leq & \vert (\Pi_0', \Pi_0'', \Pi_1, \ldots, \Pi_q) \vert + 1 \\
& & \textrm{   (by induction hypothesis)} \allowdisplaybreaks \\
& = & \vert \Pi_0' \vert + \vert \Pi_0'' \vert + \sum_{k=1}^q \vert \Pi_k \vert + 1 \allowdisplaybreaks \\
& = & \vert \Pi_0 \vert + \sum_{k=1}^q \vert \Pi_k \vert \allowdisplaybreaks \\
& = & \vert (\Pi_0, \ldots, \Pi_q) \vert \allowdisplaybreaks \\
& = & \vert \Pi \vert .
\end{eqnarray*}
\item In the case where $ s = (\lambda x. u, \bigcup_{j=1}^m \{ (x_j, c'_j) \}) .\epsilon $, we have $ \Pi = (\Pi_0', \Pi_0'')$ with
\begin{itemize}
\item $ \Pi_0'$ is a derivation of $ \Gamma'_0, x_1 : a_1, \ldots, x_m : a_m \vdash_R \lambda x. u : \alpha $;
\item $\Pi_0'' = \bigcup_{j=1}^m \{ (x_j, \Pi'_j) \}$ where, for $ 1 \leq j \leq m $, $ \Pi'_j $ is a derivation of $ \Gamma'_j \vdash c'_j : a_j $;
\item $ \Gamma = \sum_{j=0}^m \Gamma'_j $.
\end{itemize}
Hence there exists a derivation $ \Pi'' $ of $ \Gamma'_0, x_1 : a_1, \ldots, x_m : a_m, x : b \vdash_R u : \beta $ such that $ \alpha = (b, \beta) $ and $ \vert \Pi_0' \vert = \vert \Pi'' \vert + 1 $. The pair $(\Pi'', \Pi_0'')$ is a derivation of 
$$\Gamma, x : b \vdash (u, \bigcup_{j=1}^m \{ (x_j, c_j) \}) . \epsilon : \beta .$$
We have
\begin{eqnarray*}
l_h(s) & = & l_h((u, \bigcup_{j=1}^m \{ (x_j, c_j) \}) . \epsilon)+1 \allowdisplaybreaks \\
& \leq & \vert (\Pi'', \Pi_0'') \vert + 1 \allowdisplaybreaks \\
& = & \vert \Pi'' \vert + \sum_{j=1}^m \vert \Pi'_j \vert + 1 \allowdisplaybreaks \\
& = & \vert \Pi_0' \vert + \sum_{j=1}^m \vert \Pi'_j \vert  \allowdisplaybreaks \\
& = & \vert \Pi \vert .
\end{eqnarray*} 
\end{itemize}
\end{proof}

\begin{prop}\label{Prop KAM = D}
Let $t$ be a head normalizable $\lambda$-term. There exist $(\Gamma, \alpha) \in \Phi \times D$ and $\Pi \in \Delta(t, (\Gamma, \alpha))$ such that $l_h((t, \emptyset) . \epsilon) = \vert \Pi \vert$.
\end{prop}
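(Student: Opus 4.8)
Proposition~\ref{Prop KAM<=D} already gives $l_h((t,\emptyset).\epsilon) \le |\Pi|$ for \emph{every} derivation $\Pi$, so it suffices to exhibit one derivation of size exactly $l_h((t,\emptyset).\epsilon)$; the natural candidate is the derivation obtained by reading off the run of Krivine's machine. Concretely, I would prove by induction on $l_h(s) \in \mathbb{N}$ the following strengthening: \emph{for every $s \in \mathbb{S}$ with $l_h(s)$ finite, there exist $(\Gamma,\alpha) \in \Phi \times D$ and a derivation $\Pi$ of $\Gamma \vdash s : \alpha$ with $|\Pi| = l_h(s)$.} Once this is established, I apply it to $s = (t,\emptyset).\epsilon$, whose $l_h$ is finite by Theorem~\ref{theorem : head-normalizable => l fini} since $t$ is head normalizable; recalling that, because $\textsf{d}(\emptyset)=0$, a derivation of the state $(t,\emptyset).\epsilon$ is nothing but a derivation $\Pi \in \Delta(t,(\Gamma,\alpha))$ of System $R$ of the same size, this yields the proposition.

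\textbf{The induction.} Since a state always admits a $\succ_h$-transition, $l_h(s) \ge 1$, and I split according to the shape of $s$ as in Definition~\ref{Def Krivine head}. If $s = ((x,e),c_1,\ldots,c_q)$ with $x \notin \textsf{dom}(e)$, then $l_h(s) = 1$ and I type $s$ directly by the state derivation whose head component is built from the axiom $x : [\alpha] \vdash_R x : \alpha$ (so all the auxiliary multisets $b_1,\ldots,b_q$ of Definition~\ref{definition : derivation for states} are $[]$, as is every multiset assigned to the variables of $e$) together with the empty derivation for each stack closure $c_k$; its size is $1 = l_h(s)$. In every other case $s \succ_h s'$ (or $s \succ_h \lambda x.\, s'$, in which case $l_h(\lambda x.\,s') = l_h(s')$ by Definition~\ref{definition : l}), and since $\succ_h$ is deterministic we get $l_h(s') = l_h(s) - 1$, so the induction hypothesis supplies a derivation $\Pi'$ of $s'$ with $|\Pi'| = l_h(s')$; it then remains to turn $\Pi'$ into a derivation of $s$ of size $|\Pi'| + 1$. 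This is exactly what the reverse lemmas of the previous subsection achieve, one per case: Lemma~\ref{lemme : cas variable dans environnement} when $s = ((x,e),c_1,\ldots,c_q)$ with $x \in \textsf{dom}(e)$ and $s' = e(x).\pi$; Lemma~\ref{KAM = D cas application}, after decomposing $\Pi'$ according to Definition~\ref{definition : derivation for states}, when $s = ((v)u,e).\pi$ and $s' = (v,e).((u,e).\pi)$; Lemma~\ref{lemme : derivation etat : cas abstraction et pile non vide} when $s = (\lambda x.u,e).(c.\pi')$ and $s' = (u,\{(x,c)\}\cup e).\pi'$; and Lemma~\ref{lemme : derivation cloture : cas abstraction} when $s = (\lambda x.u,e).\epsilon$ and $s' = (u,e).\epsilon$. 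In each case the size accounting of the invoked lemma gives precisely $|\Pi| = |\Pi'| + 1 = l_h(s)$, closing the induction.

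\textbf{Where the care goes.} The argument is essentially the proof of Proposition~\ref{Prop KAM<=D} run forwards, so there is no conceptual obstacle; the delicacy is bookkeeping. Two points deserve attention. First, in the stuck-state case one must verify that the minimal typing really has size $1$: this forces the stack multisets $b_k$ and all the environment multisets to be empty, i.e.\ it uses that the empty multiset behaves as the universal type. Second, in the empty-stack abstraction case the bound variable $x$ need not be typed in the context $\Gamma'$ produced by the induction hypothesis, so one must allow $b := \Gamma'(x)$ to be $[]$ and conclude with $\alpha = (b,\beta) \in D$; thus the type $\alpha$ we obtain for $s$ is dictated by the computation, not arbitrary. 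Finally, one must keep straight throughout that a derivation of a state in the sense of Definition~\ref{definition : derivation for states} carries the auxiliary data $b_1,\ldots,b_q,\Gamma_0,\ldots,\Gamma_q$, and that each reverse lemma is stated precisely so as to thread this data across the corresponding transition; matching these decompositions is the only thing that requires attention.
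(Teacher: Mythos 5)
Your proof is correct and follows essentially the same route as the paper's: an induction on $l_h(s)$ over states whose base case is vacuous, with the stuck-variable case typed directly by an axiom of size $1$ (all auxiliary multisets empty) and the four remaining transitions handled by Lemmas~\ref{lemme : cas variable dans environnement}, \ref{KAM = D cas application}, \ref{lemme : derivation cloture : cas abstraction} and \ref{lemme : derivation etat : cas abstraction et pile non vide}, each contributing exactly $+1$ to the size. The only cosmetic difference is that the paper phrases the induction hypothesis via head-normalizability of $\overline{s}$ (using Theorem~\ref{theorem : head-normalizable => l fini} to guarantee finiteness of $l_h(s)$) where you assume finiteness of $l_h(s)$ directly, which is equivalent.
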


\begin{proof}
By Theorem \ref{theorem : head-normalizable => l fini}, we can prove, by induction on $ l_h(s) $, that for any $ s \in \mathbb{S} $ such that $ 
\overline{s} $ is head normalizable, there exist $
(\Gamma, \alpha) $ and a derivation $ \Pi $ of $ \Gamma \vdash 
s : \alpha $ such that we have $ l_h(s) = \vert \Pi \vert $.

The base case is trivial, because we never have $ l_h(s) = 0 $. The inductive step is divided into five cases:
\begin{itemize}
\item In the case where $ s = ((x, e), c_1, \ldots, c_q)$, $ x \in \mathcal{V} $ and $ x \notin \textsf{dom}(e) $, we have $ l_h(s) = 1 $ and there exists a derivation $ \Pi = (\Pi_0, \ldots, \Pi_q) $ of $\Gamma \vdash s : \alpha$, where $\Pi_0$ is a derivation of $ x : [\underbrace{[] \ldots []}_{q \textrm{ times}}\alpha] \vdash (x, e) : \underbrace{[] \ldots []}_{q \textrm{ times}} \alpha $ with $ \vert \Pi_0 \vert = 1 $ and $ \vert \Pi_1 \vert = \ldots = \vert \Pi_q \vert = 0 $. 
\item In the case where $s$ is of the shape $(x, e) . \pi$ with $x \in \textsf{dom}(e)$, apply the induction hypothesis and Lemma \ref{lemme : cas variable dans environnement}.
\item In the case where $s$ is of the shape $((v)u, e) . \pi$,  apply the induction hypothesis and Lemma \ref{KAM = D cas application}.
\item In the case where $s$ is of the shape $(\lambda x. u), e) . \epsilon$, apply the induction hypothesis Lemma \ref{lemme : derivation cloture : cas abstraction}.
\item In the case where $s$ is of the shape $((\lambda x. u), e) . \pi$ with $\pi \not= \epsilon$, apply the induction hypothesis and Lemma \ref{lemme : derivation etat : cas abstraction et pile non vide}.
\end{itemize}
\end{proof}

\begin{definition}
For every $\mathfrak{D} \in \mathcal{P}(\Delta) \cup \mathcal{P}(\Delta^{< \omega})$, we set $\vert \mathfrak{D} \vert = \{ \vert \Pi \vert \: / \: \Pi \in \mathfrak{D} \}$.
\end{definition}

\begin{theorem}\label{Theorem head}
For any $\lambda$-term  $ t $, we have $l_h((t, \emptyset) . \epsilon) = \inf \vert \Delta(t) \vert$.
\end{theorem}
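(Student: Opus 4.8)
The plan is to prove the statement by a case distinction on whether $t$ is head-normalizable, reducing it in each case to results already established above.

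First I would treat the case where $t$ is \emph{not} head-normalizable, and show that both sides are $\infty$. Since $\overline{(t, \emptyset) . \epsilon} = t$, the contrapositive of Theorem \ref{th Krivine fini => head} gives that $l_h((t, \emptyset) . \epsilon)$ cannot be finite, so $l_h((t, \emptyset) . \epsilon) = \infty$. On the other hand, by Theorem \ref{th normalisable de tete}~(i), $t$ is not typable in System $R$, i.e.\ $\Delta(t) = \emptyset$; hence $\vert \Delta(t) \vert = \emptyset$ and $\inf \vert \Delta(t) \vert = \infty$, with the convention $\inf \emptyset = \infty$ in $\mathbb{N} \cup \{ \infty \}$. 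So the two sides agree.

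Next I would treat the case where $t$ \emph{is} head-normalizable. Proposition \ref{Prop KAM<=D} tells us that $l_h((t, \emptyset) . \epsilon) \leq \vert \Pi \vert$ for every $\Pi \in \Delta(t)$, so $l_h((t, \emptyset) . \epsilon)$ is a lower bound of the set $\vert \Delta(t) \vert$ and therefore $l_h((t, \emptyset) . \epsilon) \leq \inf \vert \Delta(t) \vert$. Conversely, Proposition \ref{Prop KAM = D} exhibits a single derivation $\Pi \in \Delta(t)$ with $\vert \Pi \vert = l_h((t, \emptyset) . \epsilon)$, whence $\inf \vert \Delta(t) \vert \leq l_h((t, \emptyset) . \epsilon)$. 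Combining the two inequalities yields the desired equality; the two propositions together moreover show that this infimum is attained, i.e.\ is a minimum, witnessed by the derivation produced by Proposition \ref{Prop KAM = D} --- a \emph{least} type derivation of $t$.

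There is essentially no difficult step left: all the real work has already been packaged into Propositions \ref{Prop KAM<=D} and \ref{Prop KAM = D} (and, through them, into Lemmas \ref{KAM <= D cas application}, \ref{KAM = D cas application}, \ref{lemme : derivation cloture : cas abstraction}, \ref{lemme : cas variable dans environnement} and \ref{lemme : derivation etat : cas abstraction et pile non vide}, all of which proceed by induction on $l_h$). The only points that need a word of care are the bookkeeping identification $\overline{(t, \emptyset) . \epsilon} = t$ and the value $\inf \emptyset = \infty$, which are precisely what glue the two cases into the single displayed equation.
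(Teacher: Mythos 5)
Your proof is correct and follows exactly the same route as the paper: a case split on head-normalizability, with the non-normalizable case handled by Theorem \ref{th normalisable de tete}~(i) and Theorem \ref{th Krivine fini => head}, and the normalizable case by combining Proposition \ref{Prop KAM<=D} (lower bound on all derivation sizes) with Proposition \ref{Prop KAM = D} (a witnessing derivation of size exactly $l_h$). The paper's proof is just a terser version of yours.
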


\begin{proof}
We distinguish between two cases.
\begin{itemize}
\item The $\lambda$-term $t$ is not head normalizable: by Theorem \ref{th normalisable de tete} (i), $\inf \vert \Delta(t) \vert = \infty$ and, by Theorem \ref{th Krivine fini => head}, $l_h((t, \emptyset) . \epsilon) = \infty$.
\item The $\lambda$-term $ t $ is head normalizable: apply Proposition \ref{Prop KAM<=D} and Proposition \ref{Prop KAM = D}.
\end{itemize}
\end{proof}

\subsection{Principal typings and $1$-typings}

In the preceding subsection, we related $l_h(t)$ and the size of the derivations of $t$ for any $\lambda$-term $t$. Now, we want to relate $ l_\beta(t) $ and the size of the derivations of $ t $. 
We will show that if the value of $l_\beta(t)$ is finite (i.e. $t$ is normalizable), then $l_\beta(t)$ is the size of the least derivations of $t$ with typings that satisfy a particular property and that, otherwise, there is no such derivation. 
In particular, when $t$ is normalizable, $l_\beta(t)$ is the size of the derivations of $t$ with \emph{$1$-typings}. This notion of \emph{$1$-typing}, defined in Definition~\ref{definition:1-typing}, is a generalization of the notion of \emph{principal typing}.

We recall that a typing $ (\Gamma, \alpha) $ for a $\lambda$-term is a principal typing if all other typings for the same $\lambda$-term can be derived from $ (\Gamma, \alpha) $ by some set of operations. 
The work of \cite{CDV} could be adapted in order to show that all normal $\lambda$-terms have a principal typing in System $ R $ if $ A $ is infinite: the operations are substitution (see 
Definition~\ref{definition : substitution}) and expansion (complicated to define); the only difference with \cite{CDV} is that we should have to consider $0$-expansions too and not only $n$-expansions for $n \geq 1$. 

\begin{definition}
The typing rules for deriving \emph{principal typings} of normal $\lambda$-terms are the following:
\begin{center}
\AxiomC{}
\RightLabel{$ \gamma \in A $}
\UnaryInfC{$ x : [ \gamma ] \vdash_P x :
\gamma $}
\DisplayProof
\end{center}
\begin{center}
\AxiomC{$ \Gamma, x : a \vdash_P t : \alpha $}
\UnaryInfC{$ \Gamma \vdash_P \lambda x. t : (a, \alpha) $}
\DisplayProof
\end{center}
\begin{center}
\AxiomC{$ \Gamma_1 \vdash_P u_1 : \alpha_1 $}
\AxiomC{$ \ldots $}
\AxiomC{$ \Gamma_n \vdash_P u_n : \alpha_n $}
\RightLabel{$ (\ast) $}
\TrinaryInfC{$ \sum_{i=1}^n \Gamma_i + \{ (x, [ [
\alpha_1] \ldots
[\alpha_n ]
\gamma ] ) \} \vdash_P (x)u_1 \ldots u_n : \gamma $}
\DisplayProof
\end{center}
\begin{center}
$ (\ast) $ the atoms in $\Gamma_j$ are disjoint from those in $\Gamma_k$ if $j \not= k$ and $ \gamma \in
A $ does not appear in the $ \Gamma_i $
\end{center}
A \emph{principal typing} of a normalizable $\lambda$-term is a principal typing of its normal form.
\end{definition}

The reader acquainted with the concept of \emph{experiment} on proof nets in linear logic could notice that a principal typing of a normal $\lambda$-term is the same thing as the result of what \cite{Lorenzo} calls \emph{an injective $ 1 $-experiment} of the proof net obtained by the translation of this $\lambda$-term mentioned in Subsection \ref{subsection : relating types and semantics}.

The notion of $ 1 $-typing is more general than the notion of principal typing: it is the result of a \emph{$ 1 $-experiment} (not necessarily injective).

\begin{definition}\label{definition:1-typing}
The typing rules for deriving \emph{$1$-typings} of normal $\lambda$-terms are the following:
\begin{center}
\AxiomC{}
\RightLabel{$ \gamma \in A $}
\UnaryInfC{$ x : [ \gamma ] \vdash_1 x :
\gamma $}
\DisplayProof
\end{center}
\begin{center}
\AxiomC{$ \Gamma, x : a \vdash_1 t : \alpha $}
\UnaryInfC{$ \Gamma \vdash_1 \lambda x. t : (a, \alpha) $}
\DisplayProof
\end{center}
\begin{center}
\AxiomC{$ \Gamma_1 \vdash_1 u_1 : \alpha_1 $}
\AxiomC{$ \ldots $}
\AxiomC{$ \Gamma_n \vdash_1 u_n : \alpha_n $}
\TrinaryInfC{$ \sum_{i=1}^n \Gamma_i + \{ (x, [ [
\alpha_1] \ldots
[\alpha_n ]
\gamma ] ) \} \vdash_1 (x)u_1 \ldots u_n : \gamma $}
\DisplayProof
\end{center}
A \emph{$1$-typing} of a normalizable $\lambda$-term is a $1$-typing of its normal form.
\end{definition}

Note that if $t$ is a normalizable $\lambda$-term and $(\Gamma, \alpha)$ is a $1$-typing of $t$, then $(\Gamma, \alpha) \in \Phi^\textsf{ex} \times D^\textsf{ex}$; more precisely, a typing $(\Gamma, \alpha)$ of a normalizable $\lambda$-term is a $1$-typing if, and only if, every multiset in negative occurence in $\Gamma$ (resp. in positive occurrence in $\alpha$) is a singleton.

\begin{lemme}\label{lemme : normal : bound}
Let $ (x, e) $ be a closure and let $ \Gamma \in \Phi $ such that $\Gamma \in \Phi^\textsf{ex}$. Assume that there exists a derivation of $ \Gamma \vdash (x, e) : b_1 \ldots b_q \alpha $, with $ x \notin \textrm{dom}(e) $, then for any $ k \in \{ 1, \ldots, q \} $, we have $ b_k \not= [] $.
\end{lemme}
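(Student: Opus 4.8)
The plan is to read off the shape of a derivation of $\Gamma \vdash (x,e) : b_1 \ldots b_q\alpha$ from Definition~\ref{definition : derivation for closures}, to use the hypothesis $x \notin \textsf{dom}(e)$ to force the System~$R$ component of this derivation to be an axiom, and thereby to locate the type $b_1 \ldots b_q\alpha$ inside the multiset $\Gamma(x)$; the condition $\Gamma \in \Phi^{\textsf{ex}}$ then constrains that type enough to conclude.

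Concretely, I would first write $\textsf{dom}(e) = \{x_1, \ldots, x_m\}$, so that a derivation of $\Gamma \vdash (x,e) : b_1 \ldots b_q\alpha$ is a pair $(\Pi_0, \bigcup_{j=1}^m \{(x_j, \Pi_j)\})$ with $\Pi_0$ a derivation of $\Gamma_0, x_1 : a_1, \ldots, x_m : a_m \vdash_R x : b_1 \ldots b_q\alpha$, each $\Pi_j$ a derivation of $\Gamma_j \vdash c_j : a_j$, and $\Gamma = \sum_{j=0}^m \Gamma_j$. Since the subject of $\Pi_0$ is the variable $x$, the only applicable rule of System~$R$ is the axiom, so the context of the conclusion of $\Pi_0$ is exactly $x : [b_1 \ldots b_q\alpha]$; evaluating this context at $x$ and using $x \neq x_j$ for all $j$ gives $\Gamma_0(x) = [b_1 \ldots b_q\alpha]$, hence $b_1 \ldots b_q\alpha \in \textsf{Supp}(\Gamma(x))$. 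Because $\Gamma \in \Phi^{\textsf{ex}}$, we have $\Gamma(x) \in \mathcal{M}_{\textrm{fin}}(\overline{D^{\textsf{ex}}})$, and therefore $b_1 \ldots b_q\alpha \in \overline{D^{\textsf{ex}}}$.

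The remaining step is a short induction on $q$ (the case $q=0$ being vacuous): if $b_1 \ldots b_q\alpha \in \overline{D^{\textsf{ex}}}$ with $q \geq 1$, then $b_1 \ldots b_q\alpha = (b_1, b_2 \ldots b_q\alpha)$ is a pair, so it is not in $A$; since $\overline{D^{\textsf{ex}}} = \bigcup_n \overline{D_n^{\textsf{ex}}}$ and $\overline{D_{n+1}^{\textsf{ex}}} = A \cup ((\mathcal{M}_{\textrm{fin}}(D_n^{\textsf{ex}}) \setminus \{[]\}) \times \overline{D_n^{\textsf{ex}}})$, this pair must lie in the second summand for some $n$, whence $b_1 \in \mathcal{M}_{\textrm{fin}}(D_n^{\textsf{ex}}) \setminus \{[]\}$ (in particular $b_1 \neq []$) and $b_2 \ldots b_q\alpha \in \overline{D_n^{\textsf{ex}}} \subseteq \overline{D^{\textsf{ex}}}$; applying the induction hypothesis to the latter handles $b_2, \ldots, b_q$. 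I do not expect a genuine obstacle here; the only slightly delicate points are extracting from Definition~\ref{definition : derivation for closures} that $b_1 \ldots b_q\alpha$ really occurs in $\Gamma(x)$ — which is precisely where $x \notin \textsf{dom}(e)$ is used, since otherwise this type would be absorbed into one of the $a_j$ rather than appearing in $\Gamma$ — and matching the nested-pair decomposition of $b_1 \ldots b_q\alpha$ with the inductive clauses defining $\overline{D^{\textsf{ex}}}$.
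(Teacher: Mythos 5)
Your proof is correct and follows essentially the same route as the paper's: decompose the closure derivation, use the axiom rule together with $x \notin \textsf{dom}(e)$ to get $\Gamma_0(x) = [b_1 \ldots b_q \alpha]$, and conclude from $\Gamma \in \Phi^{\textsf{ex}}$. The only difference is that you spell out the final step (the induction on $q$ unfolding the definition of $\overline{D^{\textsf{ex}}}$), which the paper leaves implicit.
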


\begin{proof}
Let $ \Pi $ be such a derivation. Set $ e = \bigcup_{j=1}^m \{ (x_j, c_j) \} $. We have $ \Pi = (\Pi_0, $ $ \bigcup_{j=1}^m \{ (x_j, \Pi_j) \}) $, where
\begin{enumerate}
\item $ \Pi_0 $ is a derivation of $ \Gamma_0, x_1 : a_1, \ldots, x_m : a_m \vdash_R x : b_1 \ldots b_q \alpha $;
\item for $ j \in \{ 1, \ldots, m \} $, $ \Pi_j $ is a derivation of $ \Gamma_j \vdash c_j : a_j $;
\item and $ \Gamma = \sum_{j=0}^m \Gamma_j $.
\end{enumerate}
By (i), since $ x \notin \textrm{dom}(e) $, $ \Gamma_0(x) = [b_1 \ldots b_q \alpha] $. Hence, by (iii), if there existed $ k \in \{ 1, \ldots, q \} $ such that $ b_k =[] $, then we should have $\Gamma \notin \Phi^\textsf{ex}$.
\end{proof}

\begin{prop}\label{prop:normal_bound}
Let $ t $ be a normalizable $\lambda$-term. If $ \Pi $ is a derivation of $ \Gamma \vdash_R t : \alpha $ and $ (\Gamma, \alpha) \in \Phi^\textsf{ex} \times D^\textsf{ex}$, then we have $ l_\beta((t, \emptyset) . \epsilon) \leq \vert \Pi \vert $.
\end{prop}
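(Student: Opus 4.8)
The plan is to follow the proof of Proposition~\ref{Prop KAM<=D} almost verbatim, adding the one new case that distinguishes $\succ_\beta$ from $\succ_h$ and using the hypotheses $\Gamma \in \Phi^\textsf{ex}$ and $\alpha \in D^\textsf{ex}$ exactly there. Concretely, I would prove, by induction on $l_\beta(s)$ — which is finite by Theorem~\ref{theorem : normalizable => l' finite} — the following statement: for every $s \in \mathbb{S}$ such that $\overline{s}$ is normalizable, for every $(\Gamma, \alpha) \in \Phi^\textsf{ex} \times D^\textsf{ex}$ and every derivation $\Pi$ of $\Gamma \vdash s : \alpha$, one has $l_\beta(s) \leq \vert \Pi \vert$. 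The proposition follows by taking $s = (t, \emptyset) . \epsilon$: we have $\overline{s} = t$, which is normalizable, and a derivation $\Pi$ of $\Gamma \vdash_R t : \alpha$ yields, with the empty environment, a derivation of $\Gamma \vdash (t, \emptyset) . \epsilon : \alpha$ of size $\vert \Pi \vert$.

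The base case is vacuous since a state is never a Krivine normal form for $\succ_\beta$, so $l_\beta(s) \geq 1$. For the inductive step I split on the shape of the head closure of $s$, obtaining five cases, four of which are essentially those of Proposition~\ref{Prop KAM<=D}: the current subterm is a variable bound in the environment (decompose $\Pi$ as in the corresponding case there), an application (use Lemma~\ref{KAM <= D cas application}), an abstraction faced with a nonempty stack (decompose $\Pi$ as in the corresponding case there), or an abstraction faced with the empty stack (idem). In the first three of these, $\succ_\beta$ behaves exactly as $\succ_h$, the successor state $s'$ satisfies $\overline{s'} = \overline{s}$ (hence is normalizable), keeps the same output type $\alpha$, and acquires a context that is a rearranged partial sum of $\Gamma$, so $(\Gamma', \alpha)$ is still in $\Phi^\textsf{ex} \times D^\textsf{ex}$ and the induction hypothesis gives $l_\beta(s) = 1 + l_\beta(s') \leq 1 + \vert \Pi' \vert = \vert \Pi \vert$, where $\Pi'$ is the extracted derivation of $s'$. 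In the abstraction/empty-stack case, $\alpha = (b, \beta)$, the successor has output type $\beta$ and context $\Gamma, x : b$; since $(b, \beta) \in D^\textsf{ex}$ forces $\beta \in D^\textsf{ex}$ and $b \in \mathcal{M}_{\textrm{fin}}(\overline{D^\textsf{ex}})$, the "ex" hypotheses are preserved and the same estimate applies (normalizability of $\overline{(u,e).\epsilon}$ being as in the proof of Theorem~\ref{theorem : normalizable => l' finite}).

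The fifth case — the current subterm is a variable $x$ with $x \notin \textsf{dom}(e)$ — is the one that is new and where I expect the real work. Here $s \succ_\beta (x)(c_1 . \epsilon) \ldots (c_q . \epsilon)$ and, as established in the proof of Theorem~\ref{theorem : normalizable => l' finite}, $l_\beta(s) = 1 + \sum_{k=1}^q l_\beta(c_k . \epsilon)$. Decomposing $\Pi = (\Pi_0, \Pi_1, \ldots, \Pi_q)$ with $\Pi_0$ deriving $\Gamma_0 \vdash (x, e) : b_1 \ldots b_q \alpha$ and $\Pi_k$ deriving $\Gamma_k \vdash c_k : b_k$, the fact that $x \notin \textsf{dom}(e)$ forces $\Gamma_0(x) = [\, b_1 \ldots b_q \alpha \,]$ and $\vert \Pi_0 \vert = 1$. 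Then $\Gamma_0 \in \Phi^\textsf{ex}$ gives $b_1 \ldots b_q \alpha \in \overline{D^\textsf{ex}}$, so, unwinding the definition of $\overline{D^\textsf{ex}}$, every $b_k$ is nonempty with $\textsf{Supp}(b_k) \subseteq D^\textsf{ex}$ — this is exactly the content of Lemma~\ref{lemme : normal : bound}, strengthened to membership in $D^\textsf{ex}$. Writing $\Pi_k = (\Pi_k^1, \ldots, \Pi_k^{p_k})$ with $p_k \geq 1$, each $\Pi_k^1$ derives $\Gamma_k^1 \vdash c_k : \beta_k^1$ with $\beta_k^1 \in D^\textsf{ex}$ and $\Gamma_k^1 \in \Phi^\textsf{ex}$, hence is a derivation of the state $c_k . \epsilon$ with an "ex" typing; moreover $\overline{c_k}$ is normalizable (each argument of a normalizable term whose head is a free variable is normalizable, as used in the proof of Theorem~\ref{theorem : normalizable => l' finite}). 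The induction hypothesis then yields $l_\beta(c_k . \epsilon) \leq \vert \Pi_k^1 \vert \leq \vert \Pi_k \vert$, and summing gives $l_\beta(s) = 1 + \sum_{k=1}^q l_\beta(c_k . \epsilon) \leq \vert \Pi_0 \vert + \sum_{k=1}^q \vert \Pi_k \vert = \vert \Pi \vert$.

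The only genuine subtlety, and the place the hypothesis $(\Gamma, \alpha) \in \Phi^\textsf{ex} \times D^\textsf{ex}$ is indispensable, is precisely this descent of the "ex" conditions to the argument closures in the fifth case: it is what guarantees both that the $b_k$ are nonempty (so that a component $\Pi_k^1$ exists at all) and that the typings passed to the induction hypothesis are again of the form $\Phi^\textsf{ex} \times D^\textsf{ex}$. Everywhere else the "ex" property is inherited for free, because the output type of the successor state is either unchanged or a positive subformula of $\alpha$, and its context is a rearranged partial sum of $\Gamma$ (possibly extended by a multiset that is itself a negative subformula of $\alpha$).
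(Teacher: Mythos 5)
Your proof is correct and follows exactly the route the paper takes: the paper's own proof is a two-line sketch (induction on $l_\beta(s)$ over states, with the case $x \notin \textsf{dom}(e)$ handled by Lemma~\ref{lemme : normal : bound}), of which your argument is a faithful and complete elaboration. In particular you rightly observe that to re-enter the induction hypothesis one needs not only $b_k \neq []$ but also $\textsf{Supp}(b_k) \subseteq D^\textsf{ex}$ and $\Gamma_k^1 \in \Phi^\textsf{ex}$ --- a point the paper leaves implicit but which, as you say, follows from $\Gamma \in \Phi^\textsf{ex}$ and the definition of $\overline{D^\textsf{ex}}$.
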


\begin{proof}
By Theorem \ref{theorem : normalizable => l' finite}, we can prove, by induction on $ l_\beta(s) $, that for any $ s = (c_0, \ldots, c_q) \in \mathbb{S} $ such that $ (\overline{c_0})\overline{c_1} \ldots \overline{c_q} $ is normalizable, for any $(\Gamma, \alpha) \in \Phi \times D$,  if $ \Pi $ is a derivation of $ \Gamma \vdash s : \alpha $ and $ (\Gamma, \alpha) \in \Phi^\textsf{ex} \times D^\textsf{ex}$, then we have $ l_\beta(s) \leq \vert \Pi \vert $.

In the case where $ s = ((x, e), c_1, \ldots, c_q) $ and $ x \notin \textrm{dom}(e) $, we apply Lemma \ref{lemme : normal : bound}.
\end{proof}

\begin{prop}\label{prop:normal_principal}
Assume that $ t $ is a normalizable $\lambda$-term and that $ (\Gamma, \alpha) $ is a $1$-typing of $t$. Then there exists a derivation $ \Pi $ of $ \Gamma \vdash_R t : \alpha $ such that $ l_\beta((t, \emptyset) . \epsilon) = \vert \Pi \vert $.
\end{prop}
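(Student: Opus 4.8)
The plan is to mirror the proof of Proposition~\ref{Prop KAM = D}, adapting it to the machine $\succ_\beta$ and keeping track of the prescribed $1$-typing. Concretely I would prove, by induction on $l_\beta(s)$ (finite by Theorem~\ref{theorem : normalizable => l' finite}), the following statement about all configurations: for every state $s \in \mathbb{S}$ with $\overline{s}$ normalizable and every $1$-typing $(\Gamma, \alpha)$ of the $\beta$-normal form of $\overline{s}$, there is a derivation $\Pi$ of $\Gamma \vdash s : \alpha$ with $\vert \Pi \vert = l_\beta(s)$. As we check case by case, $\succ_\beta$ preserves the $\beta$-normal form of $\overline{s}$ (up to an outermost $\lambda$ in the abstraction-with-empty-stack case), so the set of available $1$-typings transfers to the configurations reached by the machine and the induction hypothesis is always applicable to the right object. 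Proposition~\ref{prop:normal_principal} then follows by taking $s = (t, \emptyset) . \epsilon$: a derivation of $\Gamma \vdash (t, \emptyset) . \epsilon : \alpha$ is just a derivation of $\Gamma \vdash_R t : \alpha$, $\overline{(t, \emptyset) . \epsilon} = t$, and a $1$-typing of $t$ is by definition a $1$-typing of its normal form.

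For the inductive step, four of the five cases are literally those of the proof of Proposition~\ref{Prop KAM = D}. When $s = (x, e) . \pi$ with $x \in \textsf{dom}(e)$, when $s = ((v)u, e) . \pi$, when $s = (\lambda x. u, e) . \epsilon$, and when $s = (\lambda x. u, e) . (c . \pi')$, the machine passes in one step to a configuration $s''$ --- a state, after discarding an outermost $\lambda$ in the abstraction cases --- with $l_\beta(s) = l_\beta(s'') + 1$ and carrying the same $1$-typing (possibly with one $\lambda$ stripped or restored). The induction hypothesis gives a derivation $\Pi''$ of $s''$ with $\vert \Pi'' \vert = l_\beta(s'')$, and Lemmas~\ref{lemme : cas variable dans environnement}, \ref{KAM = D cas application}, \ref{lemme : derivation cloture : cas abstraction} and \ref{lemme : derivation etat : cas abstraction et pile non vide}, respectively, turn $\Pi''$ into a derivation $\Pi$ of $s$ with $\vert \Pi \vert = \vert \Pi'' \vert + 1 = l_\beta(s)$.

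The only genuinely new case is the one where $\succ_\beta$ departs from $\succ_h$: $s = ((x, e), c_1, \ldots, c_q)$ with $x \notin \textsf{dom}(e)$. Then $\overline{s} = (x)\overline{c_1} \ldots \overline{c_q}$, each $\overline{c_k}$ is normalizable, and the leftmost normal form of $\overline{s}$ is $(x)v_1 \ldots v_q$ where $v_k$ is the normal form of $\overline{c_k}$; so the last rule of the $\vdash_1$-derivation of $(\Gamma, \alpha)$ is the application rule, which gives $\alpha = \gamma \in A$, $\Gamma = \sum_{k=1}^q \Gamma_k + \{ (x, [[\alpha_1] \ldots [\alpha_q] \gamma]) \}$, and $\Gamma_k \vdash_1 v_k : \alpha_k$, i.e. $(\Gamma_k, \alpha_k)$ is a $1$-typing of $\overline{c_k}$. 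Each $c_k . \epsilon$ is a normalizable state with $l_\beta(c_k . \epsilon) < l_\beta(s)$, since $l_\beta(s) = \sum_{k=1}^q l_\beta(c_k . \epsilon) + 1$ (as established in the proof of Theorem~\ref{theorem : normalizable => l' finite}); so the induction hypothesis yields derivations $\Pi_k$ of $\Gamma_k \vdash c_k : \alpha_k$ with $\vert \Pi_k \vert = l_\beta(c_k . \epsilon)$. Letting $\Pi_0$ be the closure derivation of $\{ (x, [[\alpha_1] \ldots [\alpha_q] \gamma]) \} \vdash (x, e) : [\alpha_1] \ldots [\alpha_q] \gamma$ built from the axiom, the variables of $e$ --- which do not occur in the type --- being assigned empty derivations, so that $\vert \Pi_0 \vert = 1$, the sequence $\Pi = (\Pi_0, (\Pi_1), \ldots, (\Pi_q))$ is a derivation of $\Gamma \vdash s : \alpha$ with $\vert \Pi \vert = 1 + \sum_{k=1}^q l_\beta(c_k . \epsilon) = l_\beta(s)$. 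For $q = 0$ this is the base of the recursion: $\overline{s}$ reduces to $x$, $l_\beta(s) = 1$ and $\vert \Pi_0 \vert = 1$.

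The step I expect to be the main obstacle is exactly this last case, namely reading off the component $1$-typings $(\Gamma_k, \alpha_k)$ from $(\Gamma, \alpha)$: this relies on the shape of leftmost normal forms of terms with a free head variable together with the fact that the $\vdash_1$ rules force each argument of the head variable to receive a single type and not a multiset --- precisely the feature that distinguishes $1$-typings and that makes the count equal $l_\beta(s)$ rather than merely bound it. Indeed, Proposition~\ref{prop:normal_bound} already gives $l_\beta((t, \emptyset) . \epsilon) \leq \vert \Pi \vert$ for every derivation whose typing lies in $\Phi^\textsf{ex} \times D^\textsf{ex}$, so only the matching lower construction is new; alternatively one may argue equality directly in the induction as above. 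The four remaining cases are pure bookkeeping, already done for $\succ_h$ in Proposition~\ref{Prop KAM = D}.
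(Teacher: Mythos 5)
Your proposal is correct and follows essentially the same route as the paper: an induction on $l_\beta(s)$ over states carrying a $1$-typing of the normal form of $\overline{s}$, reusing Lemmas~\ref{lemme : cas variable dans environnement}, \ref{KAM = D cas application}, \ref{lemme : derivation cloture : cas abstraction} and \ref{lemme : derivation etat : cas abstraction et pile non vide} for the four old cases, and handling the head-variable case by decomposing the $1$-typing along the application rule, applying the induction hypothesis to the stack closures via $l_\beta(s) = \sum_{k=1}^q l_\beta(c_k)+1$, and reassembling with an axiom derivation of size $1$. Your write-up is in fact slightly more careful than the paper's about distinguishing derivations of closures and states in that case.
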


\begin{proof}
By Theorem \ref{theorem : normalizable => l' finite}, we can prove, by induction on $ l_\beta(s) $, that for any $s \in \mathbb{S}$ such that $\overline{s}$ is normalizable and for any $1$-typing $(\Gamma, \alpha)$ of $\overline{s}$, there exists a derivation $\Pi$ of $\Gamma \vdash s:\alpha$ such that $l_\beta(s) = \vert \Pi \vert$. 

The base case is trivial, because we never have $ l_\beta(s) = 0 $. The inductive step is divided into five cases:
\begin{itemize}
\item In the case where $ s = ((x, e), c_1, \ldots, c_q) $ and $ x \notin \textrm{dom}(e) $, $ (\Gamma, \alpha) $ is a $ 1 $-typing of $ (x) t_1 \ldots t_q $, where $ t_1, \ldots, t_q $ are the respective normal forms of $ \overline{c_1}, $ $ \ldots, $ $ \overline{c_q} $, hence there exist $ \Gamma_1, \ldots, \Gamma_q $, $ \alpha_1, \ldots, \alpha_q $ such that 
\begin{itemize}
\item $ \Gamma = \sum_{k=1}^q \Gamma_k + \{ (x,[[\alpha_1] \ldots [\alpha_q] \alpha ]  )\} $
\item and $ (\Gamma_1, \alpha_1), \ldots, (\Gamma_q, \alpha_q) $ are $ 1 $-typings of $ t_1, \ldots, t_q $ respectively.
\end{itemize}
By induction hypothesis, there exist $ q $ derivations $ \Pi_1, \ldots, \Pi_q $ of $ \Gamma_1 \vdash_R t_1 : \alpha_1 ,$ $ \ldots, $ $ \Gamma_q \vdash_R t_q : \alpha_q $ respectively. We denote by $ x_1, $ $ \ldots , $ $ x_m $ the elements of $ \textrm{dom}(e) $. We denote by $ \Pi_0 $ the derivation of 
$$ x: [[\alpha_1] \ldots [\alpha_q] \alpha] \vdash_R x : \alpha . $$ 
Set $ \Pi = ((\Pi_0, \bigcup_{j=1}^m \{ (x_j, \epsilon) \}), \Pi_1, \ldots, \Pi_q) $: it is a derivation of $ \Gamma \vdash_R \overline{s} : \alpha $ and we have 
\begin{eqnarray*}
l_\beta(s) & = & \sum_{k=1}^q l_\beta(c_k) + 1 \allowdisplaybreaks \\
& = & \sum_{k=1}^q \vert \Pi_k \vert + 1 \\
& & \textrm{   (by induction hypothesis}) \allowdisplaybreaks \\
& = & \vert \Pi_0 \vert + \sum_{k=1}^q \vert \Pi_k \vert \allowdisplaybreaks \\
& = & \vert \Pi \vert .
\end{eqnarray*}
\item In the case where $s$ is of the shape $(x, e) . \pi$ with $x \in \textsf{dom}(e)$, apply the induction hypothesis and Lemma \ref{lemme : cas variable dans environnement}.
\item In the case where $s$ is of the shape $((v)u, e) . \pi $, apply the induction hypothesis and Lemma \ref{KAM = D cas application}.
\item In the case where $s$ is of the shape $(\lambda x. u, e) . \epsilon $, apply the induction hypothesis and Lemma \ref{lemme : derivation cloture : cas abstraction}.
\item In the case where $s$ is of the shape $(\lambda x. u, e) . \pi $ with $\pi \not= \epsilon $, apply the induction hypothesis and Lemma \ref{lemme : derivation etat : cas abstraction et pile non vide}.
\end{itemize}
\end{proof}

\begin{definition}
For any $\lambda$-term $t$, we set $\Delta^\textsf{ex}(t) = \bigcup_{(\Gamma, \alpha) \in \Phi^\textsf{ex} \times D^\textsf{ex}} \Delta(t, (\Gamma, \alpha))$.
\end{definition}

\begin{theorem}\label{theorem:normal_bound}
For any $\lambda$-term $t$, we have $l_\beta((t, \emptyset) . \epsilon) = \inf \vert \Delta^\textsf{ex}(t) \vert $.
\end{theorem}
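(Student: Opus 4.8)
The strategy mirrors the proof of Theorem \ref{Theorem head}, splitting into two cases according to whether $t$ is normalizable. If $t$ is not normalizable, then by Theorem \ref{th normalisable de tete} (ii) there is no pair $(\Gamma,\alpha) \in \Phi^\textsf{ex} \times D^\textsf{ex}$ with $\Gamma \vdash_R t : \alpha$, so $\Delta^\textsf{ex}(t) = \emptyset$ and $\inf \vert \Delta^\textsf{ex}(t)\vert = \infty$; on the other side, $\overline{(t,\emptyset).\epsilon} = t$ is not normalizable, so by the contrapositive of Theorem \ref{theorem : normalizable => l' finite} the value $l_\beta((t,\emptyset).\epsilon)$ is not finite, hence equals $\infty$. (A small point to check: the converse direction, that $l_\beta$ finite implies $\overline{s}$ normalizable, which is the analogue for $\succ_\beta$ of Theorems \ref{th Krivine fini => head} and \ref{prop : f.n. de Krivine => f.n. de tete}; I would either invoke it as implicitly established by the machine's design or insert a one-line argument that a Krivine $\beta$-normal form of $\overline{s}$ is a $\beta$-normal $\lambda$-term.)

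The interesting case is when $t$ is normalizable. Here I would combine the two Propositions just proved. Proposition \ref{prop:normal_bound} gives, for \emph{every} derivation $\Pi$ of $\Gamma \vdash_R t : \alpha$ with $(\Gamma,\alpha) \in \Phi^\textsf{ex} \times D^\textsf{ex}$, the inequality $l_\beta((t,\emptyset).\epsilon) \leq \vert \Pi\vert$; taking the infimum over all such $\Pi$ yields $l_\beta((t,\emptyset).\epsilon) \leq \inf \vert \Delta^\textsf{ex}(t)\vert$. For the reverse inequality I need a single derivation $\Pi_0 \in \Delta^\textsf{ex}(t)$ with $\vert \Pi_0\vert \leq l_\beta((t,\emptyset).\epsilon)$, equivalently $\vert \Pi_0\vert = l_\beta((t,\emptyset).\epsilon)$. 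Proposition \ref{prop:normal_principal} supplies exactly this: pick any $1$-typing $(\Gamma,\alpha)$ of $t$ — one exists because $t$ is normalizable and its normal form, being a normal $\lambda$-term, admits a $1$-typing by the rules of Definition \ref{definition:1-typing} (structural induction on the normal form, as in the remark following that definition) — and Proposition \ref{prop:normal_principal} produces a derivation $\Pi_0$ of $\Gamma \vdash_R t : \alpha$ with $\vert \Pi_0 \vert = l_\beta((t,\emptyset).\epsilon)$. By the note after Definition \ref{definition:1-typing}, every $1$-typing lies in $\Phi^\textsf{ex} \times D^\textsf{ex}$, so $\Pi_0 \in \Delta^\textsf{ex}(t)$ and therefore $\inf \vert \Delta^\textsf{ex}(t)\vert \leq \vert \Pi_0\vert = l_\beta((t,\emptyset).\epsilon)$. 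Combining the two inequalities gives equality.

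The only genuinely delicate preliminary is the existence of a $1$-typing for every normal $\lambda$-term; I would dispatch it by induction on the structure of a normal form, using the three rules $\vdash_1$: a normal term is of the shape $\lambda x_1.\ldots\lambda x_k.(y)u_1\ldots u_n$ with each $u_i$ normal, so by induction each $u_i$ has a $1$-typing $(\Gamma_i,\alpha_i)$, and the head-variable rule then assembles a $1$-typing of $(y)u_1\ldots u_n$, after which the abstraction rule is applied $k$ times. Everything else is bookkeeping already carried out in Propositions \ref{prop:normal_bound} and \ref{prop:normal_principal}, so the main obstacle is essentially behind us; the proof of the theorem itself is short, and I expect it to read almost verbatim like the proof of Theorem \ref{Theorem head} with \ref{Prop KAM<=D} and \ref{Prop KAM = D} replaced by \ref{prop:normal_bound} and \ref{prop:normal_principal}, and the restriction to $\Phi^\textsf{ex} \times D^\textsf{ex}$ (equivalently, to $1$-typings) threaded through.
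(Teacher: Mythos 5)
Your proposal is correct and follows essentially the same route as the paper: the non-normalizable case is handled via Theorem~\ref{th normalisable de tete}~(ii) together with the fact that $l_\beta((t,\emptyset).\epsilon)=\infty$, and the normalizable case by combining Proposition~\ref{prop:normal_bound} (giving $l_\beta \leq \inf\vert\Delta^\textsf{ex}(t)\vert$) with Proposition~\ref{prop:normal_principal} and the observation that $1$-typings lie in $\Phi^\textsf{ex}\times D^\textsf{ex}$ (giving the reverse inequality). The two points you flag --- the converse ``$l_\beta$ finite implies normalizable'' (the $\succ_\beta$ analogue of Theorem~\ref{th Krivine fini => head}), which the paper's citation of Theorem~\ref{theorem : normalizable => l' finite} does not literally supply, and the existence of a $1$-typing for every normal form --- are indeed left implicit in the paper, and your sketches for both are adequate.
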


\begin{proof}
We distinguish between two cases.
\begin{itemize}
\item The $\lambda$-term $ t $ is not normalizable: by Theorem \ref{th normalisable de tete} (ii), $\inf \vert \Delta^\textsf{ex}(t) \vert = \infty$ and, by Theorem \ref{theorem : normalizable => l' finite}, $l_\beta((t, \emptyset) . \epsilon) = \infty$.
\item The $\lambda$-term $ t $ is normalizable: apply Proposition \ref{prop:normal_bound} and Proposition \ref{prop:normal_principal}.
\end{itemize}
\end{proof}

\subsection{Relating semantics and execution time}\label{semantics}

In this subsection, we prove the first truly semantic measure of execution time of this paper by bounding (by purely semantic means, i.e. without considering derivations) the number of steps of the computation of the principal head normal form (Theorem \ref{theorem : semantics bounds}).

We define the size $ \vert \alpha \vert $ of any $\alpha \in D$ using an auxiliary function $\textsf{aux}$.

\begin{definition}
For any $ \alpha \in D$, we define $ \vert \alpha \vert $ and $ \textsf{aux}(\alpha) $ by induction on $\textsf{depth}(\alpha)$,:
\begin{itemize}
\item if $ \alpha \in A $, then $ \vert \alpha \vert = 1 $ and $ \textsf{aux}(\alpha) = 0 $;
\item if $ \alpha = ([\alpha_1, \ldots, \alpha_n], \alpha_0) $, then 
\begin{itemize}
\item $ \vert \alpha \vert = \sum_{i=1}^n \textsf{aux}(\alpha_i) + \vert \alpha_0 \vert +1$
\item and $\textsf{aux}(\alpha) = \sum_{i=1}^n \vert \alpha_i \vert + textsf{aux}(\alpha_0)+1 .$
\end{itemize} \end{itemize}
For any $a=[\alpha_1, \ldots, \alpha_n] \in \mathcal{M}_\textrm{fin}(D)$, we set $\vert a \vert = \sum_{i=1}^n \vert \alpha_i \vert $ and $ \textsf{aux}(\alpha) = \sum_{i=1}^n \textsf{aux}(\alpha_i) $.
\end{definition}

Notice that for any $\alpha \in D$, the size $\vert \alpha \vert$ of $\alpha$ is the sum of the number of positive occurrences of atoms in $\alpha$ and of the number of commas separating a multiset of types and a type.

\begin{ex}\label{ex:size}
Let $\gamma \in A$. Set $\alpha = ([\gamma], \gamma)$ and $a = [\underbrace{\alpha, \ldots, \alpha}_{n \textrm{ times}}]$. We have $\vert (a, \alpha) \vert = 2n+3$.
\end{ex}

\begin{lemme}\label{lemme : taille d'un type clos}
For any $\lambda$-term $ u $, if there exists a derivation $ \Pi $ of $ x_1 : a_1, \ldots, x_m : a_m \vdash_R u : \alpha $, then $ \vert a_1 \ldots a_m \alpha \vert = \textsf{aux}(a_1 \ldots a_m \alpha) $.
\end{lemme}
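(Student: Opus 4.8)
The plan is to prove the lemma by induction on the derivation, after isolating the right numerical invariant. For a type $\beta \in D$ set $\phi(\beta) = \vert \beta \vert - \textsf{aux}(\beta)$, and for a multiset $b = [\gamma_1, \ldots, \gamma_n] \in \mathcal{M}_{\textrm{fin}}(D)$ set $\phi(b) = \vert b \vert - \textsf{aux}(b) = \sum_{i=1}^n \phi(\gamma_i)$; thus $\phi$ is additive with respect to the sum of multisets, $\phi(b+b') = \phi(b)+\phi(b')$ and $\phi([\,]) = 0$. Unfolding the definitions of $\vert \cdot \vert$ and $\textsf{aux}$, the summands $+1$ cancel and one gets at once $\phi(\gamma) = 1$ for $\gamma \in A$ and $\phi((b, \beta_0)) = \phi(\beta_0) - \phi(b)$ for $(b, \beta_0) \in \mathcal{M}_{\textrm{fin}}(D) \times D$. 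From the latter, an easy induction on $m$ gives, for any $a_1, \ldots, a_m \in \mathcal{M}_{\textrm{fin}}(D)$ and $\alpha \in D$,
$$ \phi(a_1 \ldots a_m \alpha) = \phi(\alpha) - \sum_{j=1}^m \phi(a_j) . $$
So the lemma reduces to the claim that $\phi(\alpha) = \sum_{x \in \mathcal{V}} \phi(\Gamma(x))$ whenever there is a derivation of $\Gamma \vdash_R u : \alpha$ (for $\Gamma = x_1 : a_1, \ldots, x_m : a_m$ this right-hand side is $\sum_{j=1}^m \phi(a_j)$).

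I would establish this claim by induction on the derivation $\Pi$, one case per rule of System $R$. For an axiom $x : [\alpha] \vdash_R x : \alpha$, the context sends $x$ to $[\alpha]$ and every other variable to $[\,]$, so $\sum_x \phi(\Gamma(x)) = \phi([\alpha]) = \phi(\alpha)$. For an abstraction with premiss $\Gamma, x : a \vdash_R v : \alpha$ and conclusion $\Gamma \vdash_R \lambda x. v : (a, \alpha)$, the induction hypothesis gives $\phi(\alpha) = \sum_x \phi(\Gamma(x)) + \phi(a)$, hence $\phi((a,\alpha)) = \phi(\alpha) - \phi(a) = \sum_x \phi(\Gamma(x))$, which is what is required. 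For an application with premisses $\Gamma_0 \vdash_R v : ([\alpha_1, \ldots, \alpha_n], \alpha)$ and $\Gamma_i \vdash_R u : \alpha_i$ for $1 \leq i \leq n$, and conclusion $\Gamma_0 + \ldots + \Gamma_n \vdash_R (v)u : \alpha$, the induction hypothesis yields $\phi(\alpha) - \sum_{i=1}^n \phi(\alpha_i) = \sum_x \phi(\Gamma_0(x))$ and $\phi(\alpha_i) = \sum_x \phi(\Gamma_i(x))$ for each $i$; adding these equalities and using additivity of $\phi$ over multiset sums gives $\phi(\alpha) = \sum_x \phi((\Gamma_0 + \ldots + \Gamma_n)(x))$, as needed. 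The lemma then follows by combining this claim with the displayed identity and $\phi = \vert \cdot \vert - \textsf{aux}$.

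There is no genuine obstacle here. The only point requiring care is the bookkeeping of polarities — equivalently, the observation that $\phi(\beta)$ measures the excess of positive over negative atom occurrences in $\beta$ — which is precisely what makes the invariant $\phi(\alpha) = \sum_x \phi(\Gamma(x))$ stable under all three typing rules. Phrasing the induction hypothesis in terms of the context $\Gamma$ rather than of the curried type $a_1 \ldots a_m \alpha$ is what lets the application case (where contexts are added) go through cleanly; the passage back to $\vert a_1 \ldots a_m \alpha \vert = \textsf{aux}(a_1 \ldots a_m \alpha)$ is then immediate.
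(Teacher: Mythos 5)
Your proof is correct and follows the same route as the paper, which simply proves the lemma by induction on the derivation $\Pi$; your reformulation via the invariant $\phi = \vert\cdot\vert - \textsf{aux}$ (with $\phi(\alpha)=\sum_x\phi(\Gamma(x))$ preserved by each rule) is just a clean way of organizing that induction, and all three rule cases check out.
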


\begin{proof}
By induction on $ \Pi $.
\end{proof}

\begin{lemme}\label{lemme : taille de Pi <= taille de alpha}
Let $ v $ be a normal $\lambda$-term and let $ \Pi $ be a derivation of $ x_1 : a_1, \ldots, x_m : a_m \vdash_R v : \alpha $. Then we have $ \vert \Pi \vert \leq \vert a_1 \ldots a_m \alpha \vert $.
\end{lemme}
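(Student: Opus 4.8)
The plan is to prove Lemma~\ref{lemme : taille de Pi <= taille de alpha} by induction on the structure of the normal $\lambda$-term $v$ (equivalently, by induction on $\Pi$, which in the case of normal terms has the same shape as the term). Recall that a normal $\lambda$-term is of the shape $\lambda y_1. \ldots \lambda y_r. (x) w_1 \ldots w_p$ with $w_1, \ldots, w_p$ normal; so after peeling off the leading abstractions it suffices to treat the two cases: $v$ is an abstraction $\lambda y. v_0$ with $v_0$ normal, and $v$ is an application $(x) w_1 \ldots w_p$ headed by a variable with normal arguments. I will carry out these two cases.

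For the abstraction case $v = \lambda y. v_0$: the derivation $\Pi$ ends with the abstraction rule applied to a derivation $\Pi_0$ of $x_1 : a_1, \ldots, x_m : a_m, y : b \vdash_R v_0 : \beta$ with $\alpha = (b, \beta)$, and $\vert \Pi \vert = \vert \Pi_0 \vert + 1$. By the induction hypothesis, $\vert \Pi_0 \vert \leq \vert a_1 \ldots a_m\, b\, \beta \vert$. Now I need $\vert a_1 \ldots a_m\, b\, \beta \vert + 1 \leq \vert a_1 \ldots a_m\, (b,\beta) \vert = \vert a_1 \ldots a_m\, \alpha \vert$, which should follow by unwinding the definition of $\vert \cdot \vert$ on the encoding $a_1 \ldots a_m \alpha$ (each extra multiset pushed to the left of $\alpha$ and absorbed into the pair contributes an extra comma, i.e.\ $+1$, plus the nonnegative quantities $\textsf{aux}(\beta_i)$ for elements of $b$); this is a routine computation with the size function.

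For the head-variable application case $v = (x) w_1 \ldots w_p$: the derivation $\Pi$ is built from subderivations of the $w_i$'s and an axiom-based backbone for the head variable $x$. More precisely, peeling off the $p$ application rules, for each $i \in \{1, \ldots, p\}$ the argument $w_i$ is typed by a multiset of subderivations, say $n_i$ of them, each of size bounded (by the induction hypothesis applied to $w_i$) by the size of the corresponding type; the head variable contributes an axiom typing $x$ with a single compound type $[[\ldots]\ldots[\ldots]\gamma]$, and each of the $p$ application rules contributes $1$ to $\vert \Pi \vert$. So $\vert \Pi \vert = 1 + p + \sum_{i=1}^{p} \vert (\text{subderivations of } w_i)\vert$. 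On the other side, the type assigned to $x$ inside $\Gamma$ records all the argument types (grouped into multisets), and the size $\vert a_1 \ldots a_m \alpha \vert$ counts, among other things, the positive-atom occurrences in each argument type contribution together with the commas separating the nested multisets from $\gamma$; tracking how $\vert w_i\text{'s type}\vert$ appears inside $\vert \Gamma \vert$ as $\textsf{aux}$ of the corresponding component (using Lemma~\ref{lemme : taille d'un type clos} to identify $\vert\cdot\vert$ with $\textsf{aux}$ on typed components) should yield that the right-hand side is at least $1 + p + \sum_i \vert(\text{subderivations of }w_i)\vert$.

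The main obstacle I anticipate is purely bookkeeping: the size function $\vert \cdot \vert$ and its companion $\textsf{aux}$ are defined mutually recursively, and the type carried by the head variable in the application case is a deeply nested multiset-of-multisets, so matching up $\vert \Pi \vert$ with $\vert a_1 \ldots a_m \alpha \vert$ term-by-term requires carefully unwinding both recursions and exploiting the identity $\vert \cdot \vert = \textsf{aux}(\cdot)$ on all the components that actually get typed (Lemma~\ref{lemme : taille d'un type clos}). I expect no genuine difficulty beyond this careful accounting, since the inequality is loose (it is an inequality, not an equality, because the head-variable axiom and the extra structure in $\Gamma$ give some slack). I would organize the computation so that each application rule's unit cost is paired with one comma in the encoding of $a_1 \ldots a_m \alpha$ and each argument subderivation's size is paired with the $\vert \cdot \vert$-value of its type appearing (as an $\textsf{aux}$-contribution) inside $\Gamma$.
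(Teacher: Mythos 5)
There is a genuine gap in your abstraction case. You reduce it to the inequality $\vert a_1 \ldots a_m\, b\, \beta \vert + 1 \leq \vert a_1 \ldots a_m\, (b,\beta) \vert$, but by the very definition of the notation $a_1 \ldots a_m \alpha$ (namely $a_1 \ldots a_{m+1}\alpha = a_1\ldots a_m (a_{m+1},\alpha)$), the expressions $a_1 \ldots a_m\, b\, \beta$ and $a_1 \ldots a_m\, (b,\beta)$ denote the \emph{same} element of $D$; concretely both sizes equal $\sum_{j=1}^m \textsf{aux}(a_j) + \textsf{aux}(b) + \vert\beta\vert + m + 1$, because the "extra comma'' you invoke is already counted in $\vert a_1\ldots a_m\, b\,\beta\vert$ (one $+1$ per multiset prefix, including $b$). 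So the inequality you need reads $X+1 \leq X$ and is false: the abstraction rule adds $1$ to $\vert\Pi\vert$ and $0$ to the right-hand side, and the naive induction hypothesis is too weak to absorb this. (That the slack is really needed is visible on $\lambda y.y$, where $\vert\Pi\vert = 2 = \vert([\gamma],\gamma)\vert$ with no room to spare.)

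The repair is to strengthen the induction hypothesis on the head-variable spine rather than to look for slack in the abstraction step. For instance, writing $\textsf{aux}(\Gamma) = \sum_{x}\textsf{aux}(\Gamma(x))$, one proves simultaneously: (a) for normal terms of the form $(x)w_1\ldots w_p$, every derivation $\Pi$ of $\Gamma \vdash_R (x)w_1\ldots w_p : \alpha$ satisfies $\vert\Pi\vert \leq \textsf{aux}(\Gamma) - \textsf{aux}(\alpha) + 1$ (equality at the axiom, and preserved by each application rule, using (b) for the arguments $w_i$); and (b) for arbitrary normal terms, $\vert\Pi\vert \leq \textsf{aux}(\Gamma) + \vert\alpha\vert$, which follows from (a) for the body since $\textsf{aux}(\beta)+\vert\beta\vert \geq 1$, and which \emph{is} stable under abstraction because there both sides increase by exactly $1$. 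The lemma then follows from (b) since $\textsf{aux}(\Gamma)+\vert\alpha\vert \leq \vert a_1\ldots a_m\alpha\vert$. Your application-case bookkeeping, as written, only aims at the unstrengthened bound and therefore does not produce the surplus that the leading abstractions consume; without some such strengthening the induction does not close.
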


\begin{proof}
By induction on $ v $.
\end{proof}

\begin{theorem}\label{theorem : semantics bounds}
Let $ v $ and $ u $ be two closed normal $\lambda$-terms. Assume $ (a, \alpha) \in \llbracket v \rrbracket $ and $ \textsf{Supp}(a) \subseteq \llbracket u \rrbracket $. 
\begin{enumerate}
\item\label{theorem:semantics bounds - head} We have 
$ l_h(((v)u, \emptyset) . \epsilon) \leq 2 \vert a \vert + \vert \alpha \vert +2 . $
\item\label{theorem:semantics bounds - normal} If, moreover, $\alpha \in D^\textsf{ex}$, then we have 
$$ l_\beta(((v)u, \emptyset) . \epsilon) \leq 2 \vert a \vert + \vert \alpha \vert +2 . $$
\end{enumerate}
\end{theorem}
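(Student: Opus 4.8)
The plan is to exhibit, directly from the hypotheses, a single System~$R$ derivation of $\vdash_R (v)u : \alpha$ whose size is at most $2\vert a\vert + \vert\alpha\vert + 2$, and then to conclude through the exact characterisations of $l_h$ and $l_\beta$ already proved, namely Theorem~\ref{Theorem head} ($l_h((t,\emptyset).\epsilon)=\inf\vert\Delta(t)\vert$) and Theorem~\ref{theorem:normal_bound} ($l_\beta((t,\emptyset).\epsilon)=\inf\vert\Delta^\textsf{ex}(t)\vert$).

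First I would unfold the two hypotheses via Theorem~\ref{Theo : semantique = types}. Since $v$ is closed and $(a,\alpha)\in\llbracket v\rrbracket$, there is a derivation $\Pi_v$ of $\vdash_R v:(a,\alpha)$; write $a=[\alpha_1,\dots,\alpha_n]$. Since $\textsf{Supp}(a)\subseteq\llbracket u\rrbracket$ and $u$ is closed, for each $i\in\{1,\dots,n\}$ there is a derivation $\Pi_i$ of $\vdash_R u:\alpha_i$. Feeding $\Pi_v,\Pi_1,\dots,\Pi_n$ into the application rule of Definition~\ref{definition:System R} (all the contexts are empty because $v$ and $u$ are closed) produces a derivation $\Pi$ of $\vdash_R (v)u:\alpha$ with $\vert\Pi\vert=\vert\Pi_v\vert+\sum_{i=1}^n\vert\Pi_i\vert+1$, the $+1$ being the application node.

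Next I would bound the three summands. Because $v$ and $u$ are \emph{normal}, Lemma~\ref{lemme : taille de Pi <= taille de alpha} gives $\vert\Pi_v\vert\le\vert(a,\alpha)\vert$ and $\vert\Pi_i\vert\le\vert\alpha_i\vert$, so $\sum_i\vert\Pi_i\vert\le\sum_i\vert\alpha_i\vert=\vert a\vert$. The one spot that needs thought is the term $\vert(a,\alpha)\vert$: by definition $\vert(a,\alpha)\vert=\sum_{i=1}^n\textsf{aux}(\alpha_i)+\vert\alpha\vert+1$, and the auxiliary sizes $\textsf{aux}(\alpha_i)$ would prevent matching the stated bound were it not for Lemma~\ref{lemme : taille d'un type clos}: since each $\alpha_i$ is a type of the closed term $u$, we have $\textsf{aux}(\alpha_i)=\vert\alpha_i\vert$, hence $\sum_i\textsf{aux}(\alpha_i)=\vert a\vert$ and therefore $\vert(a,\alpha)\vert=\vert a\vert+\vert\alpha\vert+1$. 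Putting the pieces together, $\vert\Pi\vert\le(\vert a\vert+\vert\alpha\vert+1)+\vert a\vert+1=2\vert a\vert+\vert\alpha\vert+2$.

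Finally I would read off both conclusions. For item~(\ref{theorem:semantics bounds - head}), $\Pi\in\Delta((v)u)$, so $l_h(((v)u,\emptyset).\epsilon)=\inf\vert\Delta((v)u)\vert\le\vert\Pi\vert\le 2\vert a\vert+\vert\alpha\vert+2$ by Theorem~\ref{Theorem head}. For item~(\ref{theorem:semantics bounds - normal}), the typing realised by $\Pi$ is $(\emptyset,\alpha)$; the empty context belongs to $\Phi^\textsf{ex}$ and, by the extra hypothesis, $\alpha\in D^\textsf{ex}$, so in fact $\Pi\in\Delta^\textsf{ex}((v)u)$ and Theorem~\ref{theorem:normal_bound} yields $l_\beta(((v)u,\emptyset).\epsilon)=\inf\vert\Delta^\textsf{ex}((v)u)\vert\le\vert\Pi\vert\le 2\vert a\vert+\vert\alpha\vert+2$. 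There is no real obstacle here beyond assembling the previously established facts; the only delicate point is the rewriting of $\vert(a,\alpha)\vert$ as $\vert a\vert+\vert\alpha\vert+1$ via Lemma~\ref{lemme : taille d'un type clos}, and the observation that normality of $v$ and $u$ is exactly what licenses Lemma~\ref{lemme : taille de Pi <= taille de alpha}.
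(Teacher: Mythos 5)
Your proposal is correct and follows essentially the same route as the paper: build the derivation $\Pi$ of $\vdash_R (v)u:\alpha$ from the derivations supplied by Theorem~\ref{Theo : semantique = types}, bound its size using Lemma~\ref{lemme : taille de Pi <= taille de alpha} together with the identity $\textsf{aux}(\alpha_i)=\vert\alpha_i\vert$ from Lemma~\ref{lemme : taille d'un type clos}, and conclude via the size/execution-time correspondence (the paper cites Propositions~\ref{Prop KAM<=D} and~\ref{prop:normal_bound} directly, but these are exactly the inequality halves of the Theorems~\ref{Theorem head} and~\ref{theorem:normal_bound} you invoke).
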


\begin{proof}
Set $ a = [\alpha_1, \ldots, \alpha_n] $. There exist a derivation $ \Pi_0 $ of $ \vdash_R v : (a, \alpha) $ and $ n $ derivations $ \Pi_1, \ldots, \Pi_n $ of $ \vdash_R u : \alpha_1 , $ $ \ldots, $ $ \vdash_R u : \alpha_n $ respectively. Hence there exists a derivation $ \Pi $ of $ \vdash_R (v)u : \alpha $ such that $ \vert \Pi \vert = \sum_{i=0}^n \vert \Pi_i \vert + 1 $. 
\begin{enumerate}
\item We have
\begin{eqnarray*}
l_h(((v)u, \emptyset) . \epsilon) & \leq & \sum_{i=0}^n \vert \Pi_i \vert +1 \\
& & \textrm{   (by Proposition \ref{Prop KAM<=D})} \allowdisplaybreaks \\
& \leq & \vert (a, \alpha) \vert + \sum_{i=1}^n \vert \alpha_i \vert + 1 \\
& & \textrm{   (by Lemma \ref{lemme : taille de Pi <= taille de alpha})} \allowdisplaybreaks \\
& = & \sum_{i=1}^n \textsf{aux}(\alpha_i) + \vert \alpha \vert + 1 + \vert a \vert + 1 \allowdisplaybreaks \\
& = & \sum_{i=1}^n \vert \alpha_i \vert + \vert \alpha \vert + 1 + \vert a \vert + 1 \\
& & \textrm{   (by Lemma \ref{lemme : taille d'un type clos})} \allowdisplaybreaks \\
& = & 2 \vert a \vert + \vert \alpha \vert +2 .
\end{eqnarray*}
\item The only difference with the proof of (i) is that we apply Proposition \ref{prop:normal_bound} instead of Proposition \ref{Prop KAM<=D}.
\end{enumerate}
\end{proof}

\subsection{The exact number of steps}\label{subsection:exact_number}

This subsection is devoted to giving the exact number of steps of computation by purely semantic means. For arbitrary points $(a, \alpha) \in \llbracket v \rrbracket$ such that $a \in \mathcal{M}_\textsf{fin}(\llbracket u \rrbracket)$, it is clearly impossible to obtain an equality in Theorem \ref{theorem : semantics bounds}, because there exist such points with different sizes.

The only equalities we have by now are Theorem \ref{Theorem head} and Theorem \ref{theorem:normal_bound}, which use the size of the derivations. A first idea is then to look for points $(a, \alpha) \in \llbracket v \rrbracket$ such that $a \in \mathcal{M}_\textsf{fin}(\llbracket u \rrbracket)$ with $\vert (a, \alpha) \vert$ equals to the sizes of the derivations used in these theorems. But there are cases in which such points do not exist.

A more subtle way out is nevertheless possible, and here is where the notions of equivalence between derivations and of substitution defined in Subsection \ref{subsection : resource terms} come into the picture. More precisely, using the notion of substitution, Proposition~\ref{proposition : cas A infini} (the only place where we use the non-finiteness of the set $A$ of atoms through Fact~\ref{fact:ground deduction} and Lemma~\ref{lemma:size(type) and size(derivation)}) shows how to find, for any $\beta \in \llbracket t \rrbracket$, an element $\alpha \in \llbracket t \rrbracket$ such that $\vert \alpha \vert = \min \vert \Delta(t, \beta) \vert$.

We remind that $ A = D \setminus (\mathcal{M}_{\textrm{fin}}(D) \times D) $. The equivalence relation $ \sim $ has been defined in Definition~\ref{definition : equivalence relation} and the notion of substitution has been defined in Definition~\ref{definition : substitution}. We recall that we denote by $\mathcal{S}$ the set of substitutions.

\begin{fait}\label{fact:ground deduction}
Let $ v $ be a normal $\lambda$-term and let $ \Pi $ be a derivation of 
$$x_1 : b_1, \ldots, x_m : b_m \vdash_R v : \beta .$$ 
There exist $ a_1, \ldots, a_m, \alpha $ and a derivation $ \Pi' $ of $ x_1 : a_1, \ldots, x_m : a_m \vdash_R v : \alpha $ such that $ \Pi' \sim \Pi $ and $ \vert \Pi' \vert + m = \vert a_1 \ldots a_m \alpha \vert $. If, moreover, $ A $ is infinite, then we can choose $ \Pi' $ in such a way that there exists a substitution $ \sigma $ such that $ \overline{\sigma}(a_1) = b_1, \ldots, $ $ \overline{\sigma}(a_m) = b_m $ and $ \sigma(\alpha) = \beta $.
\end{fait}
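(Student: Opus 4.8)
The plan is to prove, by induction on the normal $\lambda$-term $v$, the following statement, of which the Fact is the instance at hand (the judgement $x_1:b_1,\ldots,x_m:b_m\vdash_R v:\beta$ forces $FV(v)\subseteq\{x_1,\ldots,x_m\}$): for every derivation $\Pi$ of $x_1:b_1,\ldots,x_m:b_m\vdash_R v:\beta$ there are $a_1,\ldots,a_m,\alpha$ and a derivation $\Pi'$ of $x_1:a_1,\ldots,x_m:a_m\vdash_R v:\alpha$ with $\Pi'\sim\Pi$, $\vert\Pi'\vert+m=\vert a_1\ldots a_m\alpha\vert$, and --- when $A$ is infinite --- a substitution $\sigma\in\mathcal S$ with $\overline\sigma(a_j)=b_j$ for all $j$ and $\sigma(\alpha)=\beta$. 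The guiding idea is that $\Pi'$ is the ``most generic'' derivation of the same tree shape as $\Pi$: one puts a \emph{fresh} atom at each axiom leaf, keeping the pools of fresh atoms used in distinct subderivations pairwise disjoint, and $\sigma$ is then the substitution that collapses these generic atoms back onto the types occurring in $\Pi$. This is the only place where the infinitude of $A$ is used: one needs enough fresh atoms to keep the pools disjoint. Two remarks lighten the bookkeeping. First, $\sim$-equivalent derivations have the same size (immediate induction on Definition~\ref{definition : equivalence relation}), so $\vert\Pi'\vert=\vert\Pi\vert$ automatically and it suffices to check the size equality for one representative. Second, the renaming steps below are instances of Proposition~\ref{proposition : substitution and equivalence class}, which already guarantees that the renamed derivation stays $\sim$-equivalent.

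The base case is $v=x_{j_0}$: $\Pi$ is an axiom, and one takes $\Pi'$ to be the axiom $x_{j_0}:[\alpha]\vdash_R x_{j_0}:\alpha$ with $\alpha$ a fresh atom and $\sigma(\alpha)=\beta$; here $\vert\Pi'\vert=1$ and $\vert a_1\ldots a_m\alpha\vert=\vert\alpha\vert+m=m+1$. For $v=\lambda y.v_0$ the last rule of $\Pi$ abstracts a derivation $\Pi_0$ of $x_1:b_1,\ldots,x_m:b_m,y:c\vdash_R v_0:\gamma$; applying the induction hypothesis to $\Pi_0$ (with $m+1$ variables), abstracting the resulting $\Pi_0'$ to form $\Pi'$, setting $\alpha=(a_{m+1},\alpha_0)$ with $(a_{m+1},\alpha_0)$ read off $\Pi_0'$, and relaying the substitution does the job: $a_1\ldots a_m\alpha=a_1\ldots a_{m+1}\alpha_0$ and $\vert\Pi'\vert=\vert\Pi_0'\vert+1$ while the number of declared variables grows from $m$ to $m+1$, so the two size conditions match.

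The substantive case is $v=(z)u_1\ldots u_p$ with $z\in\mathcal V$ and $u_1,\ldots,u_p$ normal. Peeling off the $p$ application rules, $\Pi$ decomposes into a head axiom for $z$ together with, for each $i\le p$ and each of the $n_i$ copies of $u_i$, a subderivation $\Pi_{i,l}$ of $\Gamma_{i,l}\vdash_R u_i:\beta_{i,l}$; moreover $z=x_{j_0}$ for some $j_0$, the head axiom types $z$ with $[\,c_1\ldots c_p\beta\,]$ where $c_i=[\beta_{i,1},\ldots,\beta_{i,n_i}]$, and $\vert\Pi\vert=1+p+\sum_{i,l}\vert\Pi_{i,l}\vert$. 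Applying the induction hypothesis to each $\Pi_{i,l}$ gives $\Pi'_{i,l}\sim\Pi_{i,l}$ over a context $x_1:a^{i,l}_1,\ldots,x_m:a^{i,l}_m$ with conclusion $\delta_{i,l}$, together with a substitution $\sigma_{i,l}$; by Proposition~\ref{proposition : substitution and equivalence class} one may further rename atoms so that the $\Pi'_{i,l}$ use pairwise disjoint atom pools, all avoiding one more fresh atom $\alpha$. One then reassembles: the new head axiom types $z$ with $[\,c_1'\ldots c_p'\alpha\,]$, where $c_i'=[\delta_{i,1},\ldots,\delta_{i,n_i}]$, and the $p$ application rules are rebuilt with the $\Pi'_{i,l}$ as argument premisses. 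Because $z$'s type now literally contains the conclusion types of the argument subderivations, the matching constraints of the application rule hold by construction, so $\Pi'$ is a genuine derivation of $x_1:a_1,\ldots,x_m:a_m\vdash_R v:\alpha$ with $a_{j_0}=[\,c_1'\ldots c_p'\alpha\,]+\sum_{i,l}a^{i,l}_{j_0}$ and $a_j=\sum_{i,l}a^{i,l}_j$ for $j\ne j_0$; unfolding the binary definition of $\sim$ through the $p$ applications gives $\Pi'\sim\Pi$; and the substitution $\sigma$, defined as the common extension of the (renamed) $\sigma_{i,l}$ together with $\alpha\mapsto\beta$, is well defined precisely because the pools are disjoint and $\alpha$ is fresh, and one checks directly that $\overline\sigma(a_j)=b_j$ and $\sigma(\alpha)=\beta$.

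It remains to verify $\vert\Pi'\vert+m=\vert a_1\ldots a_m\alpha\vert$ in this case. From the definitions one has $\vert a_1\ldots a_m\alpha\vert=\vert\alpha\vert+m+\sum_{j=1}^m\textsf{aux}(a_j)$ and $\textsf{aux}(a_1\ldots a_p\alpha)=\textsf{aux}(\alpha)+p+\sum_{i=1}^p\vert a_i\vert$, while $\vert\cdot\vert$ and $\textsf{aux}$ are additive in $+$. Using $\vert\Pi'\vert=1+p+\sum_{i,l}\vert\Pi'_{i,l}\vert$ and the inductive equalities $\vert\Pi'_{i,l}\vert+m=\vert a^{i,l}_1\ldots a^{i,l}_m\delta_{i,l}\vert$, i.e.\ $\vert\Pi'_{i,l}\vert=\vert\delta_{i,l}\vert+\sum_j\textsf{aux}(a^{i,l}_j)$, a short computation reduces the target equality to $\vert\alpha\vert+\textsf{aux}(\alpha)=1$, which holds because $\alpha$ is an atom ($\vert\alpha\vert=1$, $\textsf{aux}(\alpha)=0$). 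I expect the main obstacle to be the administrative part of this last case --- organizing the fresh-atom pools so that the $\sigma_{i,l}$ genuinely have a common extension restricting to each of them and to $\alpha\mapsto\beta$ --- rather than any conceptual difficulty; once the freshness discipline is fixed, the $\sim$-equivalence and the size identity are both routine inductions.
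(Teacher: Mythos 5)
Your proof is correct and takes essentially the same route as the paper, whose own proof of this Fact is just ``by induction on $v$'': your three-case structural induction on normal forms, placing a fresh atom at each axiom leaf, keeping the atom pools of distinct subderivations disjoint, and merging the resulting substitutions, is precisely the ``ground deduction'' construction of Coppo--Dezani--Venneri that the paper alludes to immediately after the statement. The size bookkeeping (reducing the application case to $\vert\alpha\vert+\textsf{aux}(\alpha)=1$ for the fresh atom $\alpha$) and the appeal to Proposition~\ref{proposition : substitution and equivalence class} for the renamings are both sound.
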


\begin{proof}
By induction on $ v $.
\end{proof}

In the case where $ A $ is infinite, the derivation $ \Pi' $ of the lemma is what \cite{CDV} calls a \emph{ground deduction for $ v $}.

\begin{definition}
For every $X \in \mathcal{P}(D) \cup \mathcal{P}(\mathcal{M}_\textrm{fin}(D))$, we set $\vert X \vert = \{ \vert \alpha \vert \: / \: \alpha \in X \}$.
\end{definition}

\begin{lemme}\label{lemma:size(type) and size(derivation)}
Assume $A$ is infinite. Let $ t $ be a closed normal $\lambda$-term, let $ \beta \in D $ and let $ \Pi \in \Delta(t, \beta)$. 
Then we have
$$ \vert \Pi \vert = \min \vert \{ \alpha \in D \: / \: (\exists \Pi' \in \Delta(t, \alpha)) (\exists \sigma \in \mathcal{S}) (\Pi' \sim \Pi \textrm{ and } \sigma(\alpha) = \beta) \} \vert . $$
\end{lemme}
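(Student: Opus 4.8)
The plan is to prove the two inequalities separately, relying on Fact~\ref{fact:ground deduction} for the bound $\leq$ and on Lemma~\ref{lemme : taille de Pi <= taille de alpha} for the bound $\geq$. A preliminary observation I would use throughout is that the equivalence relation $\sim$ preserves the size of derivations: if $\Pi_1 \sim \Pi_2$, then $\vert \Pi_1 \vert = \vert \Pi_2 \vert$. This is an immediate induction on the definition of $\sim$ (Definition~\ref{definition : equivalence relation}), since in each clause the two related derivations have the same outermost rule and their immediate subderivations are pairwise related by $\sim$, up to a permutation of the application premisses that leaves the sum of their sizes unchanged.

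For the inequality ``$\vert \Pi \vert$ is $\geq$ the minimum'', I would apply Fact~\ref{fact:ground deduction} with $m = 0$, which is legitimate since $t$ is closed; as $t$ is also normal, this yields $\alpha \in D$ and a derivation $\Pi'$ of $\vdash_R t : \alpha$ with $\Pi' \sim \Pi$ and $\vert \Pi' \vert = \vert \alpha \vert$ (for $m=0$ the identity $\vert \Pi' \vert + m = \vert a_1 \ldots a_m \alpha \vert$ reads $\vert \Pi' \vert = \vert \alpha \vert$). Since $A$ is assumed infinite, $\Pi'$ may moreover be chosen so that there is a substitution $\sigma \in \mathcal{S}$ with $\sigma(\alpha) = \beta$. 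Thus $\alpha$ belongs to the set appearing on the right-hand side, so that set is non-empty and its minimum is well defined; and by the preliminary observation $\vert \alpha \vert = \vert \Pi' \vert = \vert \Pi \vert$, so the minimum is at most $\vert \Pi \vert$.

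For the reverse inequality, I would take an arbitrary element $\alpha$ of that set, witnessed by some $\Pi' \in \Delta(t, \alpha)$ with $\Pi' \sim \Pi$ (the accompanying substitution $\sigma$ plays no role here). Since $t$ is a normal $\lambda$-term typed with no free-variable hypotheses, Lemma~\ref{lemme : taille de Pi <= taille de alpha} with $m = 0$ gives $\vert \Pi' \vert \leq \vert \alpha \vert$; combined with $\vert \Pi' \vert = \vert \Pi \vert$ this yields $\vert \Pi \vert \leq \vert \alpha \vert$. As this holds for every element of the set, $\vert \Pi \vert$ is a lower bound for it, hence it is at most its minimum. Putting the two inequalities together gives the claimed equality.

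The only delicate points are bookkeeping ones: checking that $\sim$ is size-invariant, and noticing that the role of the non-finiteness of $A$ is confined to producing the witnessing substitution in Fact~\ref{fact:ground deduction} (the lower bound uses neither the infiniteness of $A$ nor the substitution). I do not anticipate any genuine obstacle here, since the substantive content has already been isolated in Fact~\ref{fact:ground deduction} and Lemma~\ref{lemme : taille de Pi <= taille de alpha}.
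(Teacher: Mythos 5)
Your proof is correct and follows exactly the route the paper intends: its proof of this lemma is just ``Apply Lemma~\ref{lemme : taille de Pi <= taille de alpha} and Fact~\ref{fact:ground deduction}'', and you have filled in precisely those two inequalities, together with the (correct, and left implicit by the paper) observation that $\sim$ preserves the size of derivations. No gaps.
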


\begin{proof}
Apply Lemma \ref{lemme : taille de Pi <= taille de alpha} and Fact \ref{fact:ground deduction}. 
\end{proof}

\begin{prop}\label{proposition : cas A infini}
Assume $ A $ is infinite. Let $ t $ be a closed normal $\lambda$-term and let $ \beta \in \llbracket t \rrbracket $. We have 
$\min \vert \Delta(t, \beta) \vert = \min \vert \{ \alpha \in \llbracket t \rrbracket \: / \: (\exists \sigma \in \mathcal{S}) \sigma(\alpha) = \beta \} \vert .$
\end{prop}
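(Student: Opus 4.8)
The plan is to prove the two inequalities separately, using Theorem~\ref{Theo : semantique = types} to pass between the semantics $\llbracket t \rrbracket$ and typability in System~$R$, Proposition~\ref{proposition : substitution and equivalence class} to transport a derivation along a substitution within its $\sim$-equivalence class, and Lemma~\ref{lemma:size(type) and size(derivation)} as the main size-counting tool. Note first that both minima are taken over non-empty sets of natural numbers: $\Delta(t, \beta) \neq \emptyset$ because $\beta \in \llbracket t \rrbracket$ (Theorem~\ref{Theo : semantique = types}), and $\beta$ itself belongs to $\{ \alpha \in \llbracket t \rrbracket \: / \: (\exists \sigma \in \mathcal{S}) \sigma(\alpha) = \beta \}$ via the identity substitution; so both minima are well defined.

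For $ \min \vert \Delta(t, \beta) \vert \geq \min \vert \{ \alpha \in \llbracket t \rrbracket \: / \: (\exists \sigma \in \mathcal{S}) \sigma(\alpha) = \beta \} \vert $, I would take a derivation $ \Pi \in \Delta(t, \beta) $ realizing $ \vert \Pi \vert = \min \vert \Delta(t, \beta) \vert $ and apply Lemma~\ref{lemma:size(type) and size(derivation)} to it: the minimum there is attained, so there exist $ \alpha \in D $, $ \Pi' \in \Delta(t, \alpha) $ and $ \sigma \in \mathcal{S} $ with $ \Pi' \sim \Pi $, $ \sigma(\alpha) = \beta $ and $ \vert \alpha \vert = \vert \Pi \vert $. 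Since $ \Delta(t, \alpha) \neq \emptyset $, Theorem~\ref{Theo : semantique = types} gives $ \alpha \in \llbracket t \rrbracket $; hence $ \alpha $ lies in the right-hand set and $ \min \vert \{ \ldots \} \vert \leq \vert \alpha \vert = \vert \Pi \vert = \min \vert \Delta(t, \beta) \vert $.

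For the reverse inequality, I would fix any $ \alpha \in \llbracket t \rrbracket $ with $ \sigma(\alpha) = \beta $ for some $ \sigma \in \mathcal{S} $. By Theorem~\ref{Theo : semantique = types} there is $ \Pi_\alpha \in \Delta(t, \alpha) $, and by Proposition~\ref{proposition : substitution and equivalence class} there is $ \Pi_\beta \in \Delta(t, \beta) $ with $ \Pi_\beta \sim \Pi_\alpha $. Applying Lemma~\ref{lemma:size(type) and size(derivation)} to $ \Pi_\beta $, the type $ \alpha $ belongs to the set whose minimal size equals $ \vert \Pi_\beta \vert $ --- the witnesses being $ \Pi_\alpha \in \Delta(t, \alpha) $, which satisfies $ \Pi_\alpha \sim \Pi_\beta $ since $ \sim $ is symmetric, together with the substitution $ \sigma $ --- whence $ \vert \Pi_\beta \vert \leq \vert \alpha \vert $. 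Thus $ \min \vert \Delta(t, \beta) \vert \leq \vert \Pi_\beta \vert \leq \vert \alpha \vert $, and minimizing over all admissible $ \alpha $ yields $ \min \vert \Delta(t, \beta) \vert \leq \min \vert \{ \alpha \in \llbracket t \rrbracket \: / \: (\exists \sigma \in \mathcal{S}) \sigma(\alpha) = \beta \} \vert $. Combining the two inequalities gives the desired equality. I do not expect a genuine difficulty here: the argument is essentially a careful chaining of the three cited results, the only delicate point being that the hypothesis ``$ A $ infinite'' is needed, but it is used only through Fact~\ref{fact:ground deduction} inside Lemma~\ref{lemma:size(type) and size(derivation)}, which is the engine of both inequalities.
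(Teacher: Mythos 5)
Your proof is correct and follows essentially the same route as the paper: both directions are obtained by chaining Theorem~\ref{Theo : semantique = types}, Proposition~\ref{proposition : substitution and equivalence class} and Lemma~\ref{lemma:size(type) and size(derivation)} (the paper invokes Lemma~\ref{lemme : taille de Pi <= taille de alpha} directly for the inequality $\min \vert \Delta(t,\beta)\vert \leq \vert\alpha\vert$, whereas you route that step through Lemma~\ref{lemma:size(type) and size(derivation)} a second time, which amounts to the same thing). Your explicit remarks that both minima are over non-empty sets and that $\sim$-equivalent derivations have equal size are sound and merely make precise what the paper leaves implicit.
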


\begin{proof}
Set $m = \min \vert \Delta(t, \beta) \vert $ and $n = \min \vert \{ \alpha \in \llbracket t \rrbracket \: / \: (\exists \sigma \in \mathcal{S}) \sigma(\alpha) = \beta \} \vert  . $ 

First, we prove that $m \leq n$. Let $\alpha \in \llbracket t \rrbracket$ such that we have $(\exists \sigma \in \mathcal{S}) \sigma(\alpha) = \beta$. 
By Theorem~\ref{Theo : semantique = types}, $\Delta(t, \alpha) \not= \emptyset$: let $\Pi' \in \Delta(t, \alpha)$. 
By Proposition \ref{proposition : substitution and equivalence class}, there exists $\Pi \in \Delta(t, \beta)$ such that $\Pi \sim \Pi'$. By Lemma \ref{lemme : taille de Pi <= taille de alpha}, we have $\vert \Pi' \vert \leq \vert \alpha \vert$. Hence we obtain 
$ m \leq \vert \Pi \vert = \vert \Pi' \vert \leq \vert \alpha \vert . $

Now, we prove the inequality $n \leq m$. Let $\Pi \in \Delta(t, \beta)$.
\begin{eqnarray*}
n & = & \min \vert \{ \alpha \in D \: / \: (\exists \Pi' \in \Delta(t, \alpha)) (\exists \sigma \in \mathcal{S}) \sigma(\alpha) = \beta \} \vert \\
& & \textrm{   (by Theorem~\ref{Theo : semantique = types})} \\
& \leq & \min \vert \{ \alpha \in D \: / \: (\exists \Pi' \in \Delta(t, \alpha)) \: (\exists \sigma \in \mathcal{S}) \: (\Pi' \sim \Pi \textrm{ and } \sigma(\alpha) = \beta) \} \vert \\
& = & \vert \Pi \vert \\
& & \textrm{   (by Lemma \ref{lemma:size(type) and size(derivation)}).}
\end{eqnarray*}
\end{proof}

\begin{corollaire}\label{cor : cas A infini}
Assume $ A $ is infinite. Let $ t $ be a closed normal $\lambda$-term and let \mbox{$ b \in \mathcal{M}_\textrm{fin}(\llbracket t \rrbracket) $.} We have 
$\min \vert \Delta(t, b) \vert = \min \vert \{ a \in \mathcal{M}_{\textrm{fin}}(\llbracket t \rrbracket) \: / \: (\exists \sigma \in \mathcal{S}) \:  \overline{\sigma}(a) = b \} \vert .$
\end{corollaire}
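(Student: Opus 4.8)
The plan is to reduce Corollary~\ref{cor : cas A infini} to Proposition~\ref{proposition : cas A infini} applied componentwise. Write $b = [\beta_1, \ldots, \beta_n]$ with $\beta_1, \ldots, \beta_n \in \llbracket t \rrbracket$ (so $n$ is the cardinality of $b$). Since $t$ is closed, Theorem~\ref{Theo : semantique = types} gives that each $\Delta(t, \beta_i)$ is non-empty. The first step is to establish the decomposition $\min \vert \Delta(t, b) \vert = \sum_{i=1}^n \min \vert \Delta(t, \beta_i) \vert$: unfolding the definition of $\Delta(t, b)$ and of $\vert \cdot \vert$ on tuples, an element of $\Delta(t, b)$ is a tuple $(\Pi_1, \ldots, \Pi_n)$ with $\Pi_i \in \Delta(t, \beta_{\tau(i)})$ for some enumeration $\tau$ of $b$, and its size is $\sum_i \vert \Pi_i \vert$; choosing each $\Pi_i$ of minimal size (the permutation $\tau$ being irrelevant to the value of the sum) yields the claimed equality.

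The second step is the analogous decomposition on the right-hand side:
$$ \min \vert \{ a \in \mathcal{M}_{\textrm{fin}}(\llbracket t \rrbracket) \: / \: (\exists \sigma \in \mathcal{S}) \: \overline{\sigma}(a) = b \} \vert = \sum_{i=1}^n \min \vert \{ \alpha \in \llbracket t \rrbracket \: / \: (\exists \sigma \in \mathcal{S}) \: \sigma(\alpha) = \beta_i \} \vert . $$
The inequality $\geq$ is immediate: any $a = [\alpha_1, \ldots, \alpha_n]$ with $\overline{\sigma}(a) = b$ has all its components in $\llbracket t \rrbracket$ and satisfies $\sigma(\alpha_i) = \beta_{\tau(i)}$, so each $\vert \alpha_i \vert$ is bounded below by the corresponding minimum. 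For $\leq$, I would pick, for each $i$, a type $\alpha_i \in \llbracket t \rrbracket$ of minimal size together with a substitution $\sigma_i$ such that $\sigma_i(\alpha_i) = \beta_i$. Then, using that $A$ is infinite and that $\llbracket t \rrbracket$ is closed under substitution (Theorem~\ref{Theo : semantique = types} together with Proposition~\ref{proposition : substitution and equivalence class}), I would replace each $\alpha_i$ by a renaming $\alpha_i'$ of the same size, still lying in $\llbracket t \rrbracket$, chosen so that the atoms occurring in $\alpha_1', \ldots, \alpha_n'$ are pairwise disjoint; the transported substitutions $\sigma_i'$ still satisfy $\sigma_i'(\alpha_i') = \beta_i$. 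Gluing the $\sigma_i'$ along their disjoint atom supports produces a single $\sigma \in \mathcal{S}$ with $\sigma(\alpha_i') = \beta_i$ for every $i$, so that $a = [\alpha_1', \ldots, \alpha_n']$ satisfies $a \in \mathcal{M}_{\textrm{fin}}(\llbracket t \rrbracket)$, $\overline{\sigma}(a) = b$ and $\vert a \vert = \sum_i \vert \alpha_i' \vert = \sum_i \vert \alpha_i \vert$.

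Finally, Proposition~\ref{proposition : cas A infini} applied to each $\beta_i$ gives $\min \vert \Delta(t, \beta_i) \vert = \min \vert \{ \alpha \in \llbracket t \rrbracket \: / \: (\exists \sigma \in \mathcal{S}) \: \sigma(\alpha) = \beta_i \} \vert$; summing over $i$ and combining with the two decompositions above closes the argument. The main obstacle is the $\leq$ half of the second step, namely arranging that the finitely many chosen minimal types $\alpha_i$ use pairwise disjoint atoms without altering their sizes — this is precisely where the non-finiteness of $A$ enters, as one needs enough fresh atoms to rename into — and then checking that a single global substitution can be assembled from the local ones $\sigma_i'$, which works because each $\sigma$-value we care about is determined by the atoms of a single $\alpha_i'$.
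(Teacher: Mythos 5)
Your proof is correct and follows the route the paper clearly intends (the corollary is stated without proof as an immediate multiset-wise extension of Proposition~\ref{proposition : cas A infini}): decompose both minima componentwise over an enumeration of $b$, using additivity of $\vert\cdot\vert$ on tuples and multisets, and handle the gluing of the local substitutions by renaming the minimal types apart — which is indeed where the infiniteness of $A$ is used, consistently with the paper's own remark in the proof of Theorem~\ref{theorem semantics} that ``the atoms in $a$ can be assumed distinct''. Nothing to add.
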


The point of Theorem \ref{theorem semantics} is that the number of steps of the computation of the (principal head) normal form of $(v)u$, where $v$ and $u$ are two closed normal $\lambda$-terms, can be determined from $\llbracket v \rrbracket$ and $\llbracket u \rrbracket$.

\begin{definition}
For any $X, Y \subseteq D$, we denote by $\mathcal{U}(X, Y)$ the set
$$\{ ((a, \alpha), a') \in (X \setminus A) \times \mathcal{M}_\textrm{fin}(Y) \: / \: (\exists \sigma \in \mathcal{S}) \: \overline{\sigma}(a) = \overline{\sigma}(a') \} $$
and by $\mathcal{U}^\textsf{ex}(X, Y)$ the set
$$\left\lbrace ((a, \alpha), a') \in (X \setminus A) \times \mathcal{M}_\textrm{fin}(Y) \: / \begin{array}{l} (\exists \sigma \in \mathcal{S}) (\overline{\sigma}(a) = \overline{\sigma}(a') \textrm{ and } \sigma(\alpha) \in D^\textsf{ex}) \end{array} \right\rbrace  .$$
\end{definition}

\begin{theorem}\label{theorem semantics}
Assume $ A $ is infinite. For any two closed normal $\lambda$-terms $u$ and $v$, we have
\begin{enumerate}
\item 
$l_h(((v)u, \emptyset) . \epsilon) = \inf \{ \vert (a, \alpha) \vert + \vert a' \vert + 1 \: / \: ((a, \alpha), a') \in \mathcal{U}(\llbracket v \rrbracket, \llbracket u \rrbracket) \} ;$
\: \: \: 
\item 
$l_\beta(((v)u, \emptyset) . \epsilon) = \inf \{ \vert (a, \alpha) \vert + \vert a' \vert + 1 \: / \: ((a, \alpha), a') \in \mathcal{U}^\textsf{ex}(\llbracket v \rrbracket, \llbracket u \rrbracket) \} .$
\end{enumerate}
\end{theorem}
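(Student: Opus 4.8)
The plan is to deduce both equalities from the characterisations of execution time by least derivation sizes already proved, namely Theorem~\ref{Theorem head} for $l_h$ and Theorem~\ref{theorem:normal_bound} for $l_\beta$, and then to rewrite ``least derivation size'' as the announced semantic quantities using Proposition~\ref{proposition : cas A infini} and Corollary~\ref{cor : cas A infini}. I would treat (1) in full; (2) is strictly parallel, with $\Delta$ replaced by $\Delta^\textsf{ex}$, $\mathcal{U}$ by $\mathcal{U}^\textsf{ex}$, Theorem~\ref{Theorem head} by Theorem~\ref{theorem:normal_bound}, and with the extra clause ``$\sigma(\alpha)\in D^\textsf{ex}$'' carried along.

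First I would analyse the shape of $\Delta((v)u)$. Since $(v)u$ is closed, every element of $\Delta((v)u)$ is a derivation of $\vdash_R (v)u:\gamma$ for some $\gamma\in D$, and the last rule of such a derivation can only be the application rule; hence it is exactly the datum of a multiset $b=[\beta_1,\dots,\beta_n]$, a derivation $\Pi_0\in\Delta(v,(b,\gamma))$ and a tuple $(\Pi_1,\dots,\Pi_n)\in\Delta(u,b)$, with $\vert\Pi\vert=\vert\Pi_0\vert+\sum_{i=1}^n\vert\Pi_i\vert+1$. By Theorem~\ref{Theo : semantique = types}, $\Delta(v,(b,\gamma))\neq\emptyset$ iff $(b,\gamma)\in\llbracket v\rrbracket$ and $\Delta(u,b)\neq\emptyset$ iff $b\in\mathcal{M}_\textrm{fin}(\llbracket u\rrbracket)$, and the two subderivations may be chosen independently; so Theorem~\ref{Theorem head} together with this analysis gives
$$ l_h(((v)u,\emptyset).\epsilon)=\inf\big\{\,\min\vert\Delta(v,(b,\gamma))\vert+\min\vert\Delta(u,b)\vert+1 \:/\: (b,\gamma)\in\llbracket v\rrbracket,\ b\in\mathcal{M}_\textrm{fin}(\llbracket u\rrbracket)\,\big\}. $$
Then I would substitute the semantic descriptions of these minima: $v$ is closed and normal, hence of the form $\lambda x.M$, so all its types are pairs and $\llbracket v\rrbracket\cap A=\emptyset$; Proposition~\ref{proposition : cas A infini} then yields $\min\vert\Delta(v,(b,\gamma))\vert=\min\vert\{(a,\alpha)\in\llbracket v\rrbracket\setminus A \:/\: (\exists\sigma_1\in\mathcal{S})\ \overline{\sigma_1}(a)=b\textrm{ and }\sigma_1(\alpha)=\gamma\}\vert$, and Corollary~\ref{cor : cas A infini} yields $\min\vert\Delta(u,b)\vert=\min\vert\{a'\in\mathcal{M}_\textrm{fin}(\llbracket u\rrbracket) \:/\: (\exists\sigma_2\in\mathcal{S})\ \overline{\sigma_2}(a')=b\}\vert$. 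Since $\gamma$ is then unconstrained (it is just $\sigma_1(\alpha)$) and $b$ is forced to equal $\overline{\sigma_1}(a)=\overline{\sigma_2}(a')$, this rewrites $l_h(((v)u,\emptyset).\epsilon)$ as the infimum of $\vert(a,\alpha)\vert+\vert a'\vert+1$ over all $(a,\alpha)\in\llbracket v\rrbracket\setminus A$ and $a'\in\mathcal{M}_\textrm{fin}(\llbracket u\rrbracket)$ for which there exist \emph{two} substitutions $\sigma_1,\sigma_2$ with $\overline{\sigma_1}(a)=\overline{\sigma_2}(a')$.

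The right-hand side of the theorem is the same infimum, but restricted to the pairs for which a \emph{single} $\sigma$ satisfies $\overline{\sigma}(a)=\overline{\sigma}(a')$, i.e. over $\mathcal{U}(\llbracket v\rrbracket,\llbracket u\rrbracket)$. One inequality is immediate (take $\sigma_1=\sigma_2=\sigma$), so it remains to turn ``two substitutions'' into ``one substitution'' at no cost in size, and this is the only place where the infiniteness of $A$ genuinely intervenes here. Given $(a,\alpha),a',\sigma_1,\sigma_2$ with $\overline{\sigma_1}(a)=\overline{\sigma_2}(a')=:b$, I would use that $A$ is infinite to pick a substitution $\rho\in\mathcal{S}$ injective on the finite set of atoms occurring in $(a,\alpha)$ and mapping them to atoms not occurring in $a'$. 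Then $\rho((a,\alpha))\in\llbracket v\rrbracket\setminus A$ (by Proposition~\ref{proposition : substitution and equivalence class} together with Theorem~\ref{Theo : semantique = types}, $\llbracket v\rrbracket$ is closed under substitution, and $\rho((a,\alpha))$ is still a pair), $\vert\rho((a,\alpha))\vert=\vert(a,\alpha)\vert$ since $\rho$ merely relabels atoms, and the function equal to $\sigma_1\circ\rho^{-1}$ on the image of $\rho$, to $\sigma_2$ on the atoms of $a'$, and arbitrary elsewhere extends to a substitution $\sigma$ with $\overline{\sigma}(\rho(a))=\overline{\sigma_1}(a)=b=\overline{\sigma_2}(a')=\overline{\sigma}(a')$. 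Hence $(\rho((a,\alpha)),a')\in\mathcal{U}(\llbracket v\rrbracket,\llbracket u\rrbracket)$ realises the value $\vert(a,\alpha)\vert+\vert a'\vert+1$, which supplies the missing inequality and proves~(1).

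For~(2) the same argument applies once one observes that, $(v)u$ being closed, $\Delta^\textsf{ex}((v)u)$ is exactly the set of derivations of $(v)u$ whose conclusion type lies in $D^\textsf{ex}$ (the context being empty, hence trivially in $\Phi^\textsf{ex}$), and that $\mathcal{U}^\textsf{ex}$ differs from $\mathcal{U}$ only by the clause $\sigma(\alpha)\in D^\textsf{ex}$. One then tracks this constraint through the computation: the conclusion type $\gamma$ of a derivation in $\Delta^\textsf{ex}((v)u)$ must lie in $D^\textsf{ex}$; after Proposition~\ref{proposition : cas A infini} this becomes $\sigma_1(\alpha)\in D^\textsf{ex}$; and after the renaming step $\sigma(\rho(\alpha))=\sigma_1(\alpha)\in D^\textsf{ex}$, which is precisely the defining condition of $\mathcal{U}^\textsf{ex}(\llbracket v\rrbracket,\llbracket u\rrbracket)$. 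I expect the ``merge two substitutions into one'' step, with its reliance on $A$ being infinite, to be the only substantive point; everything else is bookkeeping layered on top of Theorems~\ref{Theorem head}, \ref{theorem:normal_bound} and \ref{Theo : semantique = types}, Proposition~\ref{proposition : cas A infini} and Corollary~\ref{cor : cas A infini}.
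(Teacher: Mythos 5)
Your proposal is correct and takes essentially the same route as the paper's proof: reduce to minimal derivation sizes via Theorems~\ref{Theorem head} and~\ref{theorem:normal_bound}, decompose derivations of $(v)u$ through the application rule, and convert derivation sizes to type sizes via Proposition~\ref{proposition : cas A infini} and Corollary~\ref{cor : cas A infini}. Your explicit atom-renaming construction merging $\sigma_1$ and $\sigma_2$ into a single substitution is exactly the step the paper compresses into the remark that ``the atoms in $a$ can be assumed distinct of those in $a'$,'' so you have simply supplied the detail the paper omits.
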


\begin{proof}
\begin{enumerate}
\item We distinguish between two cases.
\begin{itemize}
\item If $ \Delta((v)u) = \emptyset$, then Theorem \ref{Theorem head} shows that $l_h(((v)u, \emptyset) . \epsilon) = \infty$ and Theorem~\ref{Theo : semantique = types} and Proposition \ref{proposition : substitution and equivalence class} show that $\mathcal{U}(\llbracket v \rrbracket, \llbracket u \rrbracket) = \emptyset$.
\item Else, we have
\begin{eqnarray*}
& & l_h(((v)u, \emptyset) . \epsilon) \\
& = & \min \{ \vert \Pi \vert + \vert \Pi' \vert + 1 \: / \: (\Pi, \Pi') \in \bigcup_{(b, \beta) \in \mathcal{M}_\textrm{fin}(D) \times D} (\Delta(v, (b, \beta)) \times \Delta(u, b)) \} \\
& & \textrm{   (by Theorem \ref{Theorem head})} \allowdisplaybreaks \\
& = & \min \{ \vert (a, \alpha) \vert + \vert a' \vert + 1 \: / \: ((a, \alpha), a') \in \mathcal{U}(\llbracket v \rrbracket, \llbracket u \rrbracket) \} \\
& & \textrm{   (by applying Proposition \ref{proposition : cas A infini} and Corollary \ref{cor : cas A infini}, and by noticing} \\
& & \textrm{    that the atoms in $a$ can be assumed distinct of those in $a'$).}
\end{eqnarray*}
\end{itemize}
\item We distinguish between two cases.
\begin{itemize}
\item If $ \Delta^\textsf{ex}((v)u) = \emptyset$, then Theorem \ref{theorem:normal_bound} shows that 
$l_\beta(((v)u, \emptyset) . \epsilon) = \infty$ and Theorem \ref{Theo : semantique = types} and Proposition \ref{proposition : substitution and equivalence class} show that $\mathcal{U}^\textsf{ex}(\llbracket v \rrbracket, \llbracket u \rrbracket) = \emptyset$.
\item Else, we have
\begin{eqnarray*}
& & l_\beta(((v)u, \emptyset) . \epsilon) \\
& = & \min \{ \vert \Pi \vert + \vert \Pi' \vert + 1 \: / \: (\Pi, \Pi') \in \bigcup_{(b, \beta) \in \mathcal{M}_\textrm{fin}(D) \times D^\textsf{ex}} (\Delta(v, (b, \beta)) \times \Delta(u, b)) \} \\
& & \textrm{   (by Theorem \ref{theorem:normal_bound})} \allowdisplaybreaks \\
& = & \min \{ \vert (a, \alpha) \vert + \vert a' \vert + 1 \: / \: ((a, \alpha), a') \in \mathcal{U}^\textsf{ex}(\llbracket v \rrbracket, \llbracket u \rrbracket) \} \\
& & \textrm{   (by applying Proposition \ref{proposition : cas A infini} and Corollary \ref{cor : cas A infini}, and by noticing} \\
& & \textrm{    that the atoms in $a$ can be assumed distinct of those in $a'$).}
\end{eqnarray*}
\end{itemize}
\end{enumerate}
\end{proof}

\begin{ex}
Set $ v = \lambda x. (x)x $ and $ u = \lambda y. y $. Let $ \gamma_0, \gamma_1 \in A $. Set 
\begin{itemize}
\item $\alpha = \gamma_0$;
\item $a = [\gamma_0, ([\gamma_0], \gamma_0)]$;
\item $a' = [([\gamma_1], \gamma_1), ([\gamma_2], \gamma_2)]$.
\end{itemize}
Let $ \sigma $ be a substitution such that $ \sigma(\gamma_0) = ([\gamma_0], \gamma_0) $, $ \sigma(\gamma_1) = \gamma_0 $ and $ \sigma(\gamma_2) = \alpha $. We have 
\begin{itemize}
\item $ (a, \alpha) \in \llbracket v \rrbracket $;
\item $ \textsf{Supp}(a') \subseteq \llbracket u \rrbracket$;
\item $ \overline{\sigma}(a) = \overline{\sigma}(a')$;
\item $ \vert (a, \alpha) \vert = 4 $ and $ \vert a' \vert = 4 $.
\end{itemize}
By Example \ref{example : (I)I}, we know that we have $ l_h(((v)u, \emptyset) . \epsilon) = 9 $. And we have $ \vert (a, \alpha) \vert + \vert a' \vert + 1 = 9 $.
\end{ex}

The following example shows that the assumption that $A$ is infinite is necessary.

\begin{ex}
Let $n$ be a nonzero integer. Set $I = \lambda y. y$ and $v = \lambda x. (x) \underbrace{I \ldots I}_{n \textrm{ times}}$. We have
$$ \mathcal{U}(\llbracket v \rrbracket, \llbracket I \rrbracket) \subseteq \{ (([[([\alpha_1], \alpha_1)] \ldots [([\alpha_n], \alpha_n)] \alpha], \alpha), [([\alpha_0], \alpha_0)]) \: / \: \alpha_0, \ldots, \alpha_n, \alpha \in D \} . $$
Let $\gamma_0, \ldots, \gamma_n, \delta \in A$ distinct. We have
$$ (([[([\gamma_1], \gamma_1)] \ldots [([\gamma_n], \gamma_n)] \delta], \delta), [([\gamma_0], \gamma_0)]) \in \mathcal{U}^\textsf{ex}(\llbracket v \rrbracket, \llbracket I \rrbracket) . $$
Hence, for any $\alpha_0, \ldots, \alpha_n, \alpha \in D$, if there exists $i \in \{ 0, \ldots, n \}$ such that $\alpha_i \notin A$, then we have
\begin{eqnarray*}
\vert ([[([\alpha_1], \alpha_1)] \ldots [([\alpha_n], \alpha_n)] \alpha], \alpha) \vert + \vert [([\alpha_0], \alpha_0)] \vert + 1 & > & l_\beta(((v)u, \emptyset) . \epsilon) \\
& = & l_h(((v)u, \emptyset) . \epsilon).  
\end{eqnarray*}
On the other hand, if $\gamma_0, \ldots, \gamma_n \in A$, $\alpha \in D$ and there exist $i, j \in \{ 0, \ldots, n \}$ such that $i \not= j$ and $\gamma_i = \gamma_j$, then
$$ (([[([\gamma_1], \gamma_1)] \ldots [([\gamma_n], \gamma_n)] \alpha], \alpha), [([\gamma_0], \gamma_0)]) \notin \mathcal{U}(\llbracket v \rrbracket, \llbracket I \rrbracket) . $$
All this shows that if $\textsf{Card}(A) = n$, then we do not have
$$l_h(((v)I, \emptyset) . \epsilon) = \inf \{ \vert (a, \alpha) \vert + \vert a' \vert + 1 \: / \: ((a, \alpha), a') \in \mathcal{U}(\llbracket v \rrbracket, \llbracket I \rrbracket) \} ;$$
neither
$$l_\beta(((v)I, \emptyset) . \epsilon) = \inf \{ \vert (a, \alpha) \vert + \vert a' \vert + 1 \: / \: ((a, \alpha), a') \in \mathcal{U}^\textsf{ex}(\llbracket v \rrbracket, \llbracket I \rrbracket) \} .$$
\end{ex}

Note that, as the following example illutrates, the non-idempotency is crucial.

\begin{ex}
For any integer $ n \geq 1 $, set $ \overline{n} = \lambda f. \lambda x. \underbrace{(f) \ldots (f)}_{n \textrm{ times}} x $ and $ I = \lambda y. y $. Let $ \gamma \in A $. Set $ \alpha = ([\gamma], \gamma) $ and $a = [\underbrace{\alpha, \ldots, \alpha}_{n \textrm{ times}}] $. We have $ (a, \alpha) \in \llbracket \overline{n} \rrbracket $ and $ \alpha \in \llbracket I \rrbracket $. We have $ l_h(((\overline{n})I, \emptyset) . \epsilon) = 4(n+1) = 2n+3 + 2n + 1 = \vert (a, \alpha) \vert + \vert a \vert + 1 $ (see Example \ref{ex:size}). 
But with \emph{idempotent} types (as in System $\mathcal{D}$), for any integers $p, q \geq 1$, we would have $\mathcal{U}(\overline{p}, I) = \mathcal{U}(\overline{q}, I)$ (any Church integer $\overline{n}$, for $n \geq 1$ has type $((\gamma \rightarrow \gamma) \rightarrow (\gamma \rightarrow \gamma))$ in System $\mathcal{D}$).
\end{ex}

\paragraph*{Acknowledgements.} 
This work is partially the result of discussions with Thomas Ehrhard: I warmly thank him. I also thank Patrick Baillot, Simona Ronchi della Rocca and Kazushige Terui too for stimulating discussions.

\bibliographystyle{acmstrans}

\end{document}